\newtheorem{theorem}{Theorem}[section]
\newtheorem{lemma}[theorem]{Lemma}
\newtheorem{definition}[theorem]{Definition}
\newtheorem{remark}[theorem]{Remark}
\DeclareMathAlphabet{\mathcal}{OMS}{cmsy}{m}{n}
\newcommand{\executeiffilenewer}[3]{%
\ifnum\pdfstrcmp{\pdffilemoddate{#1}}%
{\pdffilemoddate{#2}}>0%
{\immediate\write18{#3}}\fi%
}
\newcommand{%
\executeiffilenewer{.svg}{.pdf}%
{inkscape -z -D --file=.svg %
--export-pdf=.pdf --export-latex}%
\input{.pdf_tex}%
}[1]{%
\executeiffilenewer{#1.svg}{#1.pdf}%
{inkscape -z -D --file=#1.svg %
--export-pdf=#1.pdf --export-latex}%
\input{#1.pdf_tex}%
}
\newcounter{note}[section]
\newcommand{\DFVS}{\textsc{Directed Feedback Vertex Set}}
\newcommand{\SDFVS}{\textsc{Subset Directed Feedback Vertex Set}}
\newcommand{\ESDFVS}{\textsc{Edge Subset Directed Feedback Vertex Set}}
\newcommand{\dfvs}{\textsc{DFVS}}
\newcommand{\sdfvs}{\textsc{Subset-DFVS}}
\newcommand{\esdfvs}{\textsc{Edge-Subset-DFVS}}
\newcommand{\branch}{\textsc{Branch}}
\newcommand{\torso}{\textup{\texttt{torso}}}
\newcommand{\randset}{\textup{RandomSet}}
\newcommand{\derandset}{\textup{DeterministicSets}}
\newcommand{\X}{{\mathcal{X}}}
\newcommand{\F}{{\mathcal{F}}}
\newcommand{\IS}{{\mathcal{I}_k}}
\begin{document}

\title{Directed Subset Feedback Vertex Set is Fixed-Parameter Tractable\thanks{A preliminary version of this paper appeared in ICALP 2012~\cite{DBLP:conf/icalp/ChitnisCHM12}.}}

\author[1]{Rajesh Chitnis\thanks{Supported in part by NSF CAREER award 1053605, ONR YIP award N000141110662, DARPA/AFRL award FA8650-11-1-7162, a
University of Maryland Research and Scholarship Award (RASA), a Graduate Student International Research Fellowship from the
University of Maryland, ERC Starting Grant PARAMTIGHT (No. 280152) and a Simons Award for Graduate Students in Theoretical Computer Science.}}
\author[2]{Marek Cygan\thanks{Supported in part by ERC Starting Grant NEWNET 279352, NCN grant N206567140 and Foundation for Polish Science.}}
\author[1]{MohammadTaghi Hajiaghayi\thanks{Supported in part by NSF CAREER award 1053605, ONR YIP award N000141110662, DARPA/AFRL award FA8650-11-1-7162, a University of Maryland Research and Scholarship Award (RASA)}}
\author[3]{D{\'a}niel Marx\thanks{Supported by ERC Starting Grant PARAMTIGHT (No. 280152) and OTKA grant NK105645.}}

\affil[1]{Department of Computer Science, University of Maryland at College Park, USA. Email:
\tt{\{rchitnis,hajiagha\}@cs.umd.edu}}

\affil[2]{Institute of Informatics, University of Warsaw, Poland. Email: \tt{cygan@mimuw.edu.pl}}

\affil[3]{Institute for Computer Science and Control, Hungarian Academy of Sciences (MTA SZTAKI), Budapest, Hungary. Email:
\tt{dmarx@cs.bme.hu}}

\renewcommand\Authands{ and }

%\author{
%Rajesh Chitnis
%  \thanks{Department of Computer Science, University of Maryland at
%  College Park, USA. Email: \tt{rchitnis@cs.umd.edu}. Supported in part by NSF CAREER award 1053605, ONR YIP award N000141110662, DARPA/AFRL award FA8650-11-1-7162, a
%  University of Maryland Research and Scholarship Award (RASA) and a Graduate Student International Research Fellowship from the University of Maryland}
%  \and
%Marek Cygan
%  \thanks{IDSIA, University of Lugano, Switzerland. Email: \tt{marek@idsia.ch}. Supported in part by ERC Starting Grant NEWNET 279352, NCN grant N206567140 and Foundation for Polish Science.} \and
%  MohammadTaghi Hajiaghayi\thanks{Department of Computer Science, University of Maryland at
%  College Park, USA and AT\&T Labs--Research. Email: \tt{hajiagha@cs.umd.edu}. Supported in part by NSF CAREER award 1053605, ONR YIP award N000141110662, DARPA/AFRL award FA8650-11-1-7162, a
%  University of Maryland Research and Scholarship Award (RASA)}
% \and D{\'a}niel Marx\thanks{Computer and Automation Research Institute, Hungarian Academy of Sciences
%(MTA SZTAKI), Budapest, Hungary, Email: \tt{dmarx@cs.bme.hu}. Supported by ERC Starting Grant PARAMTIGHT (No. 280152) }
%}

%\institute{ , \\ \email{\{rchitnis, hajiagha\}@cs.umd.edu} \and
%  Institute of Informatics, University of Warsaw, Poland,
%\email{} \and } }

\maketitle

\begin{abstract}
Given a graph $G$ and an integer $k$, the \textsc{Feedback Vertex Set} (\textsc{FVS}) problem asks if there is a vertex set $T$ of
size at most $k$ that hits all cycles in the graph. The first fixed-parameter algorithm for \textsc{FVS} in undirected graphs appeared in a monograph of Mehlhorn in 1984. The fixed-parameter
  tractability status of \textsc{FVS} in directed graphs was a long-standing open problem until Chen et al.~(STOC '08, JACM '08) showed that
  it is fixed-parameter tractable by giving a $4^{k}k!n^{O(1)}$ time
  algorithm. There are two subset versions of this problems: we are given an
  additional subset $S$ of vertices (resp., edges) and we want to hit
  all cycles passing through a vertex of $S$ (resp. an edge of
  $S$); the two variants are known to be equivalent in the parameterized sense. Recently, the \textsc{Subset
    Feedback Vertex Set} problem in undirected graphs was shown to be FPT by
  Cygan et al.~(ICALP '11, SIDMA '13) and independently by Kakimura et al.~(SODA
  '12). We generalize the result of Chen et al. (STOC '08, JACM '08) by showing
  that \textsc{Subset Feedback Vertex Set} in directed graphs can be
  solved in time $2^{O(k^3)}n^{O(1)}$, i.e., FPT parameterized by size
  $k$ of the solution. By our result, we complete the picture for
  feedback vertex set problems and their subset versions in undirected
  and directed graphs.

  The technique of random sampling of important separators was used by
  Marx and Razgon (STOC '11, SICOMP '14) to show that \textsc{Undirected Multicut}
  is FPT and was generalized by Chitnis et al. (SODA '12, SICOMP '13) to directed
  graphs to show that \textsc{Directed Multiway Cut} is FPT. Besides
  proving the fixed-parameter tractability of \textsc{Directed Subset
    Feedback Vertex Set}, we reformulate the random sampling of
  important separators technique in an abstract way that can be used
  for a general family of transversal problems.  We believe this
  general approach will be useful for showing the fixed-parameter
  tractability of other problems in directed graphs.  Moreover, we
  modify the probability distribution used in the technique to achieve
  better running time; in particular, this gives an improvement from $2^{2^{O(k)}}$ to $2^{O(k^2)}$ in the
  parameter dependence of the \textsc{Directed Multiway Cut} algorithm
  of Chitnis et al. (SODA '12, SICOMP '13).
% In this paper, we give a general family of problems (which includes \textsc{Directed
% Multiway Cut} and \textsc{Directed Subset Feedback Vertex Set} among others) for which the random sampling of important
% separators technique can be used to obtain a set which is disjoint from a minimum solution and covers its ``shadow.''
\end{abstract}

\section{Introduction}

The \textsc{Feedback Vertex Set} (\textsc{FVS}) problem has been one of the most extensively studied problems in the
parameterized complexity community. Given a graph $G$ and an integer $k$, it asks if there is a set $T\subseteq V(G)$ of
size at most $k$ which hits all cycles in $G$. The FVS problem in both undirected and directed graphs was shown to be NP-hard by
Karp~\cite{karp-np-hardness}. A generalization of the \textsc{FVS} problem is \textsc{Subset Feedback Vertex Set}
(\textsc{SFVS}): given a subset $S\subseteq V(G)$ (resp., $S\subseteq E(G)$), find a set $T\subseteq V(G)$ of size at most $k$
such that $T$ hits all cycles passing through a vertex of $S$ (resp., an edge of $S$). It is easy to see that $S=V(G)$ (resp.,
$S=E(G)$) gives the \textsc{FVS} problem.

As compared to undirected graphs, \textsc{FVS} behaves quite differently on directed graphs. In particular the trick of replacing
each edge of an undirected graph $G$ by arcs in both directions does not work: every feedback vertex set of the resulting
digraph is a vertex cover of $G$ and vice versa. Any other simple transformation does not seem possible either and thus the
directed and undirected versions are very different problems. This is reflected in the best known approximation ratio for the
directed versions as compared to the undirected problems: \textsc{FVS} in undirected graphs has an
2-approximation~\cite{bafna-2-approx-ufvs} while \textsc{FVS} in directed graphs has an $O(\log |V(G)|\log \log
|V(G)|)$-approximation~\cite{even-dfvs-approx,seymour-dfvs-approx}. The more general \textsc{SFVS} problem in undirected
graphs has an 8-approximation ~\cite{even-8-approx-sufvs} while the best-known approximation ratio in directed graphs is
$O(\min \{ \log |V(G)|\log \log |V(G)|,\\
 \log^{2} |S| \} )$~\cite{even-dfvs-approx}.

Rather than finding approximate solutions in polynomial time, one can look for exact solutions in time that is
superpolynomial, but still better than the running time obtained by brute force solutions. In both the directed and the
undirected versions of the feedback vertex set problems, brute force can be used to check in time $n^{O(k)}$ if a solution of
size at most $k$ exists: one can go through all sets of size at most $k$. Thus the problem can be solved in polynomial time if
the optimum is assumed to be small. In the undirected case, we can do significantly better: since the first FPT algorithm for
\textsc{FVS} in undirected graphs by Mehlhorn~\cite{mehlhorn} almost 30 years ago, there have been a number of
papers~\cite{ufvs-2,ufvs-1,ufvs-3,ufvs-ic,mikefellows-fvs,df-fvs,wernicke-fvs,ufvs-7,saket-fvs-1,saket-fvs-2} giving faster
algorithms and the current fastest (randomized) algorithm runs in time $O^*(3^k)$~\cite{cut-and-count} (the $O^{*}$ notation
hides all factors that are polynomial in the size of input). That is, undirected \textsc{FVS} is fixed-parameter tractable
parameterized by the size of the solution. Recall that a problem is \emph{fixed-parameter tractable} (FPT) with a particular
parameter $k$ if it can be solved in time $f(k)n^{O(1)}$, where $f$ is an arbitrary function depending only on $k$; see
\cite{downey-fellows,flum-grohe,niedermeier} for more background. For digraphs, the fixed-parameter tractability status of
\textsc{FVS} was a long-standing open problem (almost 16 years) until Chen et~al.~\cite{chen-dfvs} resolved it by giving an
$O^{*}(4^{k}k!)$ algorithm. This was recently generalized by Bonsma and Lokshtanov~\cite{fvs-mixed-graphs} who gave a
$O^{*}(47.5^{k}k!)$ algorithm for \textsc{FVS} in mixed graphs, i.e., graphs having both directed and undirected edges.

In the more general \textsc{Subset Feedback Vertex Set} problem, an additional subset $S$ of vertices is given and we want to
find a set $T\subseteq V(G)$ of size at most $k$ that hits all cycles passing through a vertex of $S$. In the edge version, we are
given a subset $S\subseteq E(G)$ and we want to hit all cycles passing through an edge of $S$. The vertex and edge versions
are indeed known to be equivalent in the parameterized sense in both undirected and directed graphs. Recently, Cygan et
al.~\cite{usfvs-icalp} and independently Kakimura et~al.~\cite{usfvs-ken} have shown that \textsc{Subset Feedback Vertex Set}
in undirected graphs is FPT parameterized by the size of the solution. Our main result is that \textsc{Subset Feedback Vertex
Set} in directed graphs is also fixed-parameter tractable parameterized by the size of the solution:

\begin{theorem}
{\bf (main result)} \textsc{Subset Feedback Vertex Set} (\sdfvs) in directed graphs can be solved in time $O^{*}(2^{O(k^3)})$.
\label{thm-main}
\end{theorem}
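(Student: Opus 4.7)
The plan is to combine iterative compression with a random sampling of important separators, adapting the framework that Chitnis et al.\ used for \textsc{Directed Multiway Cut} to the subset feedback setting. As a first move I would pass to the edge version \esdfvs (parameter-equivalent to \sdfvs), so that $S \subseteq E(G)$ and a cycle through an edge $(u,v) \in S$ corresponds precisely to a $v \to u$ walk in $G \setminus T$. This recasts the task as a cut-type problem: delete at most $k$ vertices so as to destroy every $v \to u$ walk for each $(u,v) \in S$.

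Next, standard iterative compression reduces the search to the disjoint-compression version: given a solution $W$ of size $k+1$, find a disjoint solution of size at most $k$, at the cost of a $2^{O(k)}$ factor from guessing the intersection $W \cap T^{*}$. In this disjoint instance I would formulate the notion of the \emph{shadow} of a hypothetical optimum $T^{*}$: those vertices that, after deleting $T^{*}$, can neither reach nor be reached from the designated terminals derived from $S$ and $W$. If the shadow were known, it could be deleted for free, leaving a ``shadowless'' instance in which every surviving vertex lies on a directed walk between these terminals.

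The heart of the proof is the abstract reformulation of random important-separator sampling, combined with the improved probability distribution hinted at in the abstract. Sampling important separators according to this distribution produces, with probability at least $2^{-O(k^{c})}$ for a suitable constant $c$, a random vertex set $Z$ that covers exactly the shadow of some optimum $T^{*}$ while avoiding $T^{*}$ itself; amplifying by repetition, one may assume the instance is shadowless. Applying the torso operation to collapse the shadowless instance onto the $O(k)$ designated terminals then converts the problem into an instance solvable by the $O^{*}(4^{k}k!)$ algorithm of Chen et al.\ for ordinary \dfvs. Multiplying all factors gives the claimed $2^{O(k^{3})}n^{O(1)}$ running time.

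The main obstacle is establishing the shadowless-to-\dfvs reduction. Unlike the \textsc{Directed Multiway Cut} case, where the torso of a shadowless instance becomes essentially trivial, here one must argue that every cycle through an edge of $S$ in the shadowless graph is faithfully reflected as a cycle in the torso, and conversely, so that hitting all cycles in the torso is equivalent to hitting all $S$-cycles in the original graph. Pinning down the right notion of shadow for the subset version, checking its compatibility with the torso operation, and sharpening the sampling analysis so that the failure probability drops from doubly exponential to single exponential in $k^{2}$, are the principal technical challenges.
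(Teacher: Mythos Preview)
Your initial steps---passing to the edge version, iterative compression, and random sampling of important separators to cover the shadow---match the paper. Two points deserve correction, however.

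First, a minor one: the shadow is defined with respect to the compression solution $T$ (the set of size $k+1$ we are trying to compress), not with respect to ``terminals derived from $S$ and $W$''. The family $\F$ of $S$-closed walks is $T$-connected precisely because $T$ is already a solution, so every such walk passes through $T$.

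Second, and this is the real gap: your claim that the torso operation ``collapses the shadowless instance onto the $O(k)$ designated terminals'' and reduces the problem to ordinary \dfvs\ is incorrect. The torso is taken on $V(G)\setminus Z$, which is not of bounded size, and the resulting instance is still an instance of \textsc{Disjoint \sdfvs\ Compression}, not of \dfvs. The paper explicitly notes that, unlike for \textsc{Directed Multiway Cut}, having a shadowless solution does \emph{not} finish the argument. Instead, one analyses the last strongly connected component $C_\ell$ in a topological ordering of $G\setminus T'$: it contains a nonempty $T_0\subseteq T$ and no edge of $S$. If some edge $(u,v)\in S$ has $u$ reachable from $T_0$ in $G\setminus T'$, then $v$ is a \emph{critical vertex}; the paper bounds the number of such vertices by $2^{O(k)}$ via an auxiliary-graph construction and important separators. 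Otherwise $T'$ contains a $T_0$--$S^-$ separator, which by a pushing argument may be assumed to be an important $T_0$--$(\{t\}\cup(T\setminus T_0))$ separator. One then branches over all critical vertices and all such important separators (for all guesses of $T_0$), decreasing $k$ in each branch and \emph{re-running the shadow-covering step from scratch}. The depth-$k$ recursion with $2^{O(k^2)}$ branches per level is what yields the $2^{O(k^3)}$ bound; there is no single reduction to \dfvs.
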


{\bf Our techniques.}  As a first step, we use the standard technique of {\em iterative compression} \cite{reed-smith-vetta-ic} to argue
that it is sufficient to solve the compression version of \sdfvs,
where we assume that a solution $T$ of size $k+1$ is given in the
input and we have to find a solution of size $k$.  Our algorithm for
the compression problem uses the technique of ``random sampling of
important separators,'' which was introduced by Marx and
Razgon~\cite{marx-razgon-stoc-11} for undirected \textsc{Multicut} and
generalized to directed graphs by Chitnis et
al.~\cite{directed-multiway-cut} to handle \textsc{Directed Multiway
  Cut}. We contribute two improvements to this technique on directed
graphs. First, we abstract out a framework that allows the clean and
immediate application of this technique for various problems. Second,
we modify the random selection process to improve the probability of
success. In particular, plugging in this improved result to the
\textsc{Directed Multiway Cut} algorithm of Chitnis et
al.~\cite{directed-multiway-cut}, the running time decreases from
$O^*(2^{2^{k}})$ to $O^{*}(2^{O(k^2)})$ thus giving an exponential improvement.
\begin{theorem}\label{th:dmc}
\textsc{Directed Multiway Cut} can be solved in time $O^*(2^{O(k^2)})$, where $k$ is the number of vertices to be deleted.
\end{theorem}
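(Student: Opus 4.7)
The plan is to show Theorem~\ref{th:dmc} by reusing the overall algorithmic architecture of the \textsc{Directed Multiway Cut} algorithm of Chitnis et al.~\cite{directed-multiway-cut}, and merely swapping in the improved variant of the random sampling of important separators that is advertised as one of the main technical contributions of this paper. Recall that the Chitnis et al. algorithm decomposes into two ingredients: (i) a randomized procedure that, given the input instance, produces a vertex set $R$ such that with some probability $R$ is disjoint from a fixed optimal solution $Z$ but contains the ``shadow'' $\F(Z)$ of $Z$ with respect to the terminals, and (ii) a subroutine solving the resulting \shadowless\ instance (after contracting $R$ via the $\torso$ operation) in time $2^{O(k)}n^{O(1)}$. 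Ingredient (ii) I would keep unchanged; the task reduces to driving down the failure rate of (i).

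First I would apply iterative compression in the standard way, so that we may assume the instance comes equipped with a multiway cut of size $k+1$ and we only need to produce one of size $k$. After guessing how the new solution interacts with the old one, the remaining instance still falls within the framework of the abstract ``transversal problem'' that the paper formalizes, so the general sampling lemma applies. The core step is then to invoke the improved \randset\ procedure with the modified probability distribution: each vertex outside the terminals is marked independently with probability $p=\Theta(1/k)$ rather than with a constant probability as in~\cite{directed-multiway-cut}. This choice is dictated by the observation that the shadow $\F(Z)$ can be covered by a chain of at most $O(k)$ important separators rather than having to be union-bounded against the family of \emph{all} important separators of size at most $k$.

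The key analytical step will be showing that, for this tuned $p$, each of the $O(k)$ important separators in the cover is correctly ``hit from one side and avoided from the other'' with probability at least $2^{-O(k)}$, so that by independence across the layers of the shadow decomposition the random set $R$ satisfies $R\cap Z=\emptyset$ and $\F(Z)\subseteq R$ simultaneously with probability at least $2^{-O(k^2)}$. Combined with the $2^{O(k)}n^{O(1)}$ \shadowless\ subroutine, repeating the sampling step $2^{O(k^2)}$ times amplifies the success probability to a constant and yields the claimed $O^{*}(2^{O(k^2)})$ bound; the randomized algorithm can be derandomized by the same splitter / universal-set arguments used in~\cite{directed-multiway-cut}.

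The main obstacle I anticipate is the probability analysis itself: one must verify that the modified distribution is compatible with the delicate directed notion of an important separator, in particular that the ``hit-one-side, avoid-the-other'' event for a single important separator $S$ of size at most $k$ indeed occurs with probability $2^{-O(k)}$ under $p=\Theta(1/k)$, and that the events across the $O(k)$ layers are (essentially) independent, so that the naive product bound is legitimate. Once this calculation is in place, the rest of the argument is a direct specialization of the abstract framework of random sampling for transversal problems developed earlier in the paper.
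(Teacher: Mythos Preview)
Your high-level plan---plug an improved shadow-covering routine into the Chitnis et al.\ architecture and leave the shadowless subroutine unchanged---is exactly what the paper does. The iterative compression step you add is harmless but unnecessary: in \textsc{Directed Multiway Cut} the set $T$ is the terminal set given in the input, so the $\F$-transversal framework applies directly without first producing a solution of size $k+1$.

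The genuine gap is in your description of the improved sampling. You propose marking each vertex with probability $p=\Theta(1/k)$ and rely on the claim that the shadow is covered by a ``chain of at most $O(k)$ important separators'' whose hit/avoid events are essentially independent. Neither part holds up. The shadow of a solution $W$ with $|W|\le k$ is, in general, the union of the exact reverse shadows of \emph{all} subsets of $W$ that are important separators---up to $2^k$ of them, not $O(k)$---and there is no layered decomposition yielding independence. With $p=\Theta(1/k)$ you get $W\subseteq P$ with probability only $k^{-k}$, and you still have to avoid an exponentially large set of ``bad'' vertices, so the product bound does not reach $2^{-O(k^2)}$.

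What the paper actually does (Theorem~\ref{th:random-sampling}) is different in two respects. First, vertices are sampled with probability $4^{-k}$, not $\Theta(1/k)$. Second, and crucially, $Z$ is not the sampled set itself but the union of the exact reverse shadows of every $S\in\IS$ with $S\subseteq P$. The analysis then needs only two events: $W\subseteq P$ (probability $4^{-k^2}$) and $P\cap(I\setminus W)=\emptyset$, where $I$ is the union of all important $w$--$T$ separators over $w\in W$, of size at most $k^2\cdot 4^k$. The latter event has probability $(1-4^{-k})^{k^2\cdot 4^k}\ge e^{-2k^2}$, and the two events are trivially independent because they concern disjoint vertex sets. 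This is where the $2^{-O(k^2)}$ bound comes from; your proposed calculation does not reproduce it.
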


Our generic framework can be used for the following general family of
problems. Let $\F=\{F_1,F_2,\ldots,F_q\}$ be a set of subgraphs of a
graph $G$. An {\em $\F$-transversal} is a set of vertices that intersects
every $F_i$. We consider problems that can be formulated as
finding an $\F$-traversal. In particular, we will investigate
$\F$-transversal problems satisfying the following property: we say
that $\F$ is {\em $T$-connected} if for every $i\in [q]$, each vertex
of $F_i$ can reach some vertex of $T$ by a walk completely contained
in $F_i$ and is reachable from some vertex of $T$ by a walk completely
contained in $F_i$.
\begin{center}
\noindent\framebox{\begin{minipage}{6.00in}
\textbf{$\F$-transversal for $T$-connected $\F$}\\
\emph{Input}: A directed graph $G$, a positive integer $k$, and a set $T\subseteq V(G)$.\\
\emph{Parameter}: $k$\\
\emph{Question}: Does there exist an $\F$-transversal $W\subseteq V(G)$ with $|W|\leq k$, i.e., a set $W$ such that $F_i\cap
W\neq \emptyset$ for every $i\in [q]$?
\end{minipage}}
\end{center}
We emphasize here that the collection $\F$ is implicitly defined in a
problem specific-way and we {\em do not} assume that it is given
explicitly in the input, in fact, it is possible that $\F$ is
exponentially large. For example, in the \textsc{Directed Multiway
  Cut} problem we take $T$ as the set of terminals and $\F$ as the set
of all walks between different terminals; note that $\F$ is clearly
$T$-connected.  In the compression version of \sdfvs, we take $T$ as
the solution that we want to compress and $\F$ as the set of all
cycles containing a vertex of $S$; again, $\F$ is $T$-connected,
since if $T$ is a solution, then every cycle containing a vertex
of $S$ goes through $T$.

We define the ``shadow'' of a solution $X$ as those vertices that are
disconnected from $T$ (in either direction) after the removal of
$X$. A common idea in \cite{marx-razgon-stoc-11,directed-multiway-cut}
is to ensure first that there is a solution whose shadow is empty, as
finding such a shadowless solution can be a significantly easier task.
Our generic framework shows that for the $\F$-transversal problems
defined above, we can invoke the random sampling of important
separators technique and obtain a set which is disjoint from a minimum
solution and covers its shadow.  What we do with this set, however, is
problem specific. Typically, given such a set, we can use (some
problem-specific variant of) the ``torso operation'' to find an
equivalent instance that has a shadowless solution. Therefore, we can
focus on the simpler task of finding a shadowless solution; or more
precisely, finding any solution under the guarantee that a shadowless
solution exists. We believe our framework will provide a useful
opening step in the design of FPT algorithms for other transversal and
cut problems on directed graphs.

In the case of undirected \textsc{Multicut}~\cite{marx-razgon-stoc-11}, if there was a shadowless solution, then the problem
could be reduced to an FPT problem called \textsc{Almost 2SAT}~\cite{lp-daniel,almost-2-sat}. In the case of \textsc{Directed Multiway
Cut}~\cite{directed-multiway-cut}, if there was a solution whose shadow is empty, then the problem could be reduced to the
undirected version, which was known to be FPT~\cite{chen-improved-multiway-cut,DBLP:journals/toct/CyganPPW13,marx-2006}. For \sdfvs, the situation turns out to be a bit more complicated. As mentioned above, we
first use the technique of {iterative compression} to reduce the problem to an instance where we are given a solution $T$ and
we want to find a disjoint solution of size at most $k$. We define the ``shadows'' with respect to the solution $T$ that we
want to compress, whereas in~\cite{directed-multiway-cut}, the shadows were defined with respect to the terminal set $T$. The
``torso'' operation we define in this paper is specific to the \sdfvs\ problem, as it takes into account the set $S$ and modifies it accordingly. Furthermore, even after ensuring that there is a solution $T'$ whose shadow is empty, we are not done
unlike in~\cite{directed-multiway-cut}. We then analyze the structure of the graph $G\setminus T'$ and focus on the last strongly connected component in the topological ordering of this graph, i.e., the strongly connected component which can only have incoming edges from other strongly connected components.
We would like to find the subset of $T'$ that separates this component from the rest of the graph. In most cases, a pushing argument can be used to argue that this subset of $T'$ is an important separator, and hence we can branch on removing an important separator from the graph. However, due to the way the set $S$ interacts with the solution $T'$, there is a small number of vertices that behave in a special way. We need surprisingly complex arguments to handle these special vertices.

%On the other hand, the fact that we are dealing with a directed graph makes the problem significantly harder (recall that
%\textsc{Directed
%  Multicut} is W[1]-hard, thus it is expected that not every
%undirected argument generalizes to the directed case). After defining a proper notion of directed important separators, the
%non-trivial interaction amongst two kinds of ``shadows" forces us to do the random sampling of important separators in two
%independent steps.  In \cite{marx-stoc-2011}, the basic version of random sampling gives a running time
% that is double exponential in $p$; a more complicated sampling process allowed to bring down the running time from
%$O^{*}(2^{2^{O(p)}})$ to $O^{*}(2^{O(p^3)})$.  Directed graphs have a notion of weak versus strong connectivity and this
%difference does not allow us to extend the more complicated version of sampling to directed graphs. Therefore, it remains an
%open question if single-exponential running time can be achieved for \textsc{Directed Multiway Cut}.

The paper is organized as follows. Section~\ref{sec:preliminaries}
introduces notation and the preliminary steps of the algorithm,
including iterative compression. Section~\ref{sec:coveringshadow}
presents the general result on covering shadows of
$\F$-transversals. The remaining sections are specific to the
\textsc{Subset-DFVS} problem: they discuss how to use the techniques of
Section~\ref{sec:coveringshadow} to reduce the problem to instances
where the existence of shadowless solutions is guaranteed
(Section~\ref{sec:torso}) and how to find a solution under the
guarantee that a shadowless solution exists
(Section~\ref{sec:branch-imp-sep}); the full algorithm is summarized
in Section~\ref{sec:fpt-algorithm}. Finally Section~\ref{sec:concl-open-probl}
concludes the paper.
%%% Local Variables:
%%% mode: latex
%%% TeX-master: "dsfvs-arxiv"
%%% End:

\section{Preliminaries}
\label{sec:preliminaries}

Observe that a directed graph contains no cycles if and only if it
contains no closed walks; moreover, there is a cycle going through $S$
if and only there is a closed walk going through $S$. For this reason,
throughout the paper we use the term closed walks instead of cycles, since it is
sometimes easier to show the existence of a closed walk and avoid
discussion whether it is a simple cycle or not. A feedback vertex set
is a set of vertices that hits all the closed walks of the graph.

\begin{definition}{\bf (feedback vertex set)}
\label{defn-directed-multiway-cut} Let $G$ be a directed graph. A set $T\subseteq V(G)$ is a \emph{feedback vertex set} of $G$
if $G\setminus T$ does not contain any closed walks.
\end{definition}

This gives rise to the \DFVS\ (\dfvs) problem where we are given a directed graph $G$ and we want to find if $G$ has a
feedback vertex set of size at most $k$. The \dfvs\ problem was shown to be FPT by Chen et al.~\cite{chen-dfvs}, answering a long-standing
open problem in the parameterized complexity community.
%\daniel{Do we really need the formal definition of DFVS?}

%\begin{center}
%\noindent\framebox{\begin{minipage}{4.50in}
%\textbf{\DFVS\ (\dfvs)}\\
%\emph{Input }: A directed graph $G=(V,E)$ and a positive integer $k$.\\
%\emph{Parameter }: $k$\\
%\emph{Question} : Does there exist a set $T\subseteq V(G)$ with $|T|\leq k$ such that $G\setminus T$ has no closed-walks?
%\end{minipage}}
%\end{center}

In this paper, we consider a generalization of the \dfvs\ problem where given a set $S\subseteq V(G)$, we ask if there exists a
vertex set of size $\leq k$ that hits all closed walks passing through $S$.

\begin{center}
\noindent\framebox{\begin{minipage}{6.00in}
\textbf{\SDFVS\ (\sdfvs)}\\
\emph{Input}: A directed graph $G$, a set $S\subseteq V(G)$, and a positive integer $k$.\\
\emph{Parameter}: $k$\\
\emph{Question}: Does there exist a set $T\subseteq V(G)$ with $|T|\leq k$ such that $G\setminus T$ has no closed walk
containing a vertex of S?
\end{minipage}}
\end{center}

It is easy to see that \dfvs\ is a special case of \sdfvs\ obtained by setting $S=V(G)$. We also define a variant of \sdfvs\,
where the set $S$ is a subset of edges. In this variant, we have destroy the following type of closed walks:

\begin{definition}
(\textbf{$S$-closed-walk}) Let $G$ be a directed graph and $S\subseteq E(G)$. A closed walk (starting and ending at same
vertex) $C$ in $G$ is said to be a \emph{$S$-closed-walk} if it contains an edge from $S$.
\end{definition}

\begin{center}
\noindent\framebox{\begin{minipage}{6.00in}
\textbf{\ESDFVS\ (\esdfvs)}\\
\emph{Input }: A directed graph $G$, a set $S\subseteq E(G)$, and a positive integer $k$.\\
\emph{Parameter }: $k$\\
\emph{Question} : Does there exist a set $T\subseteq V(G)$ with $|T|\leq k$ such that $G\setminus T$ has no $S$-closed-walks?
\end{minipage}}
\end{center}

The above two problems can be shown to be equivalent as follows.  If
$(G,S,k)$ is an instance of \sdfvs\, we create an instance $(G,S',k)$
of \esdfvs\ by taking $S'$ as the set of edges incident to any vertex
of $S$. Then any closed walk passing through a vertex of $S$ must pass
through an edge of $S'$, and conversely any closed walk passing
through an edge of $S'$ must contain a vertex from $S$.

On the other hand, given an instance $(G,S',k)$ of \esdfvs\, we create
an instance $(G',S,k)$ of \sdfvs\ where $G'$ is obtained from $G$ by
the following modification: For every directed edge $(u,v)\in E(G)$ we add a
new vertex $x_{uv}$ and path $u\rightarrow x_{uv}\rightarrow v$ of
length 2. We set $S=\{x_{e}\ :\ e\in S'\}$. Then any closed walk in
$G$ passing through an edge of $S'$ corresponds to a closed-walk in
$G'$ which must pass through a vertex of $S$, and conversely any
closed walk in $G'$ passing through a vertex of $S$ can be easily
converted to a closed walk in $G$ passing through an edge of
$S'$. Both the reductions work in polynomial time and do not change
the parameter. Therefore, in the rest of the paper we concentrate on
solving the \ESDFVS\ problem and we shall refer to both the above
problems as \sdfvs.

\subsection{Iterative Compression}

The first step of our algorithm is to use the technique of
\emph{iterative compression} introduced by Reed et
al.~\cite{reed-smith-vetta-ic}. It has been used to obtain faster FPT
algorithms for various
problems~\cite{ufvs-ic,chen-dfvs,mikefellows-fvs,saket-ic,wernicke-fvs,huffner-ic,marx-razgon-stoc-11,almost-2-sat}. We
transform the \sdfvs\ problem into the following problem:
\begin{center}
\noindent\framebox{\begin{minipage}{6.00in}
\textbf{\sdfvs\ \textsc{Compression}}\\
\emph{Input}: A directed graph $G$, a set $S\subseteq E(G)$, a positive integer $k$, and a set $T\subseteq V$
such that $G\setminus T$ has no $S$-closed-walks.\\
\emph{Parameter}: $k+|T|$\\
\emph{Question}: Does there exist a set $T'\subseteq V(G)$ with $|T'|\leq k$ such that $G\setminus T'$ has no
$S$-closed-walks?
\end{minipage}}
\end{center}

\begin{lemma}%$[\star]$\footnote{The proofs of the results labeled with $\star$ have been deferred to the full version of the paper.}
(\textbf{power of iterative compression}) \sdfvs\ can be solved by $O(n)$ calls to an algorithm for the \sdfvs\
\textsc{Compression} problem with $|T|\le k+1$. \label{lem:ic}
\end{lemma}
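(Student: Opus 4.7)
My plan is to follow the standard iterative compression scheme, tailored to \sdfvs\ in its edge-subset form (as justified in the preceding discussion, it suffices to handle \esdfvs). Fix an arbitrary ordering $v_1,v_2,\ldots,v_n$ of $V(G)$, and for each $i$ let $G_i = G[\{v_1,\ldots,v_i\}]$ and $S_i = S\cap E(G_i)$. I would process the vertices in order while maintaining the invariant that after step $i$ either (a) I have a set $T_i\subseteq V(G_i)$ with $|T_i|\le k$ such that $G_i\setminus T_i$ contains no $S_i$-closed-walk, or (b) I have correctly output ``no.'' The base case takes $T_k:=\{v_1,\ldots,v_k\}$, for which $G_k\setminus T_k$ is empty, so the invariant holds trivially.

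For the inductive step, suppose $T_{i-1}$ of size at most $k$ is in hand. Put $T':=T_{i-1}\cup\{v_i\}$. Then $|T'|\le k+1$ and $G_i\setminus T' = G_{i-1}\setminus T_{i-1}$, which has no $S_{i-1}$-closed-walk by hypothesis; since every closed walk in $G_{i-1}\setminus T_{i-1}$ avoids $v_i$ and uses only edges of $G_{i-1}$, it also has no $S_i$-closed-walk. Hence $(G_i,S_i,k,T')$ is a valid \sdfvs\ \textsc{Compression} instance with $|T'|\le k+1$, and I invoke the compression algorithm exactly once on it. If it returns some $T_i$ with $|T_i|\le k$, I store it and proceed; if it returns ``no,'' I output ``no'' for the whole problem and stop.

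The step that needs genuine justification, and which I expect to be the only subtle point, is the correctness of this early termination: I must show that if $G_i$ has no \sdfvs-solution of size $k$, then neither does $G$. This I would prove by the usual monotonicity argument: suppose $T$ is any \sdfvs-solution of size at most $k$ for $(G,S,k)$; then $T\cap V(G_i)$ is a solution for $(G_i,S_i,k)$, since any $S_i$-closed-walk in $G_i\setminus(T\cap V(G_i))$ is a closed walk of $G\setminus T$ using an edge of $S_i\subseteq S$, contradicting that $T$ solves $(G,S,k)$. Thus non-existence of a size-$k$ solution propagates from $G_i$ up to $G$.

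After step $n$, the stored $T_n$ is an \sdfvs-solution of size at most $k$ for $G=G_n$, and the total number of invocations of the compression algorithm is the number of inductive steps, namely $n-k=O(n)$, as required.
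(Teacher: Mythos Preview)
Your proof is correct and follows essentially the same iterative compression scheme as the paper: process the vertices in order, maintain a solution of size at most $k$ for the current induced subgraph, extend it by the new vertex to get a set of size at most $k+1$, and call the compression routine. The only cosmetic differences are that you start the induction at $i=k$ rather than $i=1$ and make the restriction $S_i=S\cap E(G_i)$ explicit; you also spell out the monotonicity argument justifying early termination, which the paper states more tersely.
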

\begin{proof}
Let $V(G)=\{v_1,\ldots,v_n\}$ and for $i\in [n]$ let $V_i = \{v_1, \ldots v_i\}$. We construct a sequence of subsets $X_i
\subseteq V_i$, such that $X_i$ is a solution for $G[V_i]$. Clearly, $X_1=\emptyset$ is a solution for $G[V_1]$. Observe that
if $X_i$ is a solution for $G[V_i]$, then $X_i \cup \{v_{i+1}\}$ is a solution for $G[V_{i+1}]$. Therefore, for each $i\in
[n-1]$, we set $T = X_i \cup \{v_{i+1}\}$ and use, as a blackbox, an algorithm for \sdfvs\ \textsc{Compression}, to construct a
set $X_{i+1}$ that is a solution of size at most $k$ for $G[V_{i+1}]$. Note that if there is no solution for $G[V_i]$ for
some $i\in [n]$, then there is no solution for the whole graph $G$ and moreover, since $V_n = V(G)$, if all the calls to the
reduction problem are successful, then $X_n$ is a solution for the graph $G$.
\end{proof}

Now we transform the \sdfvs\ \textsc{Compression} problem into the following problem whose only difference is that the subset
feedback vertex set in the output must be disjoint from the one in the input:

\begin{center}
\noindent\framebox{\begin{minipage}{6.00in}
\textbf{\textsc{Disjoint \sdfvs\ Compression}}\\
\emph{Input}: A directed graph $G$, a set $S\subseteq E(G)$, a positive integer $k$, and a set $T\subseteq V$
such that $G\setminus T$ has no $S$-closed-walks.\\
\emph{Parameter}: $k+|T|$\\
\emph{Question}: Does there exist a set $T'\subseteq V(G)$ with $|T'|\leq k$ such that $T\cap T' = \emptyset$ and $G\setminus
T'$ has no $S$-closed-walks?
\end{minipage}}
\end{center}

\begin{lemma}%$[\star]$
(\textbf{adding disjointness}) \textsc{\sdfvs\ Compression} can be solved by $O(2^{|T|})$ calls to an algorithm for the
\textsc{Disjoint \sdfvs\ Compression} problem. \label{lem:disjoint}
\end{lemma}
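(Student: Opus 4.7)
The plan is the standard ``guess the intersection'' reduction. Given an instance $(G,S,k,T)$ of \sdfvs\ \textsc{Compression}, I would enumerate all $2^{|T|}$ subsets $T_1\subseteq T$, where $T_1$ represents the guess for $T\cap T'$, that is, the part of the new solution $T'$ that lies inside the old solution $T$. For each such guess, I would construct an instance of \textsc{Disjoint \sdfvs\ Compression} on the smaller graph $G\setminus T_1$, and return YES iff at least one of these calls returns YES.

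More concretely, for a fixed guess $T_1$, let $G' = G\setminus T_1$, let $S' = S\cap E(G')$, let $k' = k-|T_1|$, and let $T^{\star} = T\setminus T_1$. I would call the \textsc{Disjoint \sdfvs\ Compression} oracle on $(G',S',k',T^{\star})$. To check that this is a legal instance, I need $G'\setminus T^{\star}$ to have no $S'$-closed-walks; but $G'\setminus T^{\star} = G\setminus T$, which has no $S$-closed-walks by the hypothesis of the input instance, so this holds.

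For correctness, in the forward direction, if $T'$ is a solution of the original compression instance with $|T'|\le k$, then setting $T_1 := T\cap T'$ (which has size at most $k$, so $k' = k-|T_1|\ge 0$) the set $T_2 := T'\setminus T_1$ is disjoint from $T^{\star}$, has size at most $k'$, and deleting it from $G'$ leaves $G\setminus T'$, which has no $S$-closed-walks; hence the oracle call for this particular $T_1$ succeeds. In the reverse direction, if the oracle returns a set $T_2$ disjoint from $T^{\star}$ in $G'$ with $|T_2|\le k'$ and $G'\setminus T_2$ $S$-closed-walk free, then $T' := T_1\cup T_2$ has size at most $k$ and $G\setminus T' = G'\setminus T_2$ has no $S$-closed-walks, so $T'$ is a valid solution of the compression instance.

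There is no real obstacle here; the only thing to be slightly careful about is bookkeeping for the parameter and ensuring that the reduced instance is well-formed (the ``$G\setminus T$ has no $S$-closed-walks'' hypothesis of \textsc{Disjoint \sdfvs\ Compression} follows automatically, as noted). Summing over all $2^{|T|}$ guesses gives the claimed number of oracle calls.
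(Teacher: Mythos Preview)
Your proposal is correct and follows exactly the same ``guess the intersection $T\cap T'$'' reduction as the paper, with the same construction $I_X=(G\setminus X,S,T\setminus X,k-|X|)$ and the same two-direction correctness check. If anything, you are slightly more careful than the paper in explicitly verifying that the reduced instance is well-formed (i.e., that $G'\setminus T^{\star}=G\setminus T$ has no $S$-closed-walks).
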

\begin{proof}
  Given an instance $I=(G,S,T,k)$ of \sdfvs\ \textsc{Compression} we
  guess the intersection $X$ of $T$ and the subset feedback vertex set
  $T'$ in the output. We have at most $2^{|T|}$ choices for $X$. Then
  for each guess for $X$, we solve the \textsc{Disjoint \sdfvs\
    Compression} problem for the instance $I_X=(G\setminus
  X,S,T\setminus X,k-|X|)$. It is easy to see that if $T'$ is a
  solution for instance $I$ of \textsc{\sdfvs\ Compression}, then
  $T'\setminus X$ is a solution of instance $I_X$ of \textsc{Disjoint
    \sdfvs\ Compression} for $X=T'\cap T$. Conversely, if $T''$ is a
  solution to some instance $I_X$, then $T''\cup X$ is a solution for
  $X$.
\end{proof}

From Lemmas~\ref{lem:ic} and \ref{lem:disjoint}, an FPT algorithm for \textsc{Disjoint \sdfvs\ Compression} translates into an
FPT algorithm for \sdfvs\ with an additional blowup factor of $O(2^{|T|}n)$ in the running time.
%We formalize this in the next observation:
%\daniel{I don't think we need to state this observation}
%\begin{observation}
%\label{observation}
%An FPT algorithm for \textsc{Disjoint \sdfvs\ Compression} implies an FPT algorithm for \sdfvs.
%\end{observation}

%%% Local Variables:
%%% mode: latex
%%% TeX-master: "dsfvs-arxiv"
%%% End:

\section{General $\F$-transversal Problems: Covering the Shadow of a Solution}
\label{sec:coveringshadow}
The purpose of this section is to present the ``random sampling of important separators'' technique developed in
\cite{directed-multiway-cut} for \textsc{Directed Multiway Cut} in a generalized way that applies to \sdfvs\ as well. The
technique consists of two steps:
%The two main steps in the FPT algorithm for \textsc{Directed Multiway Cut}~\cite{directed-multiway-cut} were:

\begin{enumerate}
\item First find a set $Z$ \emph{small} enough to be disjoint from a solution $X$ (of size $\leq k$) but \emph{large}
    enough to cover the ``shadow" of $X$.
\item Then define a ``torso" operation that uses the set $Z$ to reduce the problem instance in such a way that $X$
    becomes a shadowless solution of the reduced instance.
\end{enumerate}

In this section, we define a general family of problems for which Step 1 can be efficiently performed. The general technique
to execute Step 1 is very similar to what was done for \textsc{Directed Multiway Cut}~\cite{directed-multiway-cut}. In
Section~\ref{sec:torso}, we show how Step 2 can be done for the specific problem of \textsc{Disjoint
  \sdfvs\ Compression}. First we start by defining separators and shadows. Following \cite{directed-multiway-cut}, we define separators in a generalized setting where we assume that the graph $G$ is equipped with a subset $V^\infty(G)$ of undeletable vertices and separators by definition have to be disjoint from this set.
This extension will be very convenient in the proofs of Section~\ref{sec:analysis-algorithm}.

\begin{definition}{\bf (separator)}\label{defn-sep}
Let $G$ be a directed graph and $V^{\infty}(G)$ be the set of distinguished (``undeletable'') vertices. Given two disjoint
non-empty sets $X,Y\subseteq V$,  we call a set $W\subseteq V\setminus (X\cup Y\cup V^{\infty})$ an \emph{$X-Y$ separator} if
there is no path from $X$ to $Y$ in $G\setminus W$. A set $W$ is a {\em minimal $X-Y$ separator} if no proper subset of $W$ is
an $X-Y$ separator.
\end{definition}

Note that here we explicitly define the $X-Y$ separator $W$ to be disjoint from $X$ and $Y$.
\begin{definition}{\bf (shadows)}
\label{defn-f-and-r-and-shadow} Let $G$ be graph and $T$ be a set of terminals. Let $W\subseteq V(G)\setminus V^{\infty}(G)$
be a subset of vertices.
\begin{enumerate}
\item The \emph{forward shadow} $f_{G,T}(W)$ of $W$ (with respect to $T$) is the set of vertices $v$ such that $W$ is a
    $T-\{v\}$ separator in $G$.
\item The \emph{reverse shadow} $r_{G,T}(W)$ of $W$ (with respect to $T$) is the set of vertices $v$ such that $W$ is a
    $\{v\}-T$ separator in $G$.
\end{enumerate}
The \emph{shadow} of $W$ (with respect to $T$) is the union of $f_{G,T}(W)$ and $r_{G,T}(W)$.
\end{definition}
That is, we can imagine $T$ as a light source with light spreading on the directed edges. The forward shadow is the set of
vertices that remain dark if the set $W$ blocks the light, hiding $v$ from $T$'s sight. In the reverse shadow, we imagine that
light is spreading on the edges backwards. We abuse the notation slightly and write $v-T$ separator instead of $\{v\}-T$
separator. We also drop $G$ and $T$ from the subscript if they are clear from the context. Note that $W$ itself is not in the
shadow of $W$ (as, by definition, a $T-v$ or $v-T$ separator needs to be disjoint from $T$ and $v$), that is, $W$ and $f_{G,T}(W)\cup
r_{G,T}(W)$ are disjoint. See Figure~\ref{fig:shadow} for an illustration.

\begin{figure}[t]
\centering
\def\svgwidth{0.4\linewidth}%
\executeiffilenewer{mway1.svg}{mway1.pdf}%
{inkscape -z -D --file=mway1.svg %
--export-pdf=mway1.pdf --export-latex}%
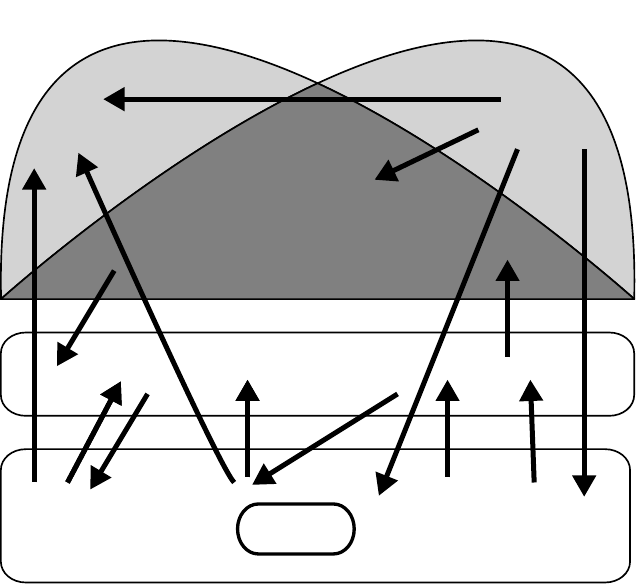%

\caption{For every vertex $v\in f(W)$, the set $W$ is a $T-v$
  separator. For every vertex $w\in r(W)$, the set $W$ is a $w-T$
  separator. For every vertex $y\in f(W)\cap r(W)$, the set $W$ is
  both a $T-y$ and $y-T$ separator. Finally for every $z\in
  V(G)\setminus [W\cup r(W)\cup f(W)\cup T]$, there are both $z-T$ and
  $T-z$ paths in the graph $G\setminus W$\label{fig:shadow}.}
%  Note that   every such vertex $z$ belongs to a strongly connected component of   $G\setminus W$ containing $T$ and there are no edges between these   components.}
\end{figure}

%\subsection{Covering the Shadow}

%In this section we present a general family of problems for which we can achieve Step 1.

Let $G$ be a directed graph and $T\subseteq V(G)$. Let
$\mathcal{F}=\{F_1,F_2,\ldots,F_q\}$ be a set of subgraphs of $G$. We
define the following property:

\begin{definition}
{\bf (T-connected)} Let $\mathcal{F}=\{F_1,F_2,\ldots,F_q\}$ be a set of subgraphs of $G$. For a set $T\subseteq V$, we say
that $\mathcal{F}$ is \emph{$T$-connected} if for every $i\in [q]\ $, each vertex of the subgraph $F_i$ can reach some vertex
of $T$ by a walk completely contained in $F_i$ and is reachable from some vertex of $T$ by a walk completely contained in
$F_i$. \label{defn:T-connected}
\end{definition}
For a set $\mathcal{F}$ of subgraphs of $G$, an {\em $\F$-transversal} is a set of vertices that intersects each subgraph in
$\mathcal{F}$.
%We note that the subgraphs in $\F$ are given implicitly to us.

\begin{definition}
{\bf ($\F$-transversal)} Let $\mathcal{F}=\{F_1,F_2,\ldots,F_q\}$ be a set of subgraphs of $G$. Then $W\subseteq V(G)$ is said to be an
\emph{$\F$-transversal} if $\forall\ i\in [q]$ we have $F_i\cap W\neq \emptyset$. \label{defn:F-transversal}
\end{definition}

The main result of this section is a randomized algorithm for producing a set that covers the shadow of some $\F$-transversal:
%\daniel{It would be sufficient to state only the derandomized version.}

\begin{theorem}{\bf (randomized covering of the shadow)} Let $T\subseteq V(G)$. There is an algorithm $\randset(G,T,k)$ that runs in $O^{*}(4^k)$ time
and returns a set $Z\subseteq V(G)$ such that for any  set $\F$ of $T$-connected subgraphs, if there exists an
$\F$-transversal of size $\leq k$, then the following holds with probability $2^{-O(k^2)}$: there is an $\F$-transversal $X$
of size $\leq k$ such that
\begin{enumerate}
\item $X\cap Z=\emptyset$ and
\item $Z$ covers the shadow of $X$.
\end{enumerate}
\label{thm:main-covering-shadow}
\end{theorem}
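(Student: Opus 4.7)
My plan is to adapt the ``random sampling of important separators'' technique of \cite{marx-razgon-stoc-11,directed-multiway-cut}, reworked for the abstract $T$-connected setting and equipped with a sharper probability distribution. First I would enumerate the set $\IS$ of all important separators of size at most $k$ whose source side contains $T$, together with the symmetric collection of important separators whose sink side contains $T$. By the classical bound on important separators, both collections have size at most $4^k$ and can be listed in $O^*(4^k)$ time, which accounts for the running time promised by the statement.

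Second, rather than flipping an independent fair coin for each important separator (which gives success probability only $2^{-\Theta(4^k)}$, as in \cite{directed-multiway-cut}), the algorithm $\randset(G,T,k)$ would sample as follows: pick a target size $m\in\{0,1,\dots,ck\}$ uniformly at random for a suitable constant $c$, then pick a uniformly random subset of $\IS$ of cardinality exactly $m$, and analogously for the reverse collection. Let $Z$ be the union of the forward shadows of the selected forward separators together with the reverse shadows of the selected reverse separators. The key point of this biased sampling is that if a particular subcollection $\mathcal{C}$ of size $O(k)$ is ``correct'', then the probability that the sample equals $\mathcal{C}$ is at least $\binom{4^k}{O(k)}^{-1}/O(k)=2^{-O(k^2)}$, matching the bound required by the theorem.

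The analysis then hinges on the following structural claim: for any $\F$-transversal $X_0$ of size at most $k$, there exist an $\F$-transversal $X$ of size at most $k$ and a subcollection $\mathcal{C}=\mathcal{C}_f\cup\mathcal{C}_r$ of at most $O(k)$ important separators such that (i) $X$ is disjoint from every separator in $\mathcal{C}$, (ii) $f_{G,T}(X)\subseteq\bigcup_{W\in\mathcal{C}_f}f_{G,T}(W)$ and $r_{G,T}(X)\subseteq\bigcup_{W\in\mathcal{C}_r}r_{G,T}(W)$, and (iii) no vertex of $X$ lies inside any of these shadows. Combining the structural claim with the probability estimate above then completes the proof: conditioning on the event that the random sample equals $\mathcal{C}$, the corresponding $X$ witnesses the conclusion of the theorem.

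The main obstacle is the structural claim, and this is where $T$-connectedness is essential. The starting point is the standard pushing lemma: any $T$-to-$v$ separator of size at most $k$ is dominated by an important $T$-to-$v$ separator that lies no closer to $T$. Iterating a charging argument along the vertices of $X_0$ should show that $O(k)$ important separators suffice to cover the forward shadow of $X_0$, and symmetrically for the reverse direction. The subtle issue is producing the modified transversal $X$: the pushed separators are disjoint from the shadow of some transversal only if we are allowed to relocate transversal vertices along with the push. This is exactly what $T$-connectedness enables: since every subgraph $F_i\in\F$ can reach and be reached from $T$ entirely within $F_i$, any vertex of $F_i\cap X_0$ that happens to sit in the shadow of a pushed separator $W$ can be replaced by the first vertex of $W$ encountered along an $F_i$-walk to or from $T$. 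Carrying out this swap consistently for every shadow vertex of $X_0$ should yield an $\F$-transversal $X$ of the same size that lies entirely on the ``near side'' of the chosen separators, so the disjointness conditions of the structural claim are satisfied.
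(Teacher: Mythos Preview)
Your plan diverges from the paper's approach in several places, and at least two of the divergences are genuine gaps.

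\textbf{Size of $\IS$ and the probability estimate.} You assert that the collection $\IS$ of important separators has size at most $4^k$. In fact, the bound is $4^k$ \emph{per source vertex}: over all $v$, one only gets $|\IS|\le 4^k\cdot |V(G)|$. Consequently, the probability that a uniformly random $m$-subset of $\IS$ equals a fixed target $\mathcal{C}$ is $\binom{|\IS|}{m}^{-1}\le (4^k n)^{-m}$, which depends on $n$ and cannot be $2^{-O(k^2)}$. The paper avoids this by sampling \emph{vertices}, not separators: each vertex is put into a set $P$ independently with probability $4^{-k}$, and then $Z$ is the union of the exact reverse shadows of all $S\in\IS$ with $S\subseteq P$. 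The event one needs is $W\subseteq P$ and $(I\setminus W)\cap P=\emptyset$, where $W$ is the target transversal and $I$ is the union of all important $w$--$T$ separators for $w\in W$; since $|W|\le k$ and $|I|\le k^2 4^k$, this yields $2^{-O(k^2)}$ with no $n$-dependence.

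\textbf{The ``$O(k)$ separators suffice'' claim.} Your structural claim requires only $O(k)$ important separators to cover the shadow of some transversal $X$. This is not established and appears to be false in general: what the pushing argument actually gives (and what the paper proves) is that for a suitably extremal minimum transversal $T^*$, every $v$ in the reverse shadow of $T^*$ admits an important $v$--$T$ separator that is a \emph{subset of $T^*$}. There can be up to $2^k$ such subsets, not $O(k)$. This is exactly why the paper's probability distribution is keyed to the $k$ vertices of $T^*$ rather than to a small family of separators.

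\textbf{Two further points.} First, the forward and reverse shadows are not handled symmetrically and in parallel. The paper runs the sampling twice: once on $G$ to cover the reverse shadow, and once on the reversed graph $G_2$ \emph{with the first-round output $Z_1$ made undeletable}, to cover the forward shadow. This dependence is needed so that the extremality argument (no further pushing possible) still goes through in $G_2$. Second, the ``swap'' you sketch---replacing shadow vertices of $X_0$ by boundary vertices of pushed separators---is not how the paper proceeds; instead it fixes in advance a shadow-maximal minimum transversal $T^*$ with maximum reverse shadow, and shows that for this particular $T^*$ no push is possible, hence the important-separator condition of the sampling theorem is already met. Your iterative relocation would need a termination argument and a guarantee that the resulting $X$ has size $\le k$ and is still disjoint from all the accumulated shadows, neither of which is provided.
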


Note that $\F$ is {\em not} an input of the algorithm described by
Theorem~\ref{thm:main-covering-shadow}: the set $Z$ constructed in the
above theorem works for {\em every} $T$-connected set $\F$ of
subgraphs. Therefore, issues related to the representation of $\F$ do
not arise. Using the theory of splitters, we also prove the following derandomized version of
Theorem~\ref{thm:main-covering-shadow}:

\begin{theorem}{\bf (deterministic covering of the shadow)}
\label{thm:main-covering-shadow-derandomized}
Let $T\subseteq V(G)$.
We can construct a set $\{Z_1,Z_2,\ldots,Z_t\}$ with $t=2^{O(k^2)}\log^2 n$ in time $O^{*}(2^{O(k^2)})$ such that for any set $\F$ of $T$-connected subgraphs, if there exists an $\F$-transversal of size $\leq k$, then
there is an $\F$-transversal $X$ of size $\leq k$ such that for at least one $i\in [t]$ we have
\begin{enumerate}
\item $X\cap Z_i=\emptyset$ and
\item $Z_i$ covers the shadow of $X$.
\end{enumerate}
\end{theorem}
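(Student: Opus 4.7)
The plan is to derandomize Theorem~\ref{thm:main-covering-shadow} using the theory of $(n,r)$-universal sets (or splitters) due to Naor, Schulman, and Srinivasan. Inspecting the randomized algorithm $\randset$, the only source of randomness is a random selection of a subset of $V(G)$: each vertex is placed into some part of a partition (or included in a sample) independently, and the output $Z$ is then obtained deterministically from this choice (e.g.\ as the union of reverse/forward reaches of certain important separators computed with respect to the sample). The success probability $2^{-O(k^2)}$ reflects the fact that, conditioned on the existence of a solution $X$, there is a set $C_X \subseteq V(G)$ of ``critical'' vertices with $|C_X| = O(k^2)$ such that the random sample succeeds whenever its intersection with $C_X$ is a specific pattern $B_X \subseteq C_X$.

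First I would isolate the combinatorial statement behind Theorem~\ref{thm:main-covering-shadow}: namely, that there exists a set $C_X$ of size $O(k^2)$ and a target pattern $B_X \subseteq C_X$ (both depending on $X$ but not on the algorithm's input beyond $G$ and $T$) such that, for any subset $R \subseteq V(G)$ with $R \cap C_X = B_X$, running the deterministic part of $\randset$ with $R$ substituted for the random sample produces a set $Z(R)$ satisfying the two conclusions of the theorem for $X$. This reformulation is implicit in the analysis of the randomized algorithm; extracting it cleanly from the proof of Theorem~\ref{thm:main-covering-shadow} is where most of the work lies and will be the main obstacle, since one must track exactly which choices the important-separator computation depends on.

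Given this reformulation, I would then invoke the Naor--Schulman--Srinivasan construction of an $(n,r)$-universal set: a family $\mathcal{U} \subseteq 2^{V(G)}$ of size $2^r \cdot r^{O(\log r)} \cdot \log n$, constructible in time polynomial in its size, such that for every $r$-element set $C \subseteq V(G)$ and every $B \subseteq C$, some $U \in \mathcal{U}$ satisfies $U \cap C = B$. Setting $r = O(k^2)$ yields $|\mathcal{U}| = 2^{O(k^2)} \log n$. For each $U \in \mathcal{U}$, I would run the deterministic skeleton of $\randset$ with $U$ playing the role of the random sample and collect the resulting set $Z(U)$ into the family $\{Z_1,\ldots,Z_t\}$. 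By the universality of $\mathcal{U}$, some $U_i$ hits the pattern $B_X$ on $C_X$, and the corresponding $Z_i$ satisfies the two required properties for $X$. The extra $\log n$ factor (giving $\log^2 n$ in total) absorbs an additional enumeration step: either over the size of a budget parameter used inside $\randset$, or over a second random choice (e.g.\ a guess of a ``level'' of the important-separator hierarchy) that is handled by a second, smaller universal family.

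The overall running time is $O^*(|\mathcal{U}|) = O^*(2^{O(k^2)})$ since each call to the deterministic part of $\randset$ runs in $O^*(4^k) = O^*(2^{O(k)})$ time, which is dominated by the $2^{O(k^2)}$ factor. The main technical obstacle, as noted, is to check that the analysis of Theorem~\ref{thm:main-covering-shadow} really does pin down an $O(k^2)$-sized critical set $C_X$ on which only the intersection pattern matters; once this is verified, plugging in the standard universal set construction is routine and yields the claimed bounds.
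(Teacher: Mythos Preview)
Your proposal has a genuine gap in the key combinatorial claim. You assert that the analysis of Theorem~\ref{thm:main-covering-shadow} pins down a critical set $C_X$ of size $O(k^2)$ on which only the intersection pattern of the random sample matters. This is not what the proof of Theorem~\ref{th:random-sampling} actually gives. There, the random set $P\subseteq V(G)$ succeeds provided $W\subseteq P$ and $(I\setminus W)\cap P=\emptyset$, where $W$ is the solution (size $\le k$) and $I$ is the union of all important $w$--$T$ separators of size $\le k$ over all $w\in W$. The set $I$ can have size up to $k^{2}\cdot 4^{k}$, so the critical set $C_X=W\cup I$ is single-exponential in $k$, not $O(k^2)$. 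The $2^{-O(k^2)}$ success probability arises because the coin is biased with probability $4^{-k}$, not because the critical set is small; you cannot read off $|C_X|$ from the success probability. Feeding $r=O(k^2\cdot 4^k)$ into an $(n,r)$-universal set construction would produce $2^{2^{O(k)}}\log n$ sets, which is far too many.

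The paper's derandomization (Theorem~\ref{th:derandom-sampling}) exploits the asymmetry that $|W|\le k$ is small while $|I\setminus W|\le k^2\cdot 4^k$ is large. It uses an $(n,a+b,(a+b)^2)$-\emph{splitter} with $a+b=k+k^2\cdot 4^k$: a family of $2^{O(k)}\log n$ hash functions $h:[n]\to[(a+b)^2]$ such that some $h$ is injective on $W\cup I$. For each such $h$ one then enumerates all size-$k$ subsets $H\subseteq[(a+b)^2]$ and takes $P=h^{-1}(H)$; the number of such $H$ is $\binom{(k+k^2\cdot 4^k)^2}{k}=2^{O(k^2)}$, which is where the $2^{O(k^2)}$ bound really comes from. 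The $\log^2 n$ factor is not an extra budget guess inside a single call: Algorithm~\ref{alg:covering} calls $\randset$ \emph{twice} in sequence (once on $G_1$, once on the reversed graph $G_2$ with $Z_1$ made undeletable), and the deterministic version branches over $2^{O(k^2)}\log n$ choices in each call, giving $(2^{O(k^2)}\log n)^2$ total branches.
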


Sections~\ref{subsection-imp-separators}--\ref{sec:analysis-algorithm} are devoted to the proofs of Theorems~\ref{thm:main-covering-shadow}--\ref{thm:main-covering-shadow-derandomized}.

In the \textsc{Directed Multiway Cut} algorithm of Chitnis et
al.~\cite{directed-multiway-cut}, the set $T$ was the set of terminals
and the set $\F$ was the set of all walks from one vertex of $T$ to
another vertex of $T$. Clearly, $\F$ is $T$-connected: every vertex on
a walk from $T$ to $T$ satisfies the reachability conditions.  With
this interpretation,
Theorem~\ref{thm:main-covering-shadow-derandomized} generalizes
Theorem 4.11 of \cite{directed-multiway-cut} with a better running
time. Plugging Theorem~\ref{thm:main-covering-shadow-derandomized}
into the \textsc{Directed Multiway Cut} algorithm of
\cite{directed-multiway-cut} gives an $O^*(2^{O(k^2)})$ time
algorithm, proving Theorem~\ref{th:dmc}.

In \sdfvs, the set $T$ is the solution that we want to compress and
$\F$ is the set of all $S$-closed-walks passing through some vertex of
$T$. Again, $\F$ is $T$-connected: every $S$-closed-walk goes through
$T$ (as $T$ is a solution), hence any vertex on an $S$-closed-walk is
reachable from $T$, and some vertex of $T$ is reachable from every
vertex of the $S$-closed-walk.

%In the rest of this section, we
%will always assume the set $\F$ of subgraphs to be $T$-connected.

We say that an $\mathcal{F}$-transversal $T'$ is \emph{shadowless} if
$f(T')\cup r(T')=\emptyset$. Note that if $T'$ is a shadowless
solution, then each vertex of $G\setminus T'$ is reachable
from some vertex of $T$ and can reach some vertex of $T$.  In
Section~\ref{sec:torso}, we show that given an instance of
\textsc{Disjoint \sdfvs\ Compression} and a set $Z$ as in
Theorem~\ref{thm:main-covering-shadow}, we are able to transform the
instance using the torso operation in a way that guarantees the
existence of the shadowless solution for the reduced instance.  In
Section~\ref{sec:branch-imp-sep}, we will see how we can make progress
in \textsc{Disjoint \sdfvs\ Compression} if there exists a shadowless
solution: we identify a bounded-size set of vertices such that every
shadowless solution contains at least one vertex of this
set. Therefore, we can branch on including one vertex of this set into
the solution.

\subsection{Important separators and random sampling}
\label{subsection-imp-separators}

This subsection reviews the notion of important separators and the
random sampling technique introduced in
\cite{marx-razgon-stoc-11}. These ideas were later adapted and generalized for
directed graphs in~\cite{directed-multiway-cut}. We closely
follow~\cite{directed-multiway-cut}, but we deviate from it in two
ways: we state the results in the framework of $\F$-transversal
problems and improve the random selection and its analysis to achieve
better running time. Unfortunately, this means that we have to go step-by-step
through most of the corresponding arguments of
\cite{directed-multiway-cut}. While some of the statements and proofs
are almost the same as in \cite{directed-multiway-cut}, we give a
self-contained presentation without relying on earlier work (with the
exception of the proof of Lemma~\ref{number-of-imp-sep}).

\subsubsection{Important separators}

Marx~\cite{marx-2006} introduced the concept of \emph{important
  separators} to deal with the \textsc{Undirected Multiway Cut}
problem. Since then it has been used implicitly or explicitly in
\cite{chen-improved-multiway-cut,chen-dfvs,directed-multiway-cut,DBLP:conf/icalp/KratschPPW12,lokshtanov-marx-clustering,DBLP:conf/icalp/LokshtanovR12,marx-razgon-stoc-11,almost-2-sat}
in the design of fixed-parameter algorithms. In this section, we
define and use this concept in the setting of directed graphs. Roughly
speaking, an important separator is a separator of small size that is
\emph{maximal} with respect to the set of vertices on one side. Recall
that, as in Definition~\ref{defn-sep}, the graph $G$ has a set
$V^{\infty}(G)$ of undeletable vertices and an $X-Y$ separator is
defined to be disjoint from $X\cup Y \cup V^{\infty}(G)$.

\begin{definition}{\bf (important separator)}
\label{defn-imp-sep} Let $G$ be a directed graph and let $X,Y\subseteq V$ be two disjoint non-empty sets.  A minimal $X-Y$
separator $W$ is called an \emph{important $X-Y$ separator} if there is no $X-Y$ separator $W'$ with $|W'|\leq |W|$ and
$R^{+}_{G\setminus W}(X)\subset R^{+}_{G\setminus W'}(X)$, where $R^{+}_{A}(X)$ is the set of vertices reachable from $X$ in the graph
$A$.
\end{definition}

Let $X, Y$ be disjoint sets of vertices of an \emph{undirected graph}. Then for every $k\geq 0$, it is
known~\cite{chen-improved-multiway-cut,marx-2006} that there are at most $4^{k}$ important $X-Y$ separators of size at most
$k$ for any sets $X,Y$. The next lemma shows that the same bound holds for important separators even in directed graphs.

\begin{lemma}[\cite{directed-multiway-cut}]
{\bf (number of important separators)}
%appendix.}
\label{number-of-imp-sep} Let $X,Y\subseteq V(G)$ be disjoint sets in a directed graph $G$. Then for every
$k\geq 0$ there are at most $4^k$ important $X-Y$ separators of size at most $k$. Furthermore, we can enumerate all these
separators in time $O(4^k\cdot k(|V(G)+|E(G)|))$.
\end{lemma}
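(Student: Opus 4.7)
My plan is to adapt the standard branching proof that yields the $4^{k}$ bound in undirected graphs \cite{chen-improved-multiway-cut,marx-2006} to the directed setting. The key combinatorial ingredient is the \emph{source-side-closest} minimum $X$-$Y$ separator $W^{*}$, which I would characterize as the unique minimum $X$-$Y$ cut minimizing the reachable set $R^{+}_{G\setminus W^{*}}(X)$; existence and uniqueness follow from submodularity of directed vertex cuts, since the family of minimum $X$-$Y$ cuts forms a distributive lattice under the reachability ordering, and $W^{*}$ can be computed from any maximum $X$-$Y$ flow by taking the residual-reachable side.

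With $W^{*}$ in hand I would pick any vertex $v\in W^{*}$ and set up a branching algorithm: if the current minimum cut value $\lambda$ exceeds $k$ there is nothing to output, and otherwise branch into (A) $v$ lies in the sought important separator, recursing on $(G\setminus\{v\},X,Y,k-1)$; and (B) $v$ does not lie in it, recursing on $(G,X\cup\{v\},Y,k)$ after marking $v$ as undeletable. Using the measure $\mu=2k-\lambda$, Branch~A decreases $\mu$ by at least one because $k$ drops by one while $\lambda$ drops by at most one, and Branch~B decreases $\mu$ by at least one provided $\lambda$ strictly increases upon moving $v$ into the source set. Granting the strict-increase claim, the recursion tree has at most $2^{\mu_{0}}\le 2^{2k}=4^{k}$ leaves; at each leaf one outputs the current $W^{*}$, and a straightforward inductive argument shows that every important $X$-$Y$ separator of size at most $k$ is produced at some leaf, yielding the $4^{k}$ upper bound on the count.

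The main obstacle I anticipate is proving that Branch~B strictly increases $\lambda$, since the undirected argument does not transfer verbatim. I would argue by contradiction: by minimality of $W^{*}$, every $v\in W^{*}$ has some in-neighbor $u\in R^{+}_{G\setminus W^{*}}(X)$ (otherwise $W^{*}\setminus\{v\}$ would still separate $X$ from $Y$). If a minimum $(X\cup\{v\})$-$Y$ separator $W''$ of size $\lambda$ existed, then $W''$ would be a minimum $X$-$Y$ cut with $v\notin W''$, and uncrossing $W^{*}$ with $W''$ via the minimum-cut lattice would produce a minimum $X$-$Y$ cut whose source-reachable set is a strict subset of $R^{+}_{G\setminus W^{*}}(X)$, contradicting the definition of $W^{*}$. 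For the running time, each recursion node requires only one maximum-flow computation, which I would run via Ford--Fulkerson with early termination after at most $k+1$ augmenting-path searches for $O(k(|V(G)|+|E(G)|))$ time per node; multiplying by the $4^{k}$ tree size gives the stated $O(4^{k}\cdot k(|V(G)|+|E(G)|))$ bound.
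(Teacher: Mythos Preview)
The paper does not prove this lemma: it is cited from \cite{directed-multiway-cut}, and the authors explicitly note that this is the one place where they rely on prior work rather than being self-contained. Your overall framework---branch on a vertex $v$ of a canonical minimum $X$--$Y$ cut $W^{*}$, with measure $\mu=2k-\lambda$---is indeed the standard one used in that reference.

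However, your proposal contains a genuine error in the choice of $W^{*}$: you take the source-side \emph{closest} minimum cut (minimizing $R^{+}_{G\setminus W^{*}}(X)$), whereas the correct choice is the \emph{farthest} one (maximizing $R^{+}_{G\setminus W^{*}}(X)$). With the closest cut, Branch~B need not increase $\lambda$. On the directed path $x\to a_{1}\to a_{2}\to\cdots\to a_{n}\to y$ the closest minimum cut is $\{a_{1}\}$, and after moving $a_{1}$ into $X$ the set $\{a_{2}\}$ is still an $(X\cup\{a_{1}\})$--$y$ separator of size $1=\lambda$. Your proposed uncrossing does not produce a minimum $X$--$Y$ cut with reachable set strictly inside $R^{*}=\{x\}$: the intersection of the two reachable sets is $\{x\}$ itself, so no contradiction arises and $\mu$ does not drop. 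The $4^{k}$ bound on the recursion tree therefore fails. Moreover, the closest $W^{*}$ that you output at a leaf is in general not an important separator at all (in the path example only $\{a_{n}\}$ is important), so the enumeration is also incorrect.

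The fix is to use the farthest minimum cut. Then, given a hypothetical $(X\cup\{v\})$--$Y$ separator $W''$ of size $\lambda$, set $R=R^{+}_{G\setminus W''}(X\cup\{v\})$; one has $v\in R\setminus R^{*}$, and submodularity applied to $R^{*}$ and $R$ yields a minimum $X$--$Y$ cut whose reachable set from $X$ contains $R^{*}\cup R\supsetneq R^{*}$, contradicting the maximality of $R^{*}$. With this correction, the remainder of your outline (soundness of the two branches and the flow-based running-time analysis) goes through as written.
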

For ease of notation, we now define the
following collection of important separators:
\begin{definition}\label{defn-forward-reverse-impsep} Given a graph $G$, a set $T\subseteq V(G)$, and an integer $k$, the set $\IS$ contains the set $W\subseteq V(G)$ if $W$ is an important $v-T$ separator of size at most $k$ in $G$ for some
vertex $v$ in $V(G)\setminus T$.
\end{definition}

\begin{remark}
\label{remark:number-of-impseps} \emph{It follows from Lemma~\ref{number-of-imp-sep} that $|\IS|\le 4^k\cdot |V(G)|$ and we can enumerate the sets in $\IS$ in time $O^*(4^k)$.}
\end{remark}

We now define a special type of shadows which we use later for the random sampling:

\begin{definition}{\bf (exact shadows)}
\label{defn-exact-shadow} Let $G$ be a directed graph and $T\subseteq V(G)$ a set of terminals. Let $W\subseteq V(G)\setminus
V^{\infty}(G)$ be a subset of vertices. Then for $v\in V(G)$ we say that
\begin{enumerate}
\item $v$ is in the ``\emph{exact forward shadow}''  of $W$ (with respect to $T$) if $W$ is a minimal $T-v$ separator in $G$, and
\item $v$ is in the ``\emph{exact reverse shadow}''  of $W$ (with respect to $T$) if $W$ is a minimal $v-T$ separator in $G$.
\end{enumerate}
\end{definition}

\begin{figure}[t]
\centering
\includegraphics[width=6in]{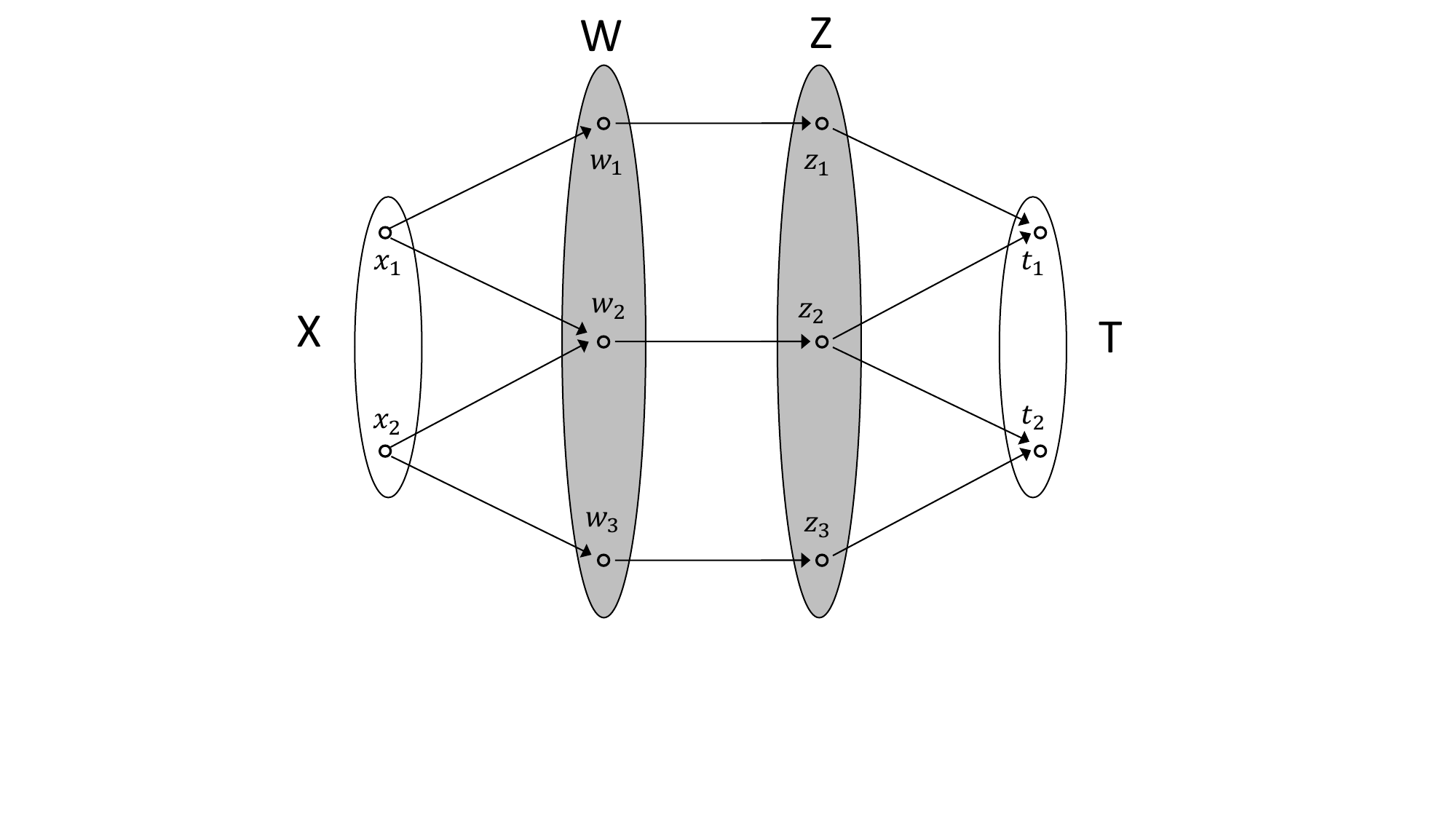}
\vspace{-20mm}
\caption{$W$ is a minimal $X-T$ separator, but it is not an important $X-T$ separator as $Z$ satisfies $|Z|=|W|$ and
 $R^{+}_{G\setminus W}(X) = X \subset X\cup W = R^{+}_{G\setminus Z}(X)$. In fact it is easy to check that the only important $X-T$ separator of size 3 is $Z$.
 If $k\geq 2$ then the set $\{z_1, z_2\}$ is in $\IS$, since it is an important $x_{1}-T$ separator of size $2$. Finally, $x_1$ belongs to the ``exact reverse shadow"
 of each of the sets $\{w_1,w_2\}, \{w_1,z_2\}, \{w_2, z_1\}$ and $\{z_1,z_2\}$, since they are all minimal $x_{1}-T$ separators. However $x_1$ does not belong to the exact reverse shadow of the set $W$ as it is not a minimal $x_{1}-T$ separator.\label{fig:important-separators}}
\end{figure}

We refer the reader to Figure~\ref{fig:important-separators} for
examples of Definitions~\ref{defn-imp-sep},
~\ref{defn-forward-reverse-impsep} and \ref{defn-exact-shadow}. Note that from the two definitions appearing in Defintion~\ref{defn-exact-shadow}, we will be using only the exact reverse shadow in the paper; the definition of exact forward shadow is given only for completeness. The
exact reverse shadow of $W$ is a subset of the reverse shadow of $W$:
it contains a vertex $v$ only if every vertex $w\in S$ is ``useful''
in separating $v$ from $T$, i.e., vertex $w$ can be reached from $v$ and $T$ can be
reached from $w$. Similarly for the forward shadow. This slight difference between the shadow and the exact shadow will be crucial in
the analysis of the algorithm (Section~\ref{sec:analysis-algorithm}).

The weaker version of the random sampling described in Section~\ref{sec:random-sampling-1}
(Theorem~\ref{th:random-sampling-bad}) randomly selects members of $\IS$ and creates a subset by taking the union of the exact reverse
shadows of these sets.  The following lemma will be used to give an upper bound on the probability that a vertex is covered
by the union.
\begin{lemma}
\label{lemma-bounding-number-of-shadows-per-vertex} Let $z$ be any vertex. Then there are at most $4^{k}$ members of $\IS$ that
contain $z$ in their exact reverse shadows.
%Furthermore we can enumerate all these shadows in time
%$O^{*}(4^{k})$.
\end{lemma}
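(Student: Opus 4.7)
The plan is to show the following helper claim: every $W\in\IS$ having $z$ in its exact reverse shadow is itself an important $z$-$T$ separator of size at most $k$. Combined with Lemma~\ref{number-of-imp-sep} applied with source $\{z\}$ and target $T$, this immediately yields the desired bound of $4^k$ on the number of such $W$.

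To prove the helper claim, I would argue by contradiction. Fix $W\in\IS$ that is an important $v$-$T$ separator for some $v\in V(G)\setminus T$, and suppose $W$ is a minimal $z$-$T$ separator but is not an important $z$-$T$ separator. Then there exists a (WLOG minimal) $z$-$T$ separator $W^{*}$ with $|W^{*}|\leq|W|$ and $R_z:=R^+_{G\setminus W}(z)\subsetneq R^{*}_z:=R^+_{G\setminus W^{*}}(z)$. Writing $R_v:=R^+_{G\setminus W}(v)$, my plan is to build a $v$-$T$ separator $\tilde W$ of size at most $|W|$ with $R^+_{G\setminus\tilde W}(v)\supsetneq R_v$ by uncrossing: I take $\tilde W$ to be the out-boundary of $R_v\cup R^{*}_z$, that is, the set of vertices outside $R_v\cup R^{*}_z$ having an in-edge from $R_v\cup R^{*}_z$. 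Such a $\tilde W$ would contradict the importance of $W$ as a $v$-$T$ separator, completing the argument.

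That $\tilde W$ is a $v$-$T$ separator is immediate: $v\in R_v\cup R^{*}_z$, the set $T$ is disjoint from $R_v\cup R^{*}_z$ (since $W$ and $W^{*}$ separate their respective sources from $T$), and $\tilde W$ is precisely the out-boundary. For the strict enlargement, I would exhibit a vertex $w\in W\cap R^{*}_z$, which exists because $R_z\subsetneq R^{*}_z$ forces some walk from $z$ in $G\setminus W^{*}$ reaching $R^{*}_z\setminus R_z$ to cross $W$ at a vertex lying in $R^{*}_z$. Using the minimality of $W$ as a $v$-$T$ separator, I would then produce a $v$-to-$w$ walk whose interior lies in $R_v$; such a walk lies entirely in $V\setminus\tilde W$ (since $R_v\cup R^{*}_z\subseteq V\setminus\tilde W$), witnessing $w\in R^+_{G\setminus\tilde W}(v)\setminus R_v$.

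The hard part will be the size bound $|\tilde W|\leq|W|$. I plan to derive it from the submodularity of the vertex out-boundary: letting $b(S)$ denote the number of vertices outside $S$ with an in-edge from $S$, one has $b(R_v)+b(R^{*}_z)\geq b(R_v\cup R^{*}_z)+b(R_v\cap R^{*}_z)$; since $b(R_v)=|W|$, $b(R^{*}_z)=|W^{*}|$, and $b(R_v\cup R^{*}_z)=|\tilde W|$, the task reduces to showing $b(R_v\cap R^{*}_z)\geq|W^{*}|$. This is the most delicate step: I plan to obtain it via a secondary argument exploiting the minimality of $W^{*}$ together with a Menger-style charging. Specifically, for each $w^{*}\in W^{*}$ one selects a $z$-walk through $w^{*}$ using an in-neighbor $x\in R^{*}_z$ of $w^{*}$; this walk, being blocked by $W$, must cross $W$ at or before $w^{*}$, and the crossing vertex lies in $W\cap(R^{*}_z\cup W^{*})$. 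Choosing such walks carefully, the crossings can be made distinct, producing an injection $W^{*}\hookrightarrow W\cap(R^{*}_z\cup W^{*})$ that, combined with submodularity (or used directly in the degenerate case $R_v\cap R^{*}_z=\emptyset$, where one rewrites $\tilde W=(W\setminus R^{*}_z)\cup(W^{*}\setminus R_v)$ and counts), delivers the required bound.
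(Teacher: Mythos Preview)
Your high-level plan is exactly the paper's: the paper states precisely your helper claim as Lemma~\ref{lemma-forward-impsep-exact-back-shadow} and then applies Lemma~\ref{number-of-imp-sep} with source $\{z\}$ and target $T$. So the reduction to the helper claim is fine.

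Where you diverge from the paper, and where the proposal has a genuine gap, is in the proof of the helper claim. The paper does \emph{not} construct a new separator by uncrossing; it shows directly that the separator $W^{*}$ witnessing non-importance of $W$ as a $z$--$T$ separator is already a $v$--$T$ separator with $R^{+}_{G\setminus W}(v)\subseteq R^{+}_{G\setminus W^{*}}(v)$, contradicting the importance of $W$ for $v$. Concretely, the paper first proves that $W^{*}$ is a $(W\setminus W^{*})$--$T$ separator (any $x\in W\setminus W^{*}$ has a $z$--$x$ path inside $R_z\subseteq R^{*}_z$, so an $x$--$T$ path avoiding $W^{*}$ would yield a $z$--$T$ path avoiding $W^{*}$), and from this deduces that $W^{*}$ is a $v$--$T$ separator and that $R_v\subseteq R^{+}_{G\setminus W^{*}}(v)$. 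No submodularity, no new $\tilde W$, no size accounting is needed.

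Your route via $\tilde W$ runs into trouble at the size bound. Submodularity reduces $|\tilde W|\le|W|$ to $b(R_v\cap R^{*}_z)\ge|W^{*}|$, but there is no reason $z\in R_v$, so $R_v\cap R^{*}_z$ need not contain $z$ and its out-boundary need not be a $z$--$T$ separator at all; moreover $W^{*}$ is not assumed to be a minimum $z$--$T$ cut, so even when it is a separator the bound does not follow. Your ``Menger-style charging'' toward an injection $W^{*}\hookrightarrow W\cap(R^{*}_z\cup W^{*})$ is not justified as stated: minimality of $W^{*}$ gives, for each $w^{*}$, one $z$--$w^{*}$ path meeting $W^{*}$ only at $w^{*}$, but these paths need not be internally disjoint (there need not be $|W^{*}|$ internally disjoint $z$--$W^{*}$ paths unless $|W^{*}|$ equals the min cut), so distinct $w^{*}$ may charge to the same vertex of $W$. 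And even granting such an injection, you have not explained how $|W\cap(R^{*}_z\cup W^{*})|\ge|W^{*}|$ would yield $b(R_v\cap R^{*}_z)\ge|W^{*}|$; your ``degenerate case'' rewrite $\tilde W\subseteq(W\setminus R^{*}_z)\cup(W^{*}\setminus R_v)$ still needs $|W^{*}\setminus R_v|\le|W\cap R^{*}_z|$, which is again unproved.

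In short: keep your reduction to the helper claim, but replace the uncrossing argument by the direct one---show that $W^{*}$ itself separates $v$ from $T$ and enlarges $R_v$. That is both simpler and correct.
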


For the proof of Lemma~\ref{lemma-bounding-number-of-shadows-per-vertex}, we need to establish first the following:

\begin{lemma}
\label{lemma-forward-impsep-exact-back-shadow} If $W\in \IS$ and $v$ is in the exact reverse shadow of $W$, then $W$ is
an important $v-T$ separator.
\end{lemma}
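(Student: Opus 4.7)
The plan is to proceed by contradiction. Assume $W$ is an important $u-T$ separator for some vertex $u$ (witnessing $W\in\IS$) but that $W$ is not an important $v-T$ separator. Then there exists a $v-T$ separator $W'$ with $|W'|\le|W|$ and $R^{+}_{G\setminus W}(v)\subsetneq R^{+}_{G\setminus W'}(v)$; by discarding redundant vertices of $W'$ one at a time, the $v$-reachable set only grows, so I may further assume $W'$ is itself a minimal $v-T$ separator. The central structural fact I would establish first is: whenever $A$ is a minimal $x-T$ separator, $A$ coincides with the out-boundary of its source side, namely $A=N^{+}(R^{+}_{G\setminus A}(x))\setminus(R^{+}_{G\setminus A}(x)\cup T)$. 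The $\subseteq$ direction uses minimality (every $w\in A$ lies on an $x-T$ path avoiding $A\setminus\{w\}$, whose prefix lives in $R^{+}_{G\setminus A}(x)$), while the $\supseteq$ direction is immediate from the closure of the source side under reachability in $G\setminus A$. Applying this to $W$ both as a minimal $u-T$ and as a minimal $v-T$ separator shows that $W$ is simultaneously the out-boundary of $P_u:=R^{+}_{G\setminus W}(u)$ and of $P_v:=R^{+}_{G\setminus W}(v)$, hence also of the union $P_u\cup P_v$; in particular $W$ is a $(P_u\cup P_v)-T$ separator.

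Next, I would uncross $W$ and $W'$ to produce a $u-T$ separator contradicting the importance of $W$ for $u$. Split $W=W_{\mathrm{in}}\sqcup W_{\mathrm{out}}$ with $W_{\mathrm{in}}:=W\cap R^{+}_{G\setminus W'}(v)$ and $W_{\mathrm{out}}:=W\setminus R^{+}_{G\setminus W'}(v)$; the strict containment of the two $v$-reachable sides combined with the structural fact gives $W_{\mathrm{in}}\neq\emptyset$. Using the standard vertex-splitting trick that converts vertex cuts into edge cuts, I would invoke submodularity of $f(A)=|E(A,\bar A)|$ to define $W^{\star}:=W_{\mathrm{out}}\cup W'_{\mathrm{rel}}$, where $W'_{\mathrm{rel}}$ is the portion of $W'$ lying on the out-boundary of the enlarged source set $P_u\cup R^{+}_{G\setminus W'}(v)$. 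Submodularity yields $|W^{\star}|\le|W|$; a direct path-chasing shows $W^{\star}$ is a $u-T$ separator (any $u-T$ path in $G\setminus W^{\star}$ either enters $R^{+}_{G\setminus W'}(v)$ and is blocked by $W'_{\mathrm{rel}}$, or stays outside and meets $W_{\mathrm{out}}$); and the placement of $W'_{\mathrm{rel}}$ strictly further from $u$ than $W_{\mathrm{in}}$ was guarantees $R^{+}_{G\setminus W^{\star}}(u)\supsetneq R^{+}_{G\setminus W}(u)$, contradicting that $W$ is an important $u-T$ separator.

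The main obstacle is the case in which $P_u$ and $P_v$ interact only through $W$ itself (as in the toy example $u\to w\to t$, $v\to w\to t$, where the two source sides $R^{+}_{G\setminus W}(u)$ and $R^{+}_{G\setminus W}(v)$ are disjoint apart from their common forward boundary $W$). Here the naive intuition that ``pushing on the $v$-side pushes on the $u$-side'' is not geometrically evident, and this is exactly where the structural characterisation of the first paragraph does the work: it forces $W$ to be the common out-boundary of $P_u$ and $P_v$, so that modifying $W$ at vertices of $W_{\mathrm{in}}$ necessarily modifies the boundary of $P_u$, which is precisely what allows the uncrossing to push the separator strictly further from $u$ as well. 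Once $W^{\star}$ is verified to be a $u-T$ separator with strictly larger $u$-reachable set and size at most $|W|$, the contradiction with the importance of $W$ as a $u-T$ separator is immediate.
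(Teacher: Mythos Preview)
Your overall strategy matches the paper's: assume $W$ is an important $u-T$ separator but not an important $v-T$ separator, and derive a $u-T$ separator contradicting importance. The structural fact that a minimal $x-T$ separator equals the out-boundary of $R^{+}_{G\setminus A}(x)$ is correct and is implicitly used in the paper as well. However, your construction of the contradicting separator via uncrossing has a genuine gap.

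The unjustified step is ``Submodularity yields $|W^{\star}|\le|W|$.'' With $P_u=R^{+}_{G\setminus W}(u)$ and $Q=R^{+}_{G\setminus W'}(v)$, submodularity of the cut function in the split graph only gives
\[
f(P_u\cup Q)+f(P_u\cap Q)\ \le\ f(P_u)+f(Q)=|W|+|W'|,
\]
so $|W^{\star}|=f(P_u\cup Q)\le|W|$ would require $f(P_u\cap Q)\ge|W'|$. You give no reason for this, and in general there is none: $P_u\cap Q$ need not contain $v$ (there is no assumption that $v$ is reachable from $u$ in $G\setminus W$) nor $u$ (no assumption that $u$ is reachable from $v$ in $G\setminus W'$), so the out-boundary of $P_u\cap Q$ is not forced to be a $v-T$ or $u-T$ separator, and even if it were, nothing lower-bounds its size by $|W'|$ (we only assumed $W'$ is \emph{minimal}, not \emph{minimum}). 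The sentence about $W$ being the common out-boundary of $P_u$ and $P_v$ does not rescue this, because the uncrossing you perform is with $Q$, not with $P_v$.

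The paper sidesteps the whole uncrossing by showing that $W'$ \emph{itself} is the desired witness against importance at $u$. The key observation is: since $W$ is a minimal $v-T$ separator, every $x\in W\setminus W'$ has an in-neighbor in $R^{+}_{G\setminus W}(v)\subseteq R^{+}_{G\setminus W'}(v)$, hence $x$ is reachable from $v$ in $G\setminus W'$; as $W'$ is a $v-T$ separator, it follows that $W'$ is a $(W\setminus W')-T$ separator. From this one gets immediately that $W'$ is a $u-T$ separator (any $u-T$ path hits $W$, and the suffix from the first $W$-vertex must then hit $W'$) and that $R^{+}_{G\setminus W}(u)\subseteq R^{+}_{G\setminus W'}(u)$. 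Since $|W'|\le|W|$ and $W'\neq W$, this contradicts the importance of $W$ for $u$. No size accounting beyond the hypothesis $|W'|\le|W|$ is needed.
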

\begin{proof}
  Let $w$ be the witness that $W$ is in $\IS$, i.e., $W$ is an
  important $w-T$ separator in $G$. Let $v$ be any vertex in the exact
  reverse shadow of $W$, which means that $W$ is a minimal $v-T$
  separator in $G$. Suppose that $W$ is not an important $v-T$
  separator. Then there exists a $v-T$ separator $W'$ such that
  $|W'|\leq |W|$ and $R^{+}_{G\setminus W}(v)\subset R^{+}_{G\setminus
    W'}(v)$. We will arrive to a contradiction by showing that
  $R^{+}_{G\setminus W}(w)\subset R^{+}_{G\setminus W'}(w)$, i.e., $W$
  is not an important $w-T$ separator.

First, we claim that $W'$ is a $(W\setminus W')-T$ separator. Suppose that there is a path $P$ from some $x\in W\setminus W'$
to $T$ that is disjoint from $W'$. As $W$ is a minimal $v-T$ separator, there is a path $Q$ from $v$ to $x$ whose internal
vertices are disjoint from $W$. Furthermore, $R^{+}_{G\setminus W}(v)\subset R^{+}_{G\setminus W'}(v)$ implies that  the
internal vertices of $Q$ are disjoint from $W'$ as well. Therefore, concatenating $Q$ and $P$ gives a path from $v$ to $T$
that is disjoint from $W'$, contradicting the fact that $W'$ is a $v-T$ separator.

We show that $W'$ is a $w-T$ separator and its existence contradicts the assumption that $W$ is an important $w-T$ separator.
First we show that $W'$ is a $w-T$ separator. Suppose that there is a $w-T$ path $P$ disjoint from $W'$. Path $P$ has to go
through a vertex $y\in W\setminus W'$ (as $W$ is a $w-T$ separator). Thus by the previous claim, the subpath of $P$ from $y$
to $T$ has to contain a vertex of $W'$, a contradiction.

Finally, we show that $R^{+}_{G\setminus W}(w)\subseteq
R^{+}_{G\setminus W'}(w)$. As $W\neq W'$ and $|W'|\le |W|$, this will
contradict the assumption that $W$ is an important $w-T$ separator.
Suppose that there is a vertex $z \in R^{+}_{G\setminus W}(w)\setminus
R^{+}_{G\setminus W'}(w)$ and consider a $w-z$ path that is fully
contained in $R^{+}_{G\setminus W}(w)$, i.e., disjoint from $W$.  As
$z\not \in R^{+}_{G\setminus W'}(w)$, path $Q$ contains a vertex $q\in
W'\setminus W$.  Since $W'$ is a minimal $v-T$ separator, there is a
$v-T$ path that intersects $W'$ only in $q$. Let $P$ be the subpath of
this path from $q$ to $T$. If $P$ contains a vertex $r\in W$, then the
subpath of $P$ from $r$ to $T$ contains no vertex of $W'$ (as $z\neq
r$ is the only vertex of $W'$ on $P$), contradicting our earlier claim
that $W'$ is a $(W\setminus W')-T$ separator. Thus $P$ is disjoint
from $W$, and hence the concatenation of the subpath of $Q$ from $w$
to $q$ and the path $P$ is a $w-T$ path disjoint from $W$, a
contradiction.
\end{proof}

Lemma~\ref{lemma-bounding-number-of-shadows-per-vertex} easily follows from
Lemma~\ref{lemma-forward-impsep-exact-back-shadow}.  Let $J$ be a member of $\IS$ such that $z$ is in the exact reverse shadow of $J$.
By Lemma~\ref{lemma-forward-impsep-exact-back-shadow}, $J$ is an important $z-T$ separator. By Lemma~\ref{number-of-imp-sep},
there are at most $4^{k}$ important $z-T$ separators of size at most $k$ and hence $z$ belongs to at most $4^{k}$ exact reverse
shadows.

%\medskip

\begin{remark}\label{remark:imp-of-exact-shadows}
  \emph{It is crucial to distinguish between ``reverse shadow'' and
    ``exact reverse shadow'':
    Lemma~\ref{lemma-forward-impsep-exact-back-shadow} (and hence
    Lemma~\ref{lemma-bounding-number-of-shadows-per-vertex}) does not
    remain true if we remove the word ``exact.'' Consider the
    following example (see Figure~\ref{fig:imp-of-exact-new}). Let
    $a_1$, $\dots$, $a_r$ be vertices such that there is an edge going
    from every $a_i$ to every vertex of $T=\{t_1, t_2, \ldots,
    t_k\}$. For every $1\le i \le r$, let $b_i$ be a vertex with an
    edge going from $b_i$ to $a_i$. For every $1\le i < j \le r$, let
    $c_{i,j}$ be a vertex with two edges going from $c_{i,j}$ to $a_i$
    and $a_j$. Then every set $\{a_i,a_j\}$ is in $\IS$, since it is
    an important $c_{i,j}-T$ separator; and every set $\{a_i\}$ is in
    $\IS$ as well, as it is an important $b_i-T$ separator. Every
    $b_i$ is in the reverse shadow of $\{a_j, a_i\}$ for $1\leq i\neq
    j\leq r$. However, $b_i$ is in the {\em exact} reverse shadow of
    exactly one member of $\IS$, the set $\{a_i\}$.}
\end{remark}

\begin{figure}[t]
\centering
\includegraphics[width=7in]{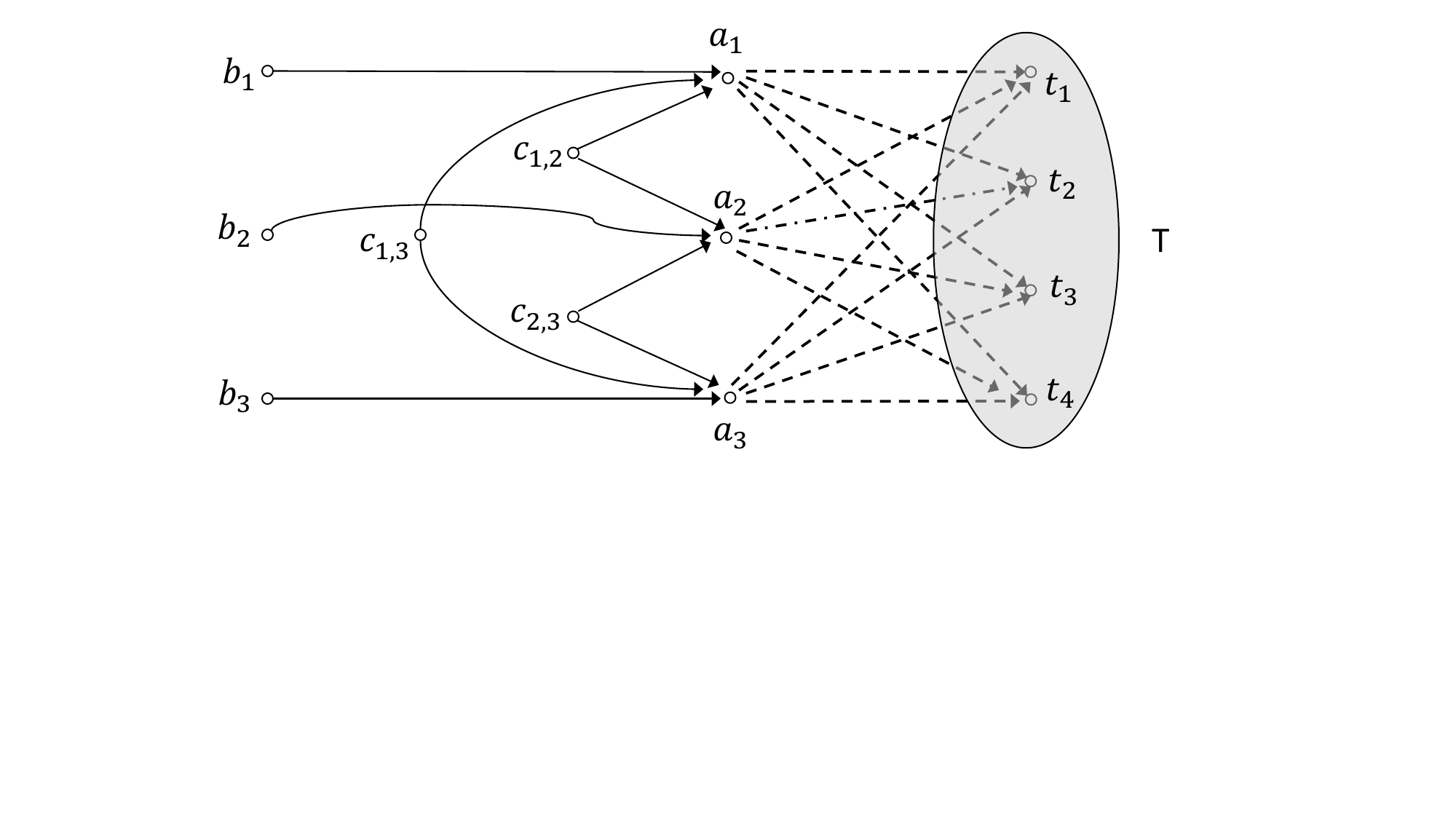}
\vspace{-50mm}
\caption{An illustration of Remark~\ref{remark:imp-of-exact-shadows} in the special case when $k=4$ and $r=3$. \label{fig:imp-of-exact-new}}
\end{figure}

\subsubsection{Random sampling} \label{sec:random-sampling-1}

In this subsection, we describe the technique of random sampling of important separators, which is crucial to the proof of
Theorem~\ref{thm:main-covering-shadow}. This technique was introduced in~\cite{marx-razgon-stoc-11} and was adapted to directed
graphs in~\cite{directed-multiway-cut}. We follow it closely and try to present it in a self-contained way that might be
useful for future applications. In Section~\ref{sec:torso}, in order to reduce the problem (via the ``torso" operation) to a shadowless instance, we
need a set $Z$ that has the following property:
\begin{center}
\noindent\framebox{\begin{minipage}{6.00in}
\textbf{Property (*)}\\
There is an $\F$-transversal $T^*$ of size at most $k$ such that $Z$ covers the shadow of $T^*$, but $Z$ is disjoint from $T^*$.
\end{minipage}}
\end{center}
%\begin{quote}
%There is a $\F$-transversal $T^*$ such that $Z$ covers the shadow of $T^*$, but $Z$ is disjoint from $T^*$. \hfill (*)
%\end{quote}

Of course, when we are trying to construct this set $Z$, we do not
know anything about the $\F$-transversals of the instance. In
particular we have no way of checking if a given set $Z$ satisfies
this property. Nevertheless, we use a randomized procedure that
creates a set $Z$ and we give a lower bound on the probability that
$Z$ satisfies the requirements. For the construction of this set $Z$,
one can use a very specific probability distribution that was
introduced in \cite{marx-razgon-stoc-11}. This probability
distribution is based on randomly selecting ``important separators''
and taking the union of their shadows. In this paper, we modify the
selection of important separators in a way that improves the success
probability.  The precise description of the randomized procedure and
the properties of the distribution it creates is described in
Theorems~\ref{th:random-sampling-bad} and \ref{th:random-sampling}.
Using the theory of splitters we can derandomize the randomized selection into a deterministic
algorithm that returns a bounded number of sets such that at least one
of them satisfies the required property
(Section~\ref{subsection-derandomization}).

First we focus on the reverse shadow and try to ensure that (with good
probability) $Z$ covers the reverse shadow of $T^*$. Then in
Section~\ref{sec:analysis-algorithm}, we argue that, after reversing
the orientation of the edges of the graph, a second application of the
random selection can be used to cover the forward shadow. Thus in this
section, we consider only the reverse shadow of $T^*$.

Roughly speaking, we want to select a random set $Z$ such that for every every $(W,Y)$ where $Y$ is in the reverse shadow of $W$,
the probability that $Z$ is disjoint from $W$ but contains $Y$ can be bounded from below. We can guarantee such a lower bound
only if $(W,Y)$ satisfies two conditions. First, it is not enough that $Y$ is in the shadow of $W$ (or in other words, $W$ is
an $Y-T$ separator), but $W$ should contain important separators separating the vertices of $Y$ from $T$ (see
Theorems~\ref{th:random-sampling-bad} and \ref{th:random-sampling} for the exact statement). Second, $W$ and $Y$ have to be disjoint, otherwise
there is clearly no set covering $Y$ and disjoint from $W$.
In other words, a vertex of $W$ cannot
be in the reverse shadow of other vertices of $W$, which is expressed by the following technical definition:

\begin{definition}{\bf (thin)}
\label{defn-thin-set} Let $G$ be a directed graph. We say that a set $W\subseteq V(G)$ is \emph{thin} in $G$ if there is no
$v\in W$ such that $v$ belongs to the reverse shadow of $W\setminus v$ with respect to $T$.
\end{definition}

We first give an easy version of the random sampling, which only gives a double exponentially small lower bound on the
probability of constructing a set $Z$ with the required properties.

\begin{theorem}{\bf (random sampling)}
\label{th:random-sampling-bad} There is an algorithm $\randset(G,T,k)$ that produces a random set $Z\subseteq V(G)\setminus T$
in time $O^*(4^k)$ such that the following holds. Let $W$ be a \emph{thin} set with $|W|\le k$, and let $Y$ be a set such that
for every $v\in Y$ there is an important $v-T$ separator $W'\subseteq W$. For every such pair $(W,Y)$, the probability that
the following two events both occur is $2^{-2^{O(k)}}$:
\begin{enumerate}
\item $W\cap Z=\emptyset$, and
\item $Y\subseteq Z$.
\end{enumerate}
\end{theorem}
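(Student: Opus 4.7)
The plan is to use the ``select subsets of $\IS$ independently at random'' scheme, and read off what survives in the exact reverse shadow. Concretely, the algorithm $\randset(G,T,k)$ first enumerates the family $\IS$ using Lemma~\ref{number-of-imp-sep} / Remark~\ref{remark:number-of-impseps} in time $O^{*}(4^{k})$, then independently with probability $\tfrac12$ marks each $J\in\IS$, and finally returns $Z$ defined as the union over all marked $J$ of the exact reverse shadow of $J$ with respect to $T$. Computing the exact reverse shadow of a single $J$ is a polynomial-time reachability/minimality check, so the overall running time is $O^{*}(4^{k})$ as claimed.

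Fix now a thin set $W$ with $|W|\le k$ and a set $Y$ such that for each $v\in Y$ some important $v$-$T$ separator $W_{v}'\subseteq W$ exists. The first step of the analysis is to identify two disjoint families of sets in $\IS$ whose joint behaviour under the random marking already forces the two desired events. For the first family, pick for each $v\in Y$ one such $W_{v}'$; since $W_{v}'$ is an important $v$-$T$ separator of size at most $k$, it lies in $\IS$, and $v$ is in its exact reverse shadow because $W_{v}'$ is minimal as a $v$-$T$ separator. Let $\mathcal{S}^{*}=\{W_{v}': v\in Y\}$; as each $W_{v}'\subseteq W$, there are at most $2^{|W|}\le 2^{k}$ distinct sets in $\mathcal{S}^{*}$. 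For the second family, for each $w\in W$ let $\mathcal{D}_{w}\subseteq\IS$ be the members having $w$ in their exact reverse shadow; by Lemma~\ref{lemma-bounding-number-of-shadows-per-vertex} we have $|\mathcal{D}_{w}|\le 4^{k}$, hence $\mathcal{D}=\bigcup_{w\in W}\mathcal{D}_{w}$ has size at most $k\cdot 4^{k}$.

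The key structural step, and the place where thinness is used, is to show $\mathcal{S}^{*}\cap\mathcal{D}=\emptyset$. Suppose some $W_{v}'\in\mathcal{S}^{*}$ lies in $\mathcal{D}_{w}$ for some $w\in W$. By definition of exact reverse shadow, $w\notin W_{v}'$, so $W_{v}'\subseteq W\setminus\{w\}$; and $W_{v}'$ being a $w$-$T$ separator makes its superset $W\setminus\{w\}$ a $w$-$T$ separator as well, putting $w$ in the reverse shadow of $W\setminus\{w\}$, contradicting thinness of $W$.

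Now the event we want is ``every $J\in\mathcal{S}^{*}$ is marked and every $J\in\mathcal{D}$ is unmarked''. Because the two families are disjoint and the markings are independent, this has probability exactly $2^{-|\mathcal{S}^{*}|-|\mathcal{D}|}\ge 2^{-2^{k}-k\cdot 4^{k}}=2^{-2^{O(k)}}$. Conditional on it, first $Y\subseteq Z$: each $v\in Y$ is witnessed by the marked $W_{v}'\in\mathcal{S}^{*}$ whose exact reverse shadow contains $v$. Second $W\cap Z=\emptyset$: if some $w\in W$ were in $Z$, then $w$ would lie in the exact reverse shadow of some marked $J$, forcing $J\in\mathcal{D}_{w}\subseteq\mathcal{D}$, contradicting that no set of $\mathcal{D}$ is marked. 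The main technical point to get right is the thinness argument above, since without it $\mathcal{S}^{*}$ and $\mathcal{D}$ could overlap and the two conditioning events would be incompatible; everything else reduces to bookkeeping on the independent coin flips and invoking Lemma~\ref{lemma-bounding-number-of-shadows-per-vertex} to bound $|\mathcal{D}|$.
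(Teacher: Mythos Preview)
Your proof is correct and follows essentially the same approach as the paper. The only cosmetic differences are that the paper flips a coin for each exact reverse shadow (as an element of the family $\X$) rather than for each separator $J\in\IS$, and the paper requires \emph{all} members of $\IS$ contained in $W$ to be selected rather than just one witness $W_v'$ per $v\in Y$; neither change affects the argument or the bound, and your thinness step is exactly the observation the paper uses to show the two families are disjoint.
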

\begin{proof}

The algorithm $\randset(G,T,k)$ first enumerates the collection $\IS$; let $\X$ be the set of all exact reverse
shadows of these sets. Note that two different sets in $\IS$ have different exact reverse shadows: if $X$ is the exact reverse shadow of $J\in \IS$, then $J$ is exactly the set of vertices not in $X$ and having an inneighbor in $X$.
 By Lemma~\ref{number-of-imp-sep}, the size of $\X$ is $O^*(4^k)$ and can be constructed in time
$O^*(4^k)$. Let $\mathcal{X'}$ be the subset of $\X$ where each element from $\mathcal{X}$ occurs with probability
$\frac{1}{2}$ independently at random. Let $Z$ be the union of the exact reverse shadows in $\mathcal{X'}$. We claim that the
set $Z$ satisfies the requirement of the theorem.

Let us fix a pair $(W,Y)$ as in the statement of the theorem.  Let $X_1,X_2,\ldots,X_d\in \X$ be the exact reverse shadows of
every member of $\IS$ that is a subset of $W$. As $|W|\leq k$, we have $d\le 2^k$.  By the assumption that $W$ is \emph{thin}, we have
$X_j\cap W=\emptyset$ for every $j\in[d]$.  Now consider the following events:
\begin{enumerate}
\item[(E1)]$W\cap Z=\emptyset$
\item[(E2)] $X_{j}\subseteq Z$ for every $j\in [d]$
\end{enumerate}
First we show that (E2) implies that $Y\subseteq Z$: $v\in Y$ implies there is an important separator $W'\subseteq W$, i.e., there is some $\ell\in [d]$ such that $X_{\ell}$ is the exact reverse shadow of $W$. Also note that $v\in X_{\ell}$ since $W'$ is a minimal (in fact important) $v-T$ separator. Since $X_j\subseteq Z$ for every $j\in [d]$, we have that $v\in Z$. This shows that $Y\subseteq Z$.

Our goal is to show that both events (E1) and (E2) occur with probability $2^{-2^{O(k)}}$. Let $A=\{X_1,X_2,\ldots,X_d\}$ and $B=\{X\in \X\ |\ X\cap W \neq \emptyset \}$. By Lemma
\ref{lemma-bounding-number-of-shadows-per-vertex}, each vertex of $W$ is contained in the exact reverse shadows of
at most $4^k$ members of $\IS$. Thus $|B|\leq |W|\cdot 4^{k}\leq k\cdot 4^{k}$. If no exact reverse shadow from $B$ is selected, then event (E1)
holds. If every exact reverse shadow from $A$ is selected, then event (E2) holds. Thus the probability that both (E1) and (E2)
occur is bounded from below by the probability of the event that every element from $A$ is selected and no element from $B$ is
selected. Note that $A$ and $B$ are disjoint: $A$ contains only sets disjoint from $W$, while $B$ contains only sets
intersecting $W$. Therefore, the two events are independent and the probability that both events occur is at least
\[
\Big(\frac{1}{2}\Big)^{2^{k}}\Big(1-\frac{1}{2}\Big)^{k\cdot 4^{k}} = 2^{-2^{O(k)}}
\]
\end{proof}

We now give an improved version of the random sampling that gives a stronger lower bound
on the success probability than the one guaranteed by Theorem~\ref{th:random-sampling-bad}.
%
%
%We need the following simple property of thin sets: if $M$ is a thin set, then it cannot contain any member of $\IS$ whose exact reverse
%shadow intersects $M$.\daniel{This statement is very out of place here, it would be better in the following proof.}
%
%\begin{lemma}
%Let $M$ be a thin set. If $X$ is the exact reverse shadow of some $J\in \IS$ such that $X\cap M\neq \emptyset$, then $J\nsubseteq
%M$. \label{lem:imp-of-thin}
%\end{lemma}
%\begin{proof}
%Let $X\cap M = M'$. Now $X$ is the exact reverse shadow of $J$ implies that $J$ is a minimal $X-T$ separator. If $J\subseteq
%M$, then this means that $M\setminus X$ is also an $X-T$ separator, i.e., $M'$ lies in the reverse shadow of $M\setminus M'$
%contradicting the fact that $M$ is a thin set.
%\end{proof}
%
Recall that in
Theorem~\ref{th:random-sampling-bad}, we randomly selected members of $\IS$ and took $Z$ as the union of the exact reverse shadows of
the selected sets. However, we only had single-exponential upper bounds on both types of exact reverse shadows: number of
shadows intersecting $W$ was at most $k\cdot 4^k$ and the number of exact reverse shadows of every subset of
$W$ is at most $2^k$. In Theorem~\ref{th:random-sampling}, we take a different view: we randomly select a subset of vertices
$P$ and take $Z$ as the union of exact reverse shadows of every subset of $\mathcal{P}$. This will give us a
stronger (single exponentially small) lower bound on the probability that the constructed set $Z$ satisfies the required
properties.

\begin{theorem}{\bf (improved random sampling)}
\label{th:random-sampling} There is an algorithm $\randset(G,T,k)$ that produces a random set $Z\subseteq V(G)\setminus T$ in
time $O^*(4^k)$ such that the following holds. Let $W$ be a \emph{thin} set with $|W|\le k$, and let $Y$ be a set such that
for every $v\in Y$ there is an important $v-T$ separator $W'\subseteq W$. For every such pair $(W,Y)$, the probability that
the following two events both occur is $2^{-O(k^2)}$:
\begin{enumerate}
\item $W\cap Z=\emptyset$, and
\item $Y\subseteq Z$.
\end{enumerate}
\end{theorem}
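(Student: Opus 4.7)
The plan is to modify the sampling distribution of Theorem~\ref{th:random-sampling-bad}: rather than independently picking each exact reverse shadow in $\X$ with probability $1/2$, we independently sample each vertex of $V(G)\setminus T$ into a random set $\mathcal{P}$ with a small probability $p$, and then define
\[
Z \;=\; \bigcup_{A\in\IS,\ A\subseteq \mathcal{P}}\ \big(\text{exact reverse shadow of }A\big).
\]
By Lemma~\ref{number-of-imp-sep} and Remark~\ref{remark:number-of-impseps}, $\IS$ can be enumerated and the exact reverse shadow of each of its members computed in $O^{*}(4^{k})$ time, so the total running time of $\randset(G,T,k)$ remains $O^{*}(4^{k})$.

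Fix $(W,Y)$ as in the statement and focus on the event $W\subseteq\mathcal{P}$, which occurs with probability $p^{|W|}\ge p^{k}$. Conditional on this event, event $(E2)$ holds automatically: for every $v\in Y$ the assumed important $v$--$T$ separator $W'_{v}\subseteq W$ is a subset of $\mathcal{P}$, and $v$ lies in the exact reverse shadow of $W'_{v}$ because $W'_{v}$ is in particular a minimal $v$--$T$ separator, so $v\in Z$.

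Next, we bound the conditional probability that $(E1)$ fails. If some $u\in W$ ends up in $Z$, then $u$ lies in the exact reverse shadow of some ``bad'' $A\in\IS$ with $A\subseteq\mathcal{P}$. By Lemma~\ref{lemma-bounding-number-of-shadows-per-vertex} each $u\in W$ lies in the exact reverse shadow of at most $4^{k}$ members of $\IS$, so there are at most $k\cdot 4^{k}$ bad sets in total. A short thinness argument shows that every bad $A$ satisfies $A\not\subseteq W$: if some $u\in W$ were in the exact reverse shadow of an $A\subseteq W$, then $A$ would be a $u$--$T$ separator disjoint from $u$, so $A\subseteq W\setminus\{u\}$ would make $W\setminus\{u\}$ a $u$--$T$ separator, contradicting the thinness of $W$. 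Hence $|A\setminus W|\ge 1$ for every bad $A$, and conditional on $W\subseteq\mathcal{P}$ we have $P[A\subseteq\mathcal{P}]=p^{|A\setminus W|}\le p$. A union bound therefore yields $P[(E1)^{c}\mid W\subseteq\mathcal{P}]\le k\cdot 4^{k}\cdot p\le 1/2$ for the choice $p=1/(2k\cdot 4^{k})$, so
\[
P[(E1)\wedge(E2)] \;\ge\; P[W\subseteq\mathcal{P}]\cdot\tfrac{1}{2} \;\ge\; \tfrac{1}{2}\,p^{k} \;=\; 2^{-O(k^{2})}.
\]

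The main obstacle, compared with Theorem~\ref{th:random-sampling-bad}, is that the natural choice $p=1/2$ is hopeless: the union bound over $k\cdot 4^{k}$ bad sets whose ``novel'' part $|A\setminus W|$ can be as small as $1$ would blow up past $1$. The key idea is to tune $p$ all the way down to $\Theta(1/(k\cdot 4^{k}))$, which shrinks each bad event enough to survive the union bound while keeping $p^{k}=2^{-O(k^{2})}$, thereby replacing the double-exponential $2^{-2^{O(k)}}$ bound of Theorem~\ref{th:random-sampling-bad} by the single-exponential $2^{-O(k^{2})}$ bound claimed here.
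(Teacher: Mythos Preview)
Your proof is correct and follows essentially the same approach as the paper: sample vertices independently with probability $p=2^{-\Theta(k)}$ and let $Z$ be the union of exact reverse shadows of those $A\in\IS$ with $A\subseteq\mathcal{P}$. The only cosmetic differences are your choice $p=1/(2k\cdot 4^{k})$ versus the paper's $p=4^{-k}$, and your union bound over the at most $k\cdot 4^{k}$ bad sets $A$ (conditioned on $W\subseteq\mathcal{P}$) versus the paper's direct computation of the probability that all of $W$ and none of a set $I\setminus W$ of size at most $k^{2}\cdot 4^{k}$ is sampled; both analyses rely on the same thinness argument to show each bad $A$ has a vertex outside $W$.
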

\begin{proof}
The algorithm $\randset(G,T,k)$ picks a subset $P$ of $V(G)$
where each element occurs with probability $4^{-k}$ uniformly at
random.  For every $S\in \IS$ with $S\subseteq P$, let us add the
exact reverse shadow of $S$ to $\mathcal{X'}$. Let $Z$ be the union of
the exact reverse shadows in $\mathcal{X'}$. We claim that the set $Z$
satisfies the requirement of the theorem.

Fix a pair $(W,Y)$ as in the statement of the theorem.
For each $w\in W$, we define
\begin{align*}
\mathcal{L}_w&= \{S\  |\ \text{$S$ is an important $w-T$ separator of size $\leq k$}\},\\
I_w& = \bigcup_{S\in \mathcal{L}_w} S,\\
I&=\bigcup_{w\in W}I_w.
\end{align*}
Note that a vertex $w\in W$ may have an outneighbor in $T$, in which case $\mathcal{L}_w$ and $I_w$ are empty.
Since $|W|\leq k$ and for each $w\in W$ there are at most $4^k$
important $w-T$ separators of size at most $k$, we have $|I_w|\leq
k\cdot 4^{k}$. Since $|W|\leq k$, we have $|I|\leq k^{2}\cdot
4^k$.

 Let $\X$ be the
set of exact reverse shadows of every set $S\in \IS$.
%By Lemma~\ref{number-of-imp-sep}, the size of $\X$ is $O^*(4^k)$ and can be constructed in time $O^*(4^k)$.
Let $X_1,X_2,\ldots,X_d\in \X$ be the exact reverse shadows of every
$S\in \IS$ with $S\subseteq W$. Let $A=\{X_1,X_2,\ldots,X_d\}$ and
$B=\{X\in \X\ |\ X\cap W \neq \emptyset \}$.
%As $|W|\leq k$, we have $d\le 2^k$.  By assumption that $W$ is \emph{thin}, we have
%$X_j\cap W=\emptyset$ for every $j\in[d]$.
Now consider the following events:
\begin{enumerate}
\item[(E1)]$W\cap Z=\emptyset$
\item[(E2)] $X_{j}\subseteq Z$ for every $j\in [d]$
\end{enumerate}
First we show that (E2) implies that $Y\subseteq Z$: $v\in Y$ implies there is an important separator $W'\subseteq W$, i.e., there is some $\ell\in [d]$ such that $X_{\ell}$ is the exact reverse shadow of $W$. Also note that $v\in X_{\ell}$ since $W'$ is a minimal (in fact important) $v-T$ separator. Since $X_j\subseteq Z$ for every $j\in [d]$, we have that $v\in Z$. This shows that $Y\subseteq Z$.

Our goal is to show that both events (E1) and (E2) occur with probability
$2^{-O(k^2)}$. If every vertex from $W$ is selected in $P$, then every reverse shadow from $A$ is selected into $\mathcal{X}'$
and event (E2) holds. We claim that if no vertex from $I\setminus W$ is selected in $P$, then no exact reverse shadow from $B$
is selected into $\mathcal{X}'$ and hence event $(E1)$ will also hold. Suppose to the contrary that an exact reverse shadow
$X\in B$ was selected into $\mathcal{X}'$; by the definition of $B$, there is a vertex $w\in X\cap W$. Let $J\in \IS$ be the set whose exact reverse shadow is $X$,
which implies by Lemma~\ref{lemma-forward-impsep-exact-back-shadow} that $J\in\mathcal{L}_w$ and $J\subseteq I_w\subseteq I$.
If $J\setminus W\neq \emptyset$, then the assumption that no
vertex of $I\setminus W$ was selected into $P$ condtradicts the fact that $X$ was selected into $\mathcal{X}'$. Suppose therefore that  $J\subseteq W$ holds. Since $X$ is the exact reverse shadow of $J$, we know that $J$ is a minimal $X-T$ separator. But $J\subseteq W$ implies that $W\setminus X$ is also an $X-T$ separator, i.e., $W\cap X$ lies in the reverse shadow of $W\setminus X$. This contradicts the fact that $W$ is a thin set (see Definition~\ref{defn-thin-set}).
%But $X\in B$ implies $X\cap W\neq \emptyset$ and hence by Lemma~\ref{lem:imp-of-thin}
%we have that $J\in \IS$ whose exact reverse shadow is $X$ cannot be contained in $W$. This contradicts the fact that no
%vertex of $I\setminus W$ was selected in $P$.

Thus the probability that both the events (E1) and (E2) occur is bounded from below by the probability of the event that every
vertex from $W$ is selected in $P$ and no vertex from $I\setminus W$ is selected in $P$. Note that the sets $W$ and
$I\setminus W$ are clearly disjoint. Therefore, the two events are independent and the probability that both events occur is
at least
$$
(4^{-k})^{k}(1-4^{-k})^{k^{2}\cdot 4^{k}} \geq 4^{-k^{2}}\cdot e^{-2k^{2}}  = 2^{-O(k^2)}
$$
where we used the inequalities that $1+x \geq e^{\frac{x}{1+x}}$ for every $x> -1$ and $1-4^{-k}\geq \frac{1}{2}$ for every
$k\geq 1$.
\end{proof}

\subsection{Derandomization} \label{subsection-derandomization}

We now derandomize the process of choosing exact reverse shadows in Theorem~\ref{th:random-sampling} using the technique of
\emph{splitters}. An $(n,r,r^2)$-splitter is a family of functions from $[n]\rightarrow [r^2]$ such that for every $M\subseteq
[n]$ with $|M|=r$, at least one of the functions in the family is injective on $M$. Naor et al.~\cite{aravind-focs-1995} give an explicit construction of an $(n,r,r^2)$-splitter of size
$O(r^{6}\log r\log n)$ in time $\text{poly}(n,r)$.

\begin{theorem}{\bf (deterministic sampling)}
\label{th:derandom-sampling} There is an algorithm $\derandset(G,T,k)$ that produces $t=2^{O(k^2)}\log |V(G)|$ subsets of $Z_1$, $\dots$, $Z_t$ of $V(G)\setminus T$  in
time $O^*(2^{O(k^2)})$ such that the following holds. Let $W$ be a \emph{thin} set with $|W|\le k$, and let $Y$ be a set such that
for every $v\in Y$ there is an important $v-T$ separator $W'\subseteq W$. For every such pair $(W,Y)$, there is at least one $1\le i \le t$ with
\begin{enumerate}
\item $W\cap Z=\emptyset$, and
\item $Y\subseteq Z$.
\end{enumerate}
\end{theorem}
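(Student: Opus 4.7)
The plan is to derandomize the algorithm of Theorem~\ref{th:random-sampling} via the $(n,r,r^{2})$-splitter construction of Naor, Schulman, and Srinivasan. Recall that in the randomized algorithm the only use of randomness was the choice of $P\subseteq V(G)$ (each vertex included independently with probability $4^{-k}$), and for a fixed pair $(W,Y)$ the success argument required only two deterministic conditions on $P$: (a) $W\subseteq P$ and (b) $(I\setminus W)\cap P=\emptyset$, where $I=\bigcup_{w\in W}I_{w}$ is the union of all important $w$--$T$ separators of size at most $k$ over $w\in W$. The key quantitative fact, used implicitly in the proof of Theorem~\ref{th:random-sampling}, is that $|I|\le k^{2}\cdot 4^{k}=:r$ regardless of the target pair $(W,Y)$, which follows from $|W|\le k$ and Lemma~\ref{number-of-imp-sep}.

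Set $r:=k^{2}\cdot 4^{k}$ and invoke the splitter construction to obtain, in time $\mathrm{poly}(n,r)$, a family $\mathcal{H}$ of functions $V(G)\to[r^{2}]$ of size $|\mathcal{H}|=O(r^{6}\log r\log n)=2^{O(k)}\log|V(G)|$ such that for every $M\subseteq V(G)$ with $|M|=r$ some $f\in\mathcal{H}$ is injective on $M$. For each $f\in\mathcal{H}$ and each $S\subseteq[r^{2}]$ with $|S|\le k$, set $P_{f,S}:=\{v\in V(G):f(v)\in S\}$, enumerate $\IS$ via Remark~\ref{remark:number-of-impseps}, and define $Z_{f,S}$ as the union of the exact reverse shadows of those $S'\in\IS$ satisfying $S'\subseteq P_{f,S}$. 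Output the collection $\{Z_{f,S}\}$. The number of pairs is $|\mathcal{H}|\cdot\sum_{i=0}^{k}\binom{r^{2}}{i}=2^{O(k)}\log|V(G)|\cdot 2^{O(k^{2})}=2^{O(k^{2})}\log|V(G)|$, and each $Z_{f,S}$ costs $O^{*}(4^{k})$ to build, so the total running time is $O^{*}(2^{O(k^{2})})$.

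For correctness, fix any valid pair $(W,Y)$ with its associated set $I$. We may assume $n\ge r$, since otherwise listing all $2^{n}=2^{O(k^{2})}$ subsets of $V(G)$ trivially suffices. Extend $I$ arbitrarily to some $I'\subseteq V(G)$ with $|I'|=r$; by the splitter property there exists $f^{\star}\in\mathcal{H}$ that is injective on $I'$, hence on $I\supseteq W$. Let $S^{\star}:=f^{\star}(W)$, which is a subset of $[r^{2}]$ of size $|W|\le k$, so $(f^{\star},S^{\star})$ is among the enumerated candidates. Every $w\in W$ then satisfies $f^{\star}(w)\in S^{\star}$, giving $W\subseteq P_{f^{\star},S^{\star}}$ and establishing (a); conversely, if some $v\in I\setminus W$ lay in $P_{f^{\star},S^{\star}}$, then $f^{\star}(v)\in S^{\star}=f^{\star}(W)$ together with injectivity of $f^{\star}$ on $I$ would force $v\in W$, a contradiction, establishing (b). Replaying verbatim the deterministic portion of the proof of Theorem~\ref{th:random-sampling}, which derived $W\cap Z=\emptyset$ and $Y\subseteq Z$ using only (a), (b), and the thinness of $W$, yields the required properties for $Z_{f^{\star},S^{\star}}$.

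The main obstacle is choosing the splitter parameters so that both the number of functions and the subsequent enumeration of subsets of the codomain stay within $2^{O(k^{2})}$. We must take $r$ large enough to dominate $|I|$, yet keep $\binom{r^{2}}{k}$ single-exponential in $k^{2}$; the bound $|I|\le k^{2}\cdot 4^{k}$ supplied by Lemma~\ref{number-of-imp-sep} is exactly what balances these demands, since $r^{2}=k^{4}\cdot 16^{k}$ gives $\binom{r^{2}}{k}\le r^{2k}=2^{O(k^{2})}$. Beyond this parameter tuning, no new combinatorial insight is needed: the splitter merely provides, in a deterministic fashion, the separation of $W$ from $I\setminus W$ that the independent coin flips of Theorem~\ref{th:random-sampling} achieved with probability $2^{-O(k^{2})}$.
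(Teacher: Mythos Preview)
Your approach is essentially identical to the paper's: derandomize the choice of $P$ in Theorem~\ref{th:random-sampling} using an $(n,r,r^{2})$-splitter, then enumerate the at most $k$-element subsets of the codomain to pin down the image of $W$. The paper sets $r=|W|+|I\setminus W|\le k+k^{2}\cdot 4^{k}$ and asks the splitter to be injective on $W\cup(I\setminus W)$; the arithmetic and the number of produced sets are the same as yours.

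There is, however, a genuine (though easily repaired) gap in your correctness argument. You write that $f^{\star}$ is injective on $I'$, ``hence on $I\supseteq W$''. The inclusion $W\subseteq I$ is false in general: by Definition~\ref{defn-sep} any $w$--$T$ separator is disjoint from $w$, so $w\notin I_{w}$, and there is no reason $w$ should lie in $I_{w'}$ for some other $w'\in W$. (A trivial counterexample: $w_{1}\to a\to t$, $w_{2}\to b\to t$ with $T=\{t\}$ and no other edges; then $I=\{a,b\}$ is disjoint from $W=\{w_{1},w_{2}\}$.) Consequently, injectivity of $f^{\star}$ on $I'$ does not guarantee injectivity on $W\cup I$, and your verification of condition~(b) breaks: some $v\in I\setminus W$ may collide under $f^{\star}$ with a $w\in W\setminus I$, placing $v$ in $P_{f^{\star},S^{\star}}$. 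The fix is immediate and matches what the paper does: take $r\ge |W\cup I|$, e.g.\ $r=k+k^{2}\cdot 4^{k}$, and extend $W\cup I$ (rather than $I$ alone) to a set of size $r$; the asymptotic bounds are unaffected.
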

\begin{proof}
In the proof of Theorem~\ref{th:random-sampling}, a random subset $P$ of a universe $V(G)$ of size $n$ is selected. We argued
that if every vertex from $W$ is selected in $P$ and no element from $I\setminus W$ is selected, then both the events (E1) and
(E2) occur. Instead of selecting a random subset $P$, we will construct several subsets such that at least one of them will
contain every vertex in $W$ and no vertex from $I\setminus W$. Let $n=|V(G)|$, $a=|W|\leq k$, and $b=|I\setminus W|\leq k^{2}\cdot 4^{k}$. Each subset is defined by a
pair $(h,H)$, where $h$ is a function in an $(n,a+b,(a+b)^2)$-splitter family and $H$ is a subset of $[(a+b)^2]$ of size $a$
(there are $\binom{(a+b)^2}{a} = \binom{(k+k^{2}\cdot 4^{k})^2}{k} = 2^{O(k^2)}$ such sets $H$). For a particular choice of
$h$ and $H$, we select those vertices $v\in V(G)$ into $P$ for which $h(v)\in H$. The size of the splitter family is
$O\Big((a+b)^{6}\log(a+b)\log(n)\Big) = 2^{O(k)}\log n$ and the number of possibilities for $H$ is $2^{O(k^2)}$. Therefore, we
construct $2^{O(k^2)}\log n$ subsets of $V(G)$. The total time taken for constructing these subsets is $\text{poly}(n,a+b) = \text{poly}(n,4^k)$.

By the definition of the splitter, there is a function $h$ that is injective on $W$, and there is a subset $H$ such that
$h(v)\in H$ for every set $v\in W$ and $h(y)\not\in H$ for every $y\in I\setminus W$. For such an $h$ and $H$, the selection
will ensure that (E1) and (E2) hold. Thus at least one of the constructed subsets has the required properties, which is what
we had to show.
\end{proof}

\subsection{Proof of Theorem~\ref{thm:main-covering-shadow}: The \textsc{Covering} Algorithm}
\label{sec:analysis-algorithm}

To prove Theorem~\ref{thm:main-covering-shadow}, we show that Algorithm~\ref{alg:covering} gives a set $Z$ satisfying the properties of Theorem~\ref{thm:main-covering-shadow}.  Due to the delicate way separators and shadows behave in directed graphs, we construct the set $Z$ in two phases, calling the function $\randset$ of Section~\ref{subsection-imp-separators} twice and taking $Z$ to be the union of the two outputs. For consistency of notation, we denote the input graph by $G_1$. Let $Z_1$ be the output of the first call of the function $\randset$, i.e., $Z_1=\randset(G_1,T,k)$. We build a new graph $G_2$ from $G_1$ by reversing the orientation of every edge and adding every vertex of $Z_1$ to $V^{\infty}$. Since the structure of the graph $G_2$ depends on the set $Z_1$, the distribution of the second random sampling depends on the result $Z_1$ of the first random sampling. This means that we cannot make the two calls in parallel. Our aim is to show that there is a transversal $T^*$ such that we can give a lower bound on the probability that $Z_1$ covers $r_{G_1,T}(T^*)$ and $Z_2$ covers $f_{G_1,T}(T^*)$.

% This is formalized in the following claim:

% \begin{claim}
% \label{claim:alg-covering-is-good} The set $Z$ in the output of Algorithm~\ref{alg:covering} satisfies the conditions of
% Theorem~\ref{thm:main-covering-shadow}.
% \end{claim}

\begin{algorithm}[t]
\caption{\textsc{Covering} (randomized version)} \label{alg:covering}
\textbf{Input:} A directed graph $G_1$, integer $k$.\\
\textbf{Output:} A set $Z$.
\begin{algorithmic}[1]

\STATE Let $Z_1=\randset(G_1,T,k)$.

\STATE Let $G_2$ be obtained from $G_1$ by reversing the orientation of every edge and adding every vertex of $Z_1$ to $V^{\infty}$.% (i.e., $V^{\infty}(G_2)=V^{\infty}(G_1)\cup Z_1$).

\STATE Let $Z_2=\randset(G_2,T,k)$.

\STATE Let $Z=Z_1\cup Z_2$.
%
%\STATE Let $G_3=\torso(G_1,V(G)\setminus Z)$.

\end{algorithmic}
\end{algorithm}

To prove the existence of the required transversal $T^*$, we need the following definition:
\begin{definition}
{\bf (shadow-maximal transversal)} \label{defn-terminal-minimal-solution} An $\F$-transversal $W$ is {\em minimum} if there is no $\F$-transversal of size less than $|W|$. A minimum $\F$-transversal $W$ is called \emph{shadow-maximal} if $r_{G_1,T}(W)\cup f_{G_1,T}(W) \cup W$ is
inclusion-wise maximal among all minimum $\F$-transversals.
\end{definition}

For the rest of the proof, let us fix $T^*$ to be a shadow-maximal $\F$-transversal such that $|r_{G_1,T}(T^*)|$ is maximum
possible among all shadow-maximal $\F$-transversals. We bound the probability that $Z\cap T^*=\emptyset$ and $r_{G_1,T}(T^*)\cup
f_{G_1,T}(T^*)\subseteq Z$. More precisely, we bound the probability that all of the following four events occur:
\begin{enumerate}
\item $Z_1\cap T^*=\emptyset$,
\item $r_{G_1,T}(T^*)\subseteq Z_1$,
\item $Z_2\cap T^*=\emptyset$, and
\item $f_{G_1,T}(T^*)\subseteq Z_2$.
\end{enumerate}
That is, the first random selection takes care of the reverse shadow, the second takes care of the forward shadow, and none of
$Z_1$ or $Z_2$ hits $T^*$. Note that it is somewhat counterintuitive that we choose a $T^*$ for which the shadow is large:
intuitively, it seems that the larger the shadow is, the less likely that it is fully covered by $Z$. However, we need this
maximality property in order to bound the probability that $Z\cap T^*=\emptyset$.

We want to invoke Theorem~\ref{th:random-sampling} to bound the probability that $Z_1$ covers $Y=r_{G_1,T}(T^*)$ and $Z_1\cap
T^*=\emptyset$. First, we need to ensure that $T^*$ is a \emph{thin} set, but this follows easily from the fact that $T^*$ is
a minimum $\F$-transversal:
\begin{lemma}
\label{lemma:minimal-shadow-cover-sol} If $W$ is a minimum $\F$-transversal for some $T$-connected $\F$, then no $v\in W$ is in the reverse shadow of some
$W'\subseteq W\setminus v$.
\end{lemma}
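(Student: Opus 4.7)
The plan is to argue by contradiction: assuming some $v\in W$ lies in the reverse shadow of a subset $W'\subseteq W\setminus\{v\}$, I will show that $W\setminus\{v\}$ is itself an $\F$-transversal, which contradicts the minimality of $W$.

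First I would record a few easy observations from the definitions. By Definition~\ref{defn-sep}, a $v$-$T$ separator must be disjoint from $\{v\}\cup T$, so the hypothesis ``$v$ is in the reverse shadow of $W'$'' already implies $v\notin T$ and that in $G\setminus W'$ there is no walk (equivalently, no path) from $v$ to any vertex of $T$. Note also that $v\notin W'$ since $W'\subseteq W\setminus\{v\}$.

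Next I would establish the key claim that $W\setminus\{v\}$ is an $\F$-transversal. Suppose not; then there exists some $F_i\in\F$ with $F_i\cap(W\setminus\{v\})=\emptyset$. Since $W$ itself is an $\F$-transversal, we must have $v\in F_i$. Here I invoke the $T$-connectedness hypothesis (Definition~\ref{defn:T-connected}) on $F_i$: every vertex of $F_i$ reaches some vertex of $T$ via a walk lying entirely inside $F_i$. Applying this to $v$, I obtain a walk $P$ from $v$ to some $t\in T$ whose vertices are all contained in $F_i$. Since $F_i$ is disjoint from $W\setminus\{v\}$, the walk $P$ avoids $W\setminus\{v\}$ and in particular avoids $W'\subseteq W\setminus\{v\}$. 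Thus $P$ is a $v$-$T$ walk in $G\setminus W'$, contradicting the fact that $W'$ is a $v$-$T$ separator.

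Hence $W\setminus\{v\}$ is indeed an $\F$-transversal; but $|W\setminus\{v\}|<|W|$, contradicting the minimality of $W$. This completes the argument. The proof is essentially a direct unpacking of the $T$-connected property, and I do not expect any technical obstacle; the only subtle point is being careful that the walk guaranteed by $T$-connectedness lies in $F_i$ (not merely in $G$), which is exactly what lets it avoid $W'$.
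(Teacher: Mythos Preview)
Your proposal is correct and follows essentially the same argument as the paper: assume $v$ lies in the reverse shadow of some $W'\subseteq W\setminus\{v\}$, show that $W\setminus\{v\}$ would then still be an $\F$-transversal (using the $v\to T$ walk inside the offending $F_i$ guaranteed by $T$-connectedness), and derive a contradiction with minimality. Your write-up is slightly more detailed about why the walk avoids $W'$, but the logic is identical.
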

\begin{proof}
Suppose to the contrary that there is a vertex $v\in W$ such that $v\in r(W')$ for some $W'\subseteq W\setminus v$. Then
we claim that $W\setminus v$ is also an $\F$-transversal, contradicting the minimality of $W$. Let
$\F=\{F_1,F_2,\ldots,F_q\}$ and suppose that there is a $i\in [q]$ such that $F_{i}\cap W=\{v\}$. As $\F$ is $T$-connected, there is
a $v\rightarrow T$ walk $P$ in $F_i$. But $P\cap W=\{v\}$ implies that there is a $v\rightarrow T$ walk in $G\setminus
(W\setminus v)$, i.e., $v$ cannot belong to the reverse shadow of any $W'\subseteq W\setminus v$.
\end{proof}

More importantly, if we want to use Theorem~\ref{th:random-sampling} with $Y=r_{G_1,T}(T^*)$, then we have to make sure that for
every vertex $v$ of $r_{G_1,T}(T^*)$, there is an important $v-T$ separator that is a subset of $T^*$. The ``pushing argument''
of Lemma~\ref{thm-pushing} shows that if this is not true for some $v$, then we can modify the $\F$-transversal in a way that
increases the size of the reverse shadow. The extremal choice of $T^*$ ensures that no such modification is possible, thus
$T^*$ contains an important $v-T$ separator for every $v$.

\begin{lemma}
{\bf (pushing)} \label{thm-pushing} Let $W$ be an $\F$-transversal for some $T$-connected $\F$. For every $v\in r(W)$, either there is a $W_1\subseteq W$
that is an important $v-T$ separator, or there is an $\F$-transversal $W'$ such that
\begin{enumerate}
\item $|W'|\le |W|$,
\item $r(W)\subset r(W')$,
%\item $r(S)\cup S \subset r(S')\cup S'$, and
\item $(r(W)\cup f(W) \cup W) \subseteq (r(W')\cup f(W') \cup W')$.
\end{enumerate}
\end{lemma}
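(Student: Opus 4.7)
The plan is to construct $W'$ by a ``local swap'': replace a minimal $v$-$T$ separator $W_1\subseteq W$ by an important $v$-$T$ separator $W_1^*$ sitting strictly closer to $T$. Since $v\in r(W)$, the set $W$ is itself a $v$-$T$ separator, so it contains some minimal $v$-$T$ separator $W_1$. By assumption no subset of $W$ is important, so $W_1$ is not important, and iteratively applying the definition (each non-important separator admits a strict enlargement of $R^+_{G\setminus\cdot}(v)$) produces an important $v$-$T$ separator $W_1^*$ with $|W_1^*|\leq|W_1|$ and $R\subsetneq R^*$, where $R=R^+_{G\setminus W_1}(v)$ and $R^*=R^+_{G\setminus W_1^*}(v)$. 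Setting $W':=(W\setminus W_1)\cup W_1^*$ immediately yields $|W'|\leq|W|$.

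The workhorse of the argument is the structural claim $W_1\subseteq W_1^*\cup R^*$: for any $w\in W_1\setminus W_1^*$, minimality of $W_1$ yields a $v$-$T$ path $P_w$ in $G$ whose only intersection with $W_1$ is $w$, and its $v$-to-$w$ prefix lies entirely in $R\subseteq R^*$, forcing $w\in R^*$. The strict push $R\subsetneq R^*$ combined with $|W_1^*|\leq|W_1|$ also gives $W_1\setminus W_1^*\neq\emptyset$.

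The remaining verifications all follow the same template: if the desired property fails, I construct a forbidden $v$-$T$ walk in $G\setminus W_1^*$. For the $\F$-transversal property, for any $F_i$ with $F_i\cap(W\setminus W_1)=\emptyset$ I pick $u\in F_i\cap W_1$; if $u\in W_1^*$ then $u\in W'$, else $u\in R^*$, and $T$-connectivity of $\F$ gives a $u$-to-$T$ walk in $F_i$ that must hit $W_1^*$ (else concatenation with a $v$-to-$u$ walk in $G\setminus W_1^*$ yields the forbidden walk). For $r(W)\subseteq r(W')$, any $x$-to-$T$ walk in $G\setminus W'$ from $x\in r(W)$ must cross $W\setminus W'=W_1\setminus W_1^*\subseteq R^*$ at some $w$, and composing the $v$-to-$w$ walk with this walk yields the forbidden walk. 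Strictness holds because any $y\in W_1\setminus W_1^*$ lies in $R^*\subseteq r(W_1^*)\subseteq r(W')$ but satisfies $y\in W$, hence $y\notin r(W)$. The inclusion $W\subseteq r(W')\cup W'$ follows from the same calculation, and for $x\in f(W)\setminus W'$ the assumption that $x$ both reaches and is reached by $T$ in $G\setminus W'$ forces the $T$-to-$x$ walk to pass through some $w\in W_1\setminus W_1^*\subseteq R^*$; stitching $v\to w$ in $G\setminus W_1^*$, $w\to x$ in $G\setminus W'\subseteq G\setminus W_1^*$, and $x\to T$ in $G\setminus W'\subseteq G\setminus W_1^*$ again produces the forbidden walk, so $x\in r(W')\cup f(W')$.

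The main obstacle is a subtle point at the very start: $W_1$ must be chosen so that $W_1^*\setminus W$ avoids $r(W)$, because otherwise some vertex would be absorbed into $W'$ and thereby excluded from $r(W')$, breaking the required inclusion. I will handle this by choosing $W_1$ to be a minimal $v$-$T$ separator in $W$ whose reachable set $R^+_{G\setminus W_1}(v)$ is inclusion-maximal among such choices; failure of the avoidance would then combine $W_1$ with a hypothetical $x$-$T$ separator inside $W$ (witnessed by $x\in r(W)$) to yield a minimal $v$-$T$ separator inside $W$ with strictly larger reachable set, contradicting the case assumption that no subset of $W$ is an important $v$-$T$ separator.
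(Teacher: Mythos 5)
Your construction is the paper's own: take a minimal $v$-$T$ separator $W_1\subseteq W$, replace it by a dominating important separator $W_1^*$, and set $W'=(W\setminus W_1)\cup W_1^*$. Your structural claim $W_1\setminus W_1^*\subseteq R^*$ and the ``forbidden-walk'' verifications that $W'$ is an $\F$-transversal, that $r(W)\setminus W'\subseteq r(W')$, and that $(r(W)\cup f(W)\cup W)\subseteq(r(W')\cup f(W')\cup W')$ are all essentially the paper's arguments (the paper routes the first point through edges from $R^+_{G\setminus W}(v)$ into $W_0$, but it amounts to the same fact). The one genuine deviation is the choice of $W_1$: the paper fixes $W_1$ inside the ``first layer'' $W_0$ (the vertices of $W$ reachable from $v$ without traversing other vertices of $W$), whereas you pick a minimal separator contained in $W$ with $R^+_{G\setminus W_1}(v)$ inclusion-maximal.

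You are right that the construction needs $W_1^*\cap r(W)=\emptyset$, and the paper in fact never establishes this: its proof of $r(W)\cup(W\setminus W')\subseteq r(W')$ tacitly assumes $r(W)\cap W'=\emptyset$, and this can fail with the first-layer choice. (For instance, take $T=\{t\}$ with edges $t\to v\to a\to u$ and $u\to w_i\to w_i'\to t$ for $i\in\{1,2\}$, and let $W=\{a,w_1,w_2\}$. No subset of $W$ is an important $v$-$T$ separator, $W_0=W_1=\{a\}$, the unique size-one dominating important separator is $\{u\}$, yet $u\in r(W)$, so $r(W)\not\subseteq r(W')$ for the resulting $W'=\{u,w_1,w_2\}$.) Your repair direction is sound and does handle that instance. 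However, the argument you offer for the fix does not stand as a proof: the pivotal step --- ``combine $W_1$ with a hypothetical $x$-$T$ separator inside $W$ to yield a minimal $v$-$T$ separator inside $W$ with strictly larger reachable set'' --- is never made concrete, and the contradiction you invoke (with ``no subset of $W$ is an important $v$-$T$ separator'') is not the one that such a separator would give; what it would contradict is your own inclusion-maximality of $R^+_{G\setminus W_1}(v)$. So the proposal correctly locates the delicate point, but it does not actually close it: the claim $W_1^*\cap r(W)=\emptyset$ for your choice of $W_1$ remains unproved.
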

\begin{proof}
Let $W_0$ be the subset of $W$ reachable from $v$ without going through any other vertices of $W$. Then $W_0$ is clearly a
$v-T$ separator. Let $W_1$ be the minimal $v-T$ separator contained in $W_0$ (we may note that if $W$ is a minimal $\F$-transversal, then we always have $W_1=W_0$).  If $W_1$ is an important $v-T$ separator, then
we are done as $W$ itself contains $W_1$. Otherwise, there exists an important $v-T$ separator $W'_{1}$, i.e., $|W'_{1}|\leq
|W_1|$ and $R^{+}_{G\setminus W_{1}}(v)\subset R^{+}_{G\setminus W'_{1}}(v)$. Now we show that $W' = (W\setminus W_1)\cup
W'_{1}$ is also an $\F$-transversal. Note that $W'_{1}\subseteq W'$ and $|W'|\leq |W|$.

First we claim that $r(W)\cup (W\setminus W')\subseteq r(W')$. Suppose that there is a walk $P$ from $\beta$ to $T$ in
$G\setminus W'$ for some $\beta\in r(W)\cup(W\setminus W')$. If $\beta\in r(W)$, then walk $P$ has to go through a vertex
$\beta'\in W$. As $\beta'$ is not in $W'$, it has to be in $W\setminus W'$. Therefore, by replacing $\beta$ with $\beta'$, we
can assume in the following that $\beta\in W\setminus W'\subseteq W_1\setminus W'_1$.  By the minimality of $W_1$, every vertex of
$W_1\subseteq W_0$ has an incoming edge from some vertex in $R^{+}_{G\setminus W}(v)$. This means that there is a vertex
$\alpha \in R^{+}_{G\setminus W}(v)$ such that $(\alpha,\beta)\in E(G)$. Since $R^{+}_{G\setminus W}(v)\subset
R^{+}_{G\setminus W'}(v)$, we have $\alpha \in R^{+}_{G\setminus W'}(v)$, implying that there is a $v\rightarrow \alpha$ walk
in $G\setminus W'$. The edge $\alpha \rightarrow \beta$ also survives in $G\setminus W'$ as $\alpha \in R^{+}_{G\setminus
W'}(v)$ and $\beta \in W_{1} \setminus W'_{1}$. By assumption, we have a walk in $G\setminus W'$ from $\beta$ to some $t\in
T$.  Concatenating the three walks we obtain a $v \rightarrow t$ walk in $G\setminus W'$, which contradicts the fact that $W'$
contains an (important) $v-T$ separator $W'_{1}$. This proves the claim. Since $W\neq W'$ and $|W|=|W'|$, the set $W_1\setminus W'_1$ is non-empty.
Thus $r(W)\subset r(W')$ follows from the claim $r(W)\cup (W\setminus W')\subseteq r(W')$.

Suppose now that $W'$ is not an $\F$-transversal. Then there is some $i\in [q]$ such that $F_{i}\cap W'=\emptyset$. As $W$ is an
$\F$-transversal, there is some $w\in W\setminus W'$ with $w\in F_i$. As $\F$ is $T$-connected, there is a $w\rightarrow T$ walk in $F_i$, which gives
a $w\rightarrow T$ walk in $G\setminus W'$ as $W'\cap F_{i}=\emptyset$. However, we have $W \setminus W' \subseteq r(W')$ (by the
claim in the previous paragraph), a contradiction. Thus $W'$ is also an $\F$-transversal.

Finally, we show that $r(W)\cup f(W)\cup W\subseteq r(W')\cup f(W')\cup W'$. We know that $r(W)\cup (W\setminus W')\subseteq
r(W')$. Thus it is sufficient to consider a vertex $v\in f(W)\setminus r(W)$. Suppose that $v\not \in f(W')$ and $v\not \in
r(W')$: there are walks $P_1$ and $P_2$ in $G\setminus W'$, going from $T$ to $v$ and from $v$ to $T$, respectively. As $v\in
f(W)$, walk $P_1$ intersects $W$, i.e., it goes through a vertex of $\beta \in W\setminus W'\subseteq r(W')$. However,
concatenating the subwalk of $P_1$ from $\beta$ to $v$ and the walk $P_2$ gives a walk from $\beta\in r(W')$ to $T$ in
$G\setminus W'$, a contradiction.
\end{proof}

Note that if $W$ is a shadow-maximal $\F$-transversal, then the $\F$-transversal $W'$ in Lemma~\ref{thm-pushing} is also a minimum $\F$-transversal and
shadow-maximal. Therefore, by the extremal choice of $T^*$, applying Lemma~\ref{thm-pushing} on $T^*$ cannot produce a
shadow-maximal $\F$-transversal $T'$ with $r_{G_1,T}(T^*)\subset r_{G_1,T}(T')$, and hence $T^*$ contains an important $v-T$
separator for every $v\in r_{G_1,T}(T^*)$. Thus by Theorem~\ref{th:random-sampling} for $Y=r_{G_1,T}(T^*)$, we get:
\begin{lemma}
\label{lemma:reverse} With probability at least $2^{-O(k^2)}$, both $r_{G_1,T}(T^*)\subseteq Z_1$ and $Z_1\cap T^*=\emptyset$
occur.
\end{lemma}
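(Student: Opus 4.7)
The plan is to apply Theorem~\ref{th:random-sampling} directly, taking $W = T^*$ and $Y = r_{G_1,T}(T^*)$. If the hypotheses of that theorem are satisfied, then the probability that $Z_1 \cap T^* = \emptyset$ and $r_{G_1,T}(T^*) \subseteq Z_1$ is at least $2^{-O(k^2)}$, which is exactly what we need. So the task reduces to verifying the two hypotheses: (i) $T^*$ is thin (with $|T^*| \le k$), and (ii) for every $v \in r_{G_1,T}(T^*)$ there exists an important $v$--$T$ separator $W' \subseteq T^*$.

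For (i), the bound $|T^*| \le k$ is immediate from the assumption that an $\F$-transversal of size at most $k$ exists and $T^*$ is chosen to be a minimum one. Thinness follows directly from Lemma~\ref{lemma:minimal-shadow-cover-sol}: since $T^*$ is a minimum $\F$-transversal and $\F$ is $T$-connected (which is an assumption on the problem instance), no vertex $v \in T^*$ can lie in the reverse shadow of $T^* \setminus \{v\}$.

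For (ii), this is where the extremal choice of $T^*$ enters. I plan to argue by contradiction: suppose there is some $v \in r_{G_1,T}(T^*)$ such that no subset of $T^*$ is an important $v$--$T$ separator. Then the pushing lemma (Lemma~\ref{thm-pushing}) produces an $\F$-transversal $W'$ with $|W'| \le |T^*|$, with $r(T^*) \subsetneq r(W')$, and with $r(T^*) \cup f(T^*) \cup T^* \subseteq r(W') \cup f(W') \cup W'$. In particular $W'$ is also a minimum $\F$-transversal, and the inclusion of the ``shadow-plus-solution'' sets means $W'$ is also shadow-maximal. But then $W'$ is a shadow-maximal $\F$-transversal with strictly larger reverse shadow than $T^*$, contradicting the defining property of $T^*$ (maximum $|r_{G_1,T}(T^*)|$ among all shadow-maximal minimum $\F$-transversals). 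Hence (ii) holds.

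The main obstacle is ensuring that the $W'$ produced by the pushing lemma is in fact shadow-maximal (so that it is a legal competitor to $T^*$ in the extremal choice). This follows from the third conclusion of Lemma~\ref{thm-pushing}, which guarantees $r(T^*) \cup f(T^*) \cup T^* \subseteq r(W') \cup f(W') \cup W'$: this inclusion forces $W'$ to be shadow-maximal because $T^*$ already is, and the ``shadow-plus-solution'' set of any minimum $\F$-transversal containing $W'$'s set would also contain $T^*$'s, contradicting shadow-maximality of $T^*$ unless $W'$ itself is shadow-maximal. Once (i) and (ii) are established, the lemma follows in one line from Theorem~\ref{th:random-sampling}.
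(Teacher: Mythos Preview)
Your proposal is correct and follows essentially the same approach as the paper: the paper's argument (given in the paragraph immediately preceding the lemma) likewise verifies thinness via Lemma~\ref{lemma:minimal-shadow-cover-sol}, uses the pushing Lemma~\ref{thm-pushing} together with the extremal choice of $T^*$ to guarantee the important-separator condition, and then invokes Theorem~\ref{th:random-sampling} with $W=T^*$ and $Y=r_{G_1,T}(T^*)$. Your justification that the $W'$ produced by pushing is itself shadow-maximal can be stated more simply---since $r(T^*)\cup f(T^*)\cup T^*$ is already inclusion-wise maximal and is contained in $r(W')\cup f(W')\cup W'$, the two sets must coincide---but the content is the same.
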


In the following, we assume that the events in Lemma~\ref{lemma:reverse} occur.  Our next goal is to bound the probability
that $Z_2$ covers $f_{G_1,T}(T^*)$. Let us define a collection $\F'$ of subgraphs of $G_2$ as follows: for every subgraph $F\in \F$ of $G_1$, let us add to $\F'$ the corresponding subgraph $F'$ of $G_2$, i.e., $F'$ is the same as $F$ with every edge reversed.
Note that $\F'$ is $T$-connected in $G_2$: the definition of $T$-connected is symmetric with respect to the orientation of the edges.
Moreover, $T^*$ is an $\F'$-transversal in $G_2$: the vertices in $T^*$
remained finite (as $Z_1\cap T^*=\emptyset$ by Lemma~\ref{lemma:reverse}), and reversing the orientation of the edges does not
change the fact that $T^*$ is a transversal. Set $T^*$ is also shadow-maximal as an $\F'$-transversal in $G_2$:
Definition~\ref{defn-terminal-minimal-solution} is insensitive to reversing the orientation of the edges and adding some vertices to $V^{\infty}$ can only decrease the set of potential transversals.  Furthermore, the forward shadow of $T^*$ in $G_2$ is
same as the reverse shadow of $T^*$ in $G_1$, that is, $f_{G_2,T}(T^*)=r_{G_1,T}(T^*)$. Therefore, assuming that the events in
Lemma~\ref{lemma:reverse} occur, every vertex of $f_{G_2,T}(T^*)$ is in $V^{\infty}$ in $G_2$. We show that now it holds that
$T^*$ contains an important $v-T$ separator in $G_2$ for every $v\in r_{G_2,T}(T^*)=f_{G_1,T}(T^*)$:

\begin{lemma}
\label{lemma-no-pushing-in-terminal-minimal-solns} If $W$ is a shadow-maximal $\F$-transversal for some $T$-connected $\F$ and every vertex of $f(W)$ belongs to $V^{\infty}$, then $W$ contains an important $v-T$ separator for every $v\in r(W)$.
\end{lemma}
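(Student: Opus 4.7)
The plan is to prove the lemma by contradiction, combining the pushing lemma (Lemma~\ref{thm-pushing}), the shadow-maximality of $W$, the infinite-weight hypothesis on $f(W)$, and the thinness property of minimum transversals (Lemma~\ref{lemma:minimal-shadow-cover-sol}).

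First I would assume for contradiction that there exists $v \in r(W)$ for which $W$ contains no important $v$--$T$ separator. Applying the pushing lemma then yields an $\F$-transversal $W' = (W \setminus W_1) \cup W'_1$ with $|W'| \le |W|$, $r(W) \subsetneq r(W')$, and $r(W) \cup f(W) \cup W \subseteq r(W') \cup f(W') \cup W'$, where $W'_1 \subseteq W'$ is an important $v$--$T$ separator. I would first verify that $W'$ is a legitimate transversal: important separators are disjoint from $V^\infty(G)$ by definition, and $W$ is disjoint from $V^\infty(G)$ as a valid transversal, so $W' \cap V^\infty(G) = \emptyset$. Combined with $|W'| \le |W|$ and the minimality of $W$, this shows that $W'$ is also a minimum $\F$-transversal.

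Shadow-maximality of $W$ then forces the equality $r(W) \cup f(W) \cup W = r(W') \cup f(W') \cup W'$, for otherwise $W'$ would violate the inclusion-wise maximality defining shadow-maximality. This is where the infinite-weight hypothesis pays off: since $W'_1 \not\subseteq W$ (else $W$ would contain the absent important separator), there exists $y \in W' \setminus W = W'_1 \setminus W$. Because $y \in W'$ has finite weight while $f(W) \subseteq V^\infty$, we get $y \notin f(W)$, and so the equality of the shadow sets forces $y \in r(W)$. Thus $W$ is a $\{y\}$--$T$ separator.

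The main obstacle is converting this setup into an explicit contradiction; the plan is to invoke Lemma~\ref{lemma:minimal-shadow-cover-sol} on the minimum transversal $W'$. That lemma guarantees that $y \notin r(W' \setminus \{y\})$, so there is a simple $y \to T$ path $P$ in $G \setminus (W' \setminus \{y\})$. Since $y \in r(W)$, path $P$ must meet $W$; let $z$ be its first such vertex. Then $z \neq y$ (because $z \in W$ but $y \notin W$) and $z \notin W' \setminus \{y\}$, which forces $z \in W \setminus W'$. The internal claim $r(W) \cup (W \setminus W') \subseteq r(W')$ from the proof of the pushing lemma then places $z$ in $r(W')$. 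However, the subpath of $P$ from $z$ to $T$ avoids $W' \setminus \{y\}$ by construction, and since $P$ is simple with $y$ only at its starting position, this subpath cannot revisit $y$, so it avoids all of $W'$ --- contradicting $z \in r(W')$ and closing the argument.
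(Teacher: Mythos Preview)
Your proof is correct and follows the same overall approach as the paper: apply the pushing lemma, use shadow-maximality to get equality of the shadow sets, and then use the infinite-weight hypothesis to force any $y \in W' \setminus W$ into $r(W)$.

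However, your final contradiction is much more laborious than necessary. Once you have $y \in W' \setminus W$ with $y \in r(W)$, the paper simply observes that $r(W) \subset r(W')$ (a conclusion of the pushing lemma you already invoked) and that $W' \cap r(W') = \emptyset$ (by the very definition of shadow, since a $v$--$T$ separator must be disjoint from $v$). Hence $y \in W' \cap r(W') = \emptyset$, an immediate contradiction. Your detour through Lemma~\ref{lemma:minimal-shadow-cover-sol}, path-chasing to a vertex $z \in W \setminus W'$, and the internal claim of the pushing lemma is correct but entirely avoidable; you effectively re-prove by hand a special case of the one-line fact $W' \cap r(W') = \emptyset$.
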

\begin{proof}
Suppose to the contrary that there exists $v\in r(W)$ such that $W$ does not contain an important $v-T$ separator. Then by
Lemma~\ref{thm-pushing}, there is a another shadow-maximal $\F$-transversal $W'$. As $W$ is shadow-maximal, it follows that $r(W)\cup
f(W)\cup W= r(W')\cup f(W')\cup W'$. Therefore, the nonempty set $W'\setminus W$ is fully contained in $r(W)\cup f(W)\cup W$.
However, it cannot contain any vertex of $f(W)$ (as they are infinite by assumption) and cannot contain any vertex of $r(W)$
(as $r(W)\subset r(W')$), a contradiction.
\end{proof}

Recall that $T^*$ is a shadow-maximal $\F'$-transversal in $G_2$. In
particular, $T^*$ is a minimal $\F'$-transversal in $G_2$, hence
Lemma~\ref{lemma:minimal-shadow-cover-sol} implies that $T^*$ is thin
in $G_2$ also. Thus Theorem~\ref{th:random-sampling} can be used again (this time with
$Y=r_{G_2,T}(T^*)$) to bound the probability that
$r_{G_2,T}(T^*)\subseteq Z_2$ and $Z_2\cap T^*=\emptyset$. As the
reverse shadow $r_{G_2,T}(T^*)$ in $G_2$ is the same as the forward
shadow $f_{G_1,T}(T^*)$ in $G$, we can state the following:
\begin{lemma}
\label{lemma:forward} Assuming the events in Lemma~\ref{lemma:reverse} occur, with probability at least $2^{-O(k^2)}$ both
$f_{G_1,T}(T^*)\subseteq Z_2$ and $Z_2\cap T^*=\emptyset$ occur.
\end{lemma}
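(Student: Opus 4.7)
The plan is to apply Theorem~\ref{th:random-sampling} a second time, this time inside the graph $G_2$ with $W = T^*$ and $Y = f_{G_1,T}(T^*) = r_{G_2,T}(T^*)$, mirroring the argument that produced Lemma~\ref{lemma:reverse}. Once the hypotheses of Theorem~\ref{th:random-sampling} are verified in $G_2$, its conclusion translates (via the identity $r_{G_2,T}(T^*) = f_{G_1,T}(T^*)$) directly into the two events we need.

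First I would transport $T^*$ into $G_2$ cleanly. Using the family $\F'$ of reversed subgraphs already set up in the excerpt, I would record three facts: (i) $\F'$ is $T$-connected in $G_2$, since $T$-connectedness is symmetric with respect to edge orientation; (ii) $T^*$ is an $\F'$-transversal in $G_2$, because the event of Lemma~\ref{lemma:reverse} gives $Z_1\cap T^*=\emptyset$, so $T^*$ avoids $V^\infty(G_2)$ and remains deletable; and (iii) $T^*$ is shadow-maximal as an $\F'$-transversal in $G_2$, because Definition~\ref{defn-terminal-minimal-solution} is insensitive to edge orientation and making some vertices infinite can only shrink the set of competing minimum $\F'$-transversals.

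Next I would check the two hypotheses of Theorem~\ref{th:random-sampling} in $G_2$. Thinness of $T^*$ follows from Lemma~\ref{lemma:minimal-shadow-cover-sol} applied to the $T$-connected family $\F'$, since $T^*$ is a minimum $\F'$-transversal there. For the important-separator condition on $Y = r_{G_2,T}(T^*)$, I would invoke Lemma~\ref{lemma-no-pushing-in-terminal-minimal-solns}; its hypothesis is that the forward shadow has infinite weight, and indeed
\[
f_{G_2,T}(T^*) \;=\; r_{G_1,T}(T^*) \;\subseteq\; Z_1
\]
by the event of Lemma~\ref{lemma:reverse}, and all vertices of $Z_1$ were made infinite in Step~2 of Algorithm~\ref{alg:covering}. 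Hence $T^*$ contains an important $v-T$ separator in $G_2$ for every $v\in r_{G_2,T}(T^*)$, as required.

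With both hypotheses in hand, Theorem~\ref{th:random-sampling} applied in $G_2$ with $W=T^*$ and $Y=r_{G_2,T}(T^*)$ gives that, with probability at least $2^{-O(k^2)}$, both $Z_2\cap T^*=\emptyset$ and $r_{G_2,T}(T^*)\subseteq Z_2$ hold; rewriting $r_{G_2,T}(T^*)=f_{G_1,T}(T^*)$ is exactly the statement of the lemma. There is no serious obstacle here: the whole step is essentially bookkeeping to show that the technical setup needed for Theorem~\ref{th:random-sampling} survives under edge-reversal and the imposition of infinite weights on $Z_1$. The one place that deserves attention is the implicit check that no vertex forced to be infinite in $G_2$ can spoil the random selection in $\randset(G_2,T,k)$, but this is automatic because the important separators enumerated by $\randset$ are by Definition~\ref{defn-sep} disjoint from $V^\infty(G_2)$, so the vertices of $Z_1$ cannot appear in any exact reverse shadow used in the construction of $Z_2$.
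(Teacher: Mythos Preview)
Your proposal is correct and follows essentially the same route as the paper: verify that $T^*$ is a thin, shadow-maximal $\F'$-transversal in $G_2$, use Lemma~\ref{lemma-no-pushing-in-terminal-minimal-solns} (with the infinite weights on $f_{G_2,T}(T^*)=r_{G_1,T}(T^*)\subseteq Z_1$) to get the important-separator condition, and then invoke Theorem~\ref{th:random-sampling} with $Y=r_{G_2,T}(T^*)=f_{G_1,T}(T^*)$. One small inaccuracy in your closing remark: vertices of $Z_1$ \emph{can} appear in exact reverse shadows in $G_2$ (being in a shadow does not require deletability), but this is irrelevant---what matters is only that $T^*$ is disjoint from $V^\infty(G_2)$, which you already established via $Z_1\cap T^*=\emptyset$.
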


Therefore, with probability $(2^{-O(k^2)})^2$, the set $Z_1\cup Z_2$
covers $f_{G_1,T}(T^*)\cup r_{G_1,T}(T^*)$ and it is disjoint from
$T^*$. This completes the proof of
Theorem~\ref{thm:main-covering-shadow}.
%By Lemma~\ref{defn-reduced-instance}, this means that $T^*$ is a shadowless solution of the reduced
%instance $I/(Z_1\cup Z_2)$.

%\begin{lemma}
%With probability $2^{-2^{O(k)}}$, $S^*$ is a shadowless solution of $(G_3,T,p)$ and a solution of the undirected instance
%$(G^*_3,T,p)$.
%\end{lemma}

%\subsection{Proof of Theorem~\ref{thm:main-covering-shadow-derandomized} via Derandomization of Algorithm~\ref{alg:covering}}
%\label{sec:derandomize-covering-alg}

Finally, to prove Theorem~\ref{thm:main-covering-shadow-derandomized}, the
derandomized version of
Theorem~\ref{thm:main-covering-shadow-derandomized}, we use the
deterministic variant $\derandset(G,T,k)$ of the function $\randset(G,T,k)$ that, instead
of returning a random set $Z$, returns a deterministic set $Z_1$,
$\dots$, $Z_t$ of $t=2^{O(k^2)}\log n$ sets in $\text{poly}(n,4^k)$ time
(Theorem~\ref{th:derandom-sampling}).  Therefore, in Steps 1 and 3 of
Algorithm~\ref{alg:covering}, we can replace $\randset$ with this
deterministic variant $\derandset$, and branch on the choice of one $Z_i$ from the
returned sets. By the properties of the deterministic algorithm, if
$I$ is a yes-instance, then $Z$ has Property (*) in at least one of
the $2^{O(k^2)}\log^2 n$ branches. The branching increases the running time
only by a factor of $(O^*(2^{O(k^2)}))^2$ and therefore the total
running time is $O^{*}(2^{O(k^2)})$. This completes the proof of
Theorem~\ref{thm:main-covering-shadow-derandomized}.

%%% Local Variables:
%%% mode: latex
%%% TeX-master: "dsfvs-arxiv"
%%% End:

\section{\textsc{Disjoint \sdfvs\ Compression}: Reduction to Shadowless Solutions}
\label{sec:torso}

We use the algorithm of Theorem~\ref{thm:main-covering-shadow-derandomized}  to construct a set $Z$ of vertices that we want
to get rid of. The second ingredient of our algorithm is an operation that removes a set of vertices without making the
problem any easier. This transformation can be conveniently described using the operation of taking the \emph{torso} of a
graph. From this point onwards in the paper, we do not follow~\cite{directed-multiway-cut}. In particular, the \emph{torso}
operation is problem-specific. For \textsc{Disjoint \sdfvs\ Compression}, we define it as follows:

\begin{definition}{\bf (torso)}
\label{defn-torso} Let $(G,S,T,k)$ be an instance of \textsc{Disjoint \sdfvs\ Compression} and $C\subseteq V(G)$. Then
$\torso(G,C,S)$ is a pair $(G',S')$ defined as follows:
\begin{itemize}
\item $G'$  has vertex set $C$ and there is (directed) edge $(a,b)$ in $G'$ if there is an $a\rightarrow b$ walk in
$G$ whose internal vertices are not in $C$,
\item $S'$ contains those edges of $S$ whose endpoints are both in $C$; furthermore, we add the edge $(a,b)$ to $S'$ if there is an $a\rightarrow b$ walk
in $G$ that contains an edge from $S$ and whose internal vertices are not in $C$.
\end{itemize}
\end{definition}

%\daniel{We need a figure for the definition of torso.}

\begin{figure}[t]
\centering
\def\svgwidth{0.9\linewidth}%
\executeiffilenewer{torso.svg}{torso.pdf}%
{inkscape -z -D --file=torso.svg %
--export-pdf=torso.pdf --export-latex}%
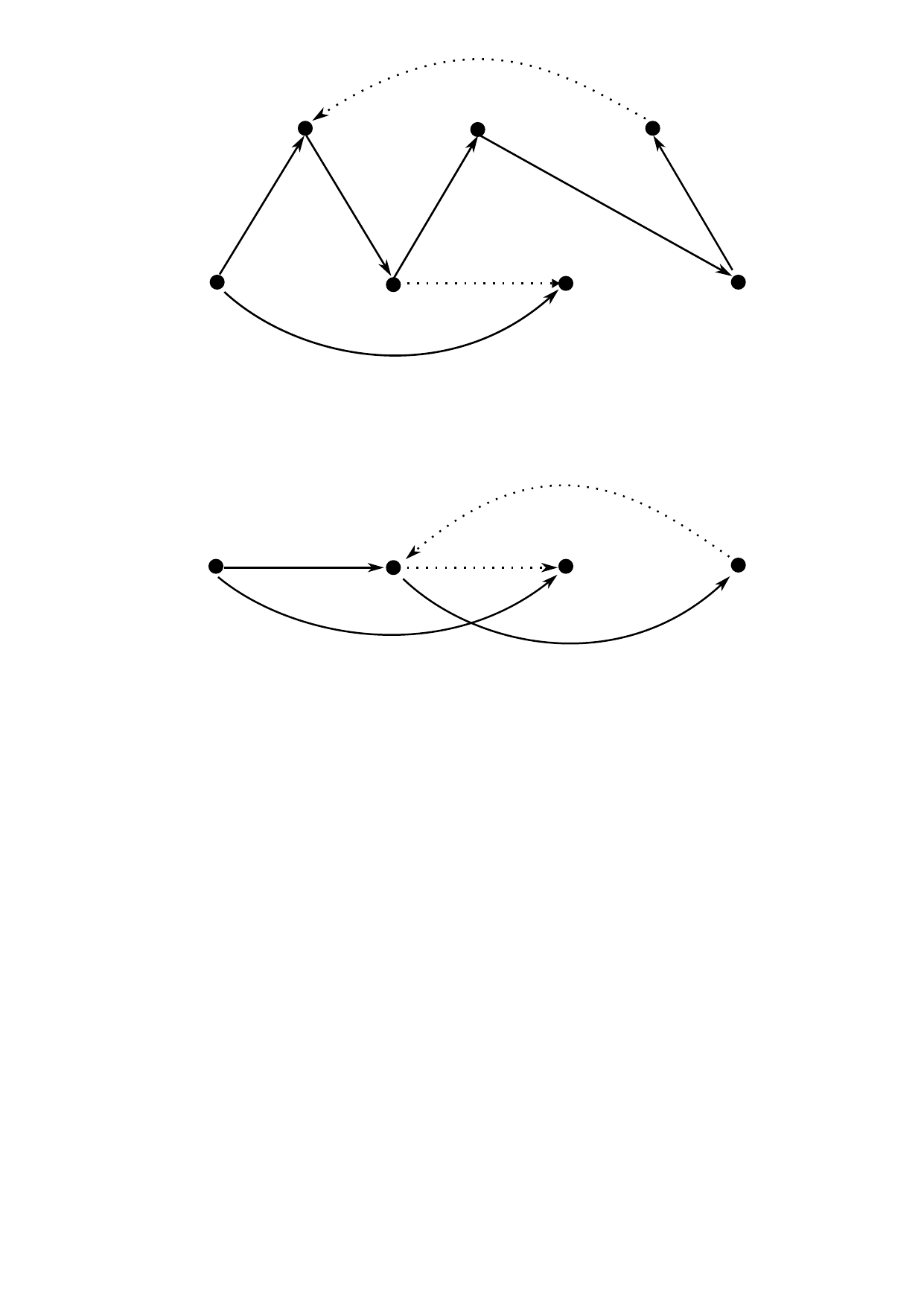%

\vspace{-90mm}
\caption{In the top graph $G$ we have $C=\{v_1, v_2, v_3, v_4\}$. The edges in $S$ are given by the dotted lines. In the bottom graph we show the graph $\texttt{torso}(G,C,S)$. All edges from $G[C]$ appear in this graph. In addition, we also add the edges $(v_1, v_2), (v_2, v_4)$ and $(v_4, v_2)$. The edge $(v_2, v_3)\in G[C]\cap S$ appears in $S'$. In addition, we also add the edge $(v_4, v_2)$ to $S'$ since the $v_4 \rightarrow v_7\rightarrow v_5\rightarrow v_2$ path in $G$ has an edge $(v_7, v_5)\in S$.\label{fig:torso}}
\end{figure}

In particular, if $a,b\in C$ and $(a,b)$ is a directed edge of $G$ and $\torso(G,C,S)=(G',S')$, then $G'$ contains $(a,b)$ as well. Thus
$G'$ is a supergraph of the subgraph of $G$ induced by $C$. Figure~\ref{fig:torso} illustrates the definition of \texttt{torso} with an example.

The following easy statement was proved in
\cite{directed-multiway-cut}: the $\torso$ operation preserves whether
a set $W\subseteq C$ is a separator.
\begin{lemma}[\cite{directed-multiway-cut}]
{\bf (torso preserves separation)} \label{torso-preserves-separation} Let $G$ be a directed graph and $C\subseteq V(G)$. Let
$(G',\emptyset)=\emph{\torso}(G,C,\emptyset)$ and $W\subseteq C$. For $a,b \in C\setminus W$, the graph $G\setminus W$ has an $a\rightarrow b$ path if
and only if $G'\setminus W$ has an $a\rightarrow b$ path.
\end{lemma}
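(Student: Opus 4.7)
The plan is to prove the two directions of the biconditional separately, using only the definition of $\torso$ and the fact that $W \subseteq C$.

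For the forward direction, suppose $P$ is an $a \to b$ path in $G \setminus W$. I would consider the subsequence $a = c_0, c_1, \ldots, c_m = b$ of vertices of $P$ that lie in $C$, listed in the order they appear along $P$. Since $P$ avoids $W$, each $c_i$ lies in $C \setminus W$. By the maximality of the selection, for each consecutive pair $c_i, c_{i+1}$ the subpath of $P$ from $c_i$ to $c_{i+1}$ has all its internal vertices outside $C$. The definition of $\torso(G,C,\emptyset)$ therefore places the edge $(c_i,c_{i+1})$ in $G'$. Since both endpoints are in $C \setminus W$, this edge also survives in $G' \setminus W$, and concatenating these edges yields an $a \to b$ walk (hence a path) in $G' \setminus W$.

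For the reverse direction, suppose $a = c_0, c_1, \ldots, c_m = b$ is an $a \to b$ path in $G' \setminus W$. By the definition of $\torso$, each edge $(c_i, c_{i+1})$ of $G'$ arises either from the edge $(c_i, c_{i+1})$ already present in $G$, or from some $c_i \to c_{i+1}$ walk $Q_i$ in $G$ whose internal vertices lie outside $C$. Concatenating the walks $Q_0, Q_1, \ldots, Q_{m-1}$ (treating a direct edge as a length-one walk) gives an $a \to b$ walk $Q$ in $G$. To see that $Q$ avoids $W$, note that its endpoints of each segment are the $c_i$'s, which lie in $C \setminus W$, while the internal vertices of each $Q_i$ lie outside $C$ and therefore outside $W \subseteq C$. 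Thus $Q$ is an $a \to b$ walk in $G \setminus W$, which contains an $a \to b$ path.

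No step is really the main obstacle: the whole argument is essentially unpacking the definition of $\torso$, and the only thing to be careful about is the observation $W \subseteq C$, which is what guarantees that internal vertices of the contracted walks (being outside $C$) cannot accidentally reintroduce a vertex of $W$. If one wanted, one could state the two halves as a single chain of equivalences, but treating them as separate implications keeps the bookkeeping of ``pick the $C$-vertices of $P$'' versus ``expand each $G'$-edge into a $G$-walk'' cleanest.
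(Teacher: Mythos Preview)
Your proof is correct and is exactly the standard argument. Note that the paper does not actually supply its own proof of this lemma: it is quoted from \cite{directed-multiway-cut} and stated without proof, so there is nothing to compare against beyond observing that your argument is the natural one and matches what that reference does.
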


We need a very similar statement here: the \torso\ operation preserves whether a set $W\subseteq C$ hits every $S$-closed-walk.
\begin{lemma}%$[\star]$
{\bf (torso preserves $S$-closed-walks)} \label{torso-preserves-closed-walks} Let $G$ be a directed graph with $C\subseteq
V(G)$ and $S\subseteq E(G)$. Let $(G',S')=\torso(G,C,S)$, $v\in C$, and $W\subseteq C$. Then $G\setminus W$ has an $S$-closed-walk passing through $v$ if and
only if $G'\setminus W$ has an $S'$-closed-walk passing through $v$.
\end{lemma}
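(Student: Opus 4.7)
The plan is to prove both directions by translating walks in $G$ to walks in $G'$ and vice versa, using the definition of $\torso$ to bridge between them; the fact that $W \subseteq C$ means that deleting $W$ commutes with the torso operation in the relevant sense.

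For the forward direction, let $P$ be an $S$-closed-walk in $G \setminus W$ passing through $v$, and let $c_0 = v, c_1, \ldots, c_\ell = v$ be the (cyclic) sequence of vertices of $C$ that $P$ visits in order. Between consecutive visits $c_i$ and $c_{i+1}$, the subwalk of $P$ either is a single edge of $G$ with both endpoints in $C$, or has all internal vertices in $V(G) \setminus C$. In both cases, by Definition~\ref{defn-torso}, $(c_i, c_{i+1})$ is an edge of $G'$. Since $P$ avoids $W$, every $c_i$ is in $C \setminus W$, so $c_0, c_1, \ldots, c_\ell$ is a closed walk through $v$ in $G' \setminus W$. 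Finally, some edge $e$ of $P$ lies in $S$, and $e$ belongs to some subwalk from $c_i$ to $c_{i+1}$; by the definition of $S'$, the corresponding edge $(c_i, c_{i+1})$ then belongs to $S'$, so the new closed walk is an $S'$-closed-walk.

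For the reverse direction, let $P' = c_0, c_1, \ldots, c_\ell$ with $c_0 = c_\ell = v$ be an $S'$-closed-walk in $G' \setminus W$. For every edge $(c_i, c_{i+1})$ of $P'$ we choose a witness walk in $G$ from $c_i$ to $c_{i+1}$ whose internal vertices (if any) lie in $V(G) \setminus C$, as guaranteed by Definition~\ref{defn-torso}; moreover, for the edge of $P'$ that lies in $S'$ we pick a witness walk that either is an edge of $S$ with both endpoints in $C$ or contains an edge of $S$ internally (such a witness exists by the definition of $S'$). Concatenating these witness walks produces a closed walk $P$ through $v$ in $G$. Since each $c_i \in C \setminus W$ and every internal vertex of a witness lies in $V(G) \setminus C \subseteq V(G) \setminus W$ (using $W \subseteq C$), the walk $P$ avoids $W$. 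By the choice of witness for the $S'$-edge of $P'$, the walk $P$ contains an edge of $S$, so $P$ is an $S$-closed-walk in $G \setminus W$.

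There is no real obstacle here beyond careful bookkeeping; the only subtle point is ensuring that in the reverse direction one selects a witness walk for the designated $S'$-edge that actually certifies membership in $S$, which is exactly what the second bullet of Definition~\ref{defn-torso} provides. The proof is essentially a symmetric ``contract then expand'' argument, parallel to Lemma~\ref{torso-preserves-separation} but tracking the extra $S$-edge condition.
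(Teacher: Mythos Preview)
Your proof is correct and follows essentially the same ``contract then expand'' approach as the paper's own proof: list the $C$-vertices along the walk, replace each segment by the corresponding torso edge (and conversely expand each torso edge to a witness walk), and track that one segment carries an $S$-edge so that the corresponding torso edge lies in $S'$. If anything, your reverse direction is slightly more careful than the paper's in explicitly choosing an $S$-witnessing walk for the designated $S'$-edge and in invoking $W\subseteq C$ to argue that the internal vertices of witness walks avoid $W$.
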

\begin{proof}
Let $P$ be an $S$-closed-walk in $G\setminus W$ passing through $v$. If $P$ is fully contained in $C$, then $P$ also appears in $G'\setminus W$. Otherwise, $P$ contains vertices
from both $C$ and $V(G)\setminus C$. Let $u,w$ be two vertices of $C$ such that every vertex of $P$ between $u$ and $w$ is from
$V(G)\setminus C$. Then, by definition of torso, there is an edge $(u,w)$ in $G'$. Using such edges, we can modify $P$
to obtain another closed walk say $P'$ passing through $v$ that lies completely in $G'$ but avoids $W$. Note that
since $P$ is a $S$-closed-walk, at least one of the edges on some $u\rightarrow w$ walk that we short-circuited above must
have been from $S$ and by Definition~\ref{defn-torso} we would have put the edge $(u,w)$ edge into $S'$,
which makes $P'$ an $S'$-closed-walk in $G'$.
%Let $P$ be a path from $a$ to $b$ in $G$. Suppose $P$ is disjoint from $S$. Then $P$ contains vertices
%from $C$ and $V(G)\setminus C$. Let $u,v$ be two vertices of $C$ such that every vertex of $P$ between $u$ and $v$ is from
%$V(G)\setminus C$. Then by definition there is an edge $(u,v)$ in $\torso(G,C)$. Using such edges we can modify $P$ to obtain
%an $a\rightarrow b$ path that lies completely in $\torso(G,C)$ but avoids $S$.

Conversely, suppose that $P'$ is an $S'$-closed-walk passing through a vertex $v$ in $G'$ and it avoids $W\subseteq C$. If
$P'$ uses an edge $(u,w)\notin E(G)$, then this means that there is a $u\rightarrow w$ walk $P_{uw}$ whose internal vertices are
not in $C$. Using such walks, we modify $P'$ to get a closed walk $P$ passing through $v$ that only uses edges from $G$,
i.e., $P$ is a closed walk in $G\setminus W$. It remains to show that $P$ is an $S$-closed-walk: since $P'$ is an
$S'$-closed-walk, either some edge of $P'$ was originally in $S$ or there exist some $a,b\in P'$ such that there is a
$a\rightarrow b$ walk does not contain any vertex from $C$ and some edge on this walk was originally in $S$.
%Conversely suppose $P'$ is an $S$-cycle passing through a vertex $a$ in $\torso(G,C)$ and it avoids $W\subseteq C$. If $P'$
%uses an edge $(u,v)\notin E(G)$, then this means that there is a $u\rightarrow v$ path $P''$ whose internal vertices are not
%in $C$. Using such paths we modify $P$ to get an $a\rightarrow b$ path $P_0$ that only uses edges from $G$. Since $S\subseteq
%C$ we have that the new vertices on the path are not in $S$ and so $P_0$ avoids $S$.
\end{proof}

If we want to remove a set $Z$ of vertices, then we create a new instance by taking the torso on the {\em complement} of $Z$:
\begin{definition}
  \label{defn-reduced-instance} Let $I=(G,S,T,k)$ be an instance of
  \textsc{Disjoint \sdfvs\ Compression} and $Z\subseteq V(G)\setminus
  T$. The reduced instance $I/Z=(G',S',T,k)$ is obtained by setting
  $(G',S')=\torso(G,V(G)\setminus Z,S)$.
\end{definition}
The following lemma states that the operation of taking the torso does not make the \textsc{Disjoint \sdfvs\ Compression}
problem easier for any $Z\subseteq V(G)\setminus T$ in the sense that any solution of the reduced instance $I/Z$ is a solution
of the original instance $I$. Moreover, if we perform the torso operation for a $Z$ that is large enough to cover the shadow
of some solution $T^*$ and also small enough to be disjoint from $T^*$, then $T^*$ becomes a shadowless solution for the
reduced instance $I/Z$.

\begin{lemma}\label{reduced-instance-soln}
{\bf (creating a shadowless instance)} Let $I=(G,S,T,k)$ be an instance of \textsc{Disjoint \sdfvs\ Compression} and $Z\subseteq
V(G)\setminus T$.
\begin{enumerate}
\item If $I$ is a no-instance, then the reduced instance $I/Z$ is also a no-instance.
\item If $I$ has solution $T'$ with $f_{G,T}(T')\cup r_{G,T}(T')\subseteq Z$ and $T'\cap Z=\emptyset$, then $T'$ is a
    shadowless solution of $I/Z$.
\end{enumerate}
\end{lemma}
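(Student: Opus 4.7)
The plan is to derive both parts directly from the two torso-preservation lemmas already proved in the section: Lemma~\ref{torso-preserves-separation} for reachability and Lemma~\ref{torso-preserves-closed-walks} for $S$-closed-walks. The key structural fact I will use throughout is that since $Z \subseteq V(G) \setminus T$, the pre-solution $T$ is fully contained in $C := V(G) \setminus Z = V(G')$; moreover, since the Disjoint \sdfvs\ Compression problem forces every candidate solution to be disjoint from $T$, all the separators of interest also live inside $C$. Consequently, every vertex and walk we need to track lies on the ``surviving'' side of the torso, which is exactly where the preservation lemmas apply.

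For Part 1, I argue the contrapositive: any solution $T''$ of $I/Z$ is already a solution of $I$. The size bound and disjointness from $T$ are inherited directly, so only the no-$S$-closed-walk condition in $G \setminus T''$ needs to be checked. Suppose for contradiction that $G \setminus T''$ contains an $S$-closed-walk $P$. Because $G \setminus T$ has no $S$-closed-walks by the input hypothesis, $P$ must pass through some $t \in T$; since $P$ avoids $T''$, in fact $t \in T \setminus T'' \subseteq C \setminus T''$. Applying Lemma~\ref{torso-preserves-closed-walks} with $v = t$ and $W = T''$ then produces an $S'$-closed-walk through $t$ in $G' \setminus T''$, contradicting the assumption that $T''$ solves $I/Z$.

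For Part 2, I establish separately that $T'$ is a solution of $I/Z$ and that its shadow in $G'$ (with respect to $T$) is empty. The solution part is dual to Part 1: any hypothetical $S'$-closed-walk in $G' \setminus T'$ would, by the reverse direction of Lemma~\ref{torso-preserves-closed-walks}, pull back to an $S$-closed-walk in $G \setminus T'$, contradicting the hypothesis that $T'$ solves $I$. For the shadowless part, fix any $v \in V(G') \setminus T'$. Since $v \notin Z$ and $f_{G,T}(T') \cup r_{G,T}(T') \subseteq Z$ by hypothesis, the vertex $v$ lies outside the shadow of $T'$ in $G$, so there exist $t_1, t_2 \in T$ with a $t_1 \to v$ path and a $v \to t_2$ path in $G \setminus T'$. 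Because $T \cup \{v\} \subseteq C \setminus T'$, Lemma~\ref{torso-preserves-separation} lifts both paths into $G' \setminus T'$, showing that $v \notin f_{G',T}(T') \cup r_{G',T}(T')$.

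The only mild subtlety I foresee is the mismatch between Lemma~\ref{torso-preserves-separation}, which is stated for $\torso(G,C,\emptyset)$, and the reduced instance of Definition~\ref{defn-reduced-instance}, which invokes $\torso(G,C,S)$. This is harmless: by Definition~\ref{defn-torso}, the directed-graph component $G'$ depends only on $G$ and $C$, not on $S$; only the companion set $S'$ changes with $S$. Hence the separation lemma applies verbatim. With this noted, both parts reduce to careful bookkeeping of the inclusions $T \subseteq C$, $T \cap T' = \emptyset$, and $Z \cap (T \cup T') = \emptyset$, and I do not expect any deeper obstacle.
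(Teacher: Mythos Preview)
Your proposal is correct and follows essentially the same route as the paper: Part~1 via the contrapositive and Lemma~\ref{torso-preserves-closed-walks}, Part~2 by the reverse direction of that lemma for the solution property and Lemma~\ref{torso-preserves-separation} for shadowlessness. The only cosmetic difference is that the paper argues shadowlessness by contradiction (a vertex in the $G'$-shadow would lie in the $G$-shadow, hence in $Z$) whereas you argue it directly; your explicit remark that the graph component of $\torso(G,C,S)$ is independent of $S$ is a useful clarification the paper leaves implicit.
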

\begin{proof}
Let $C=V(G)\setminus Z$ and $(G',S')=\torso(G,C,S)$. To prove the first statement, suppose that
$T'\subseteq V(G')$ is a solution for $I/Z$. We show that $T'$ is also a solution for $I$. Suppose to the contrary that $G\setminus T'$ has an
$S$-closed-walk, which has to pass through some vertex $v\in T$ (since $G\setminus T$ has no
$S$-closed-walks). Note that $v\in T$ and $Z\subseteq V(G)\setminus T$ implies $v\in C$. Then by
Lemma~\ref{torso-preserves-closed-walks}, $G'\setminus T'$ also has an $S'$-closed-walk passing through $v$ contradicting the
fact that $T'$ is a solution for $I/Z$.

For the second statement, let $T'$ be a solution of $I$ with $T'\cap Z = \emptyset$ and $f_{G,T}(T')\cup r_{G,T}(T')\subseteq
Z$. We claim $T'$ is a solution of $I/Z$ as well. Suppose to the contrary that $G'\setminus T'$ has an $S'$-closed-walk passing
through some vertex $v\in C$. As $v\in C$, Lemma~\ref{torso-preserves-closed-walks} implies $G\setminus T'$ also has an
$S$-closed-walk passing through $v$, which is a contradiction as $T'$ is a solution of $I$.

Finally, we show that $T'$ is a shadowless solution, i.e,
$r_{G',T}(T')=f_{G',T}(T')=\emptyset$. We only prove
$r_{G',T}(T')=\emptyset$: the argument for $f_{G',T}(T')=\emptyset$ is
analogous.  Assume to the contrary that there exists $w\in
r_{G',T}(T')$ (note that we have $w\in V(G')$, i.e., $w\notin
Z$). This means that $T'$ is a $w-T$ separator in $G'$, i.e., there is
no $w-T$ walk in $G'\setminus T'$.  By
Lemma~\ref{torso-preserves-separation}, it follows that there is no
$w-T$ walk in $G\setminus T'$ either, i.e., $w\in r_{G,T}(T')$. But
$r_{G,T}(T')\subseteq Z$ and therefore we have $w\in Z$, which is
a contradiction.
% Thus $r_{G,T}(T')\subseteq Z$ in $G$ implies that $r_{G',T}(T')$ is empty in $I/Z$.
% Suppose
% there is a $w-T$ walk in $G\setminus T'$. Noting that $z\in C$ and $T\subseteq C$, when we take the torso to obtain $G'$ this
% $w-T$ will be short-circuited but will be preserved in $G'\setminus T'$, which is a contradiction. Hence there is no $w-T$ walk
% in $G\setminus T'$, i.e., $w\in r_{G,T}(T')$. But $r_{G,T}(T')\subseteq Z$ and therefore we have $w\in Z$, which is again contradiction.
% Thus $r_{G,T}(T')\subseteq Z$ in $G$ implies that $r_{G',T}(T')$ is empty in $I/Z$.
\end{proof}

For every $Z_i$ in the output of Theorem~\ref{thm:main-covering-shadow-derandomized}, we use the torso operation to remove the
vertices in $Z_i$. We prove that this procedure is safe in the following sense:
\begin{lemma}%$[\star]$
%{\bf (correctness of the \shadowless\ algorithm)}
\label{lem:correctalg} Let $I=(G,S,T,k)$ be an instance of \textsc{Disjoint \sdfvs\ Compression}. Let the sets in the output of
Theorem~\ref{thm:main-covering-shadow-derandomized} be $Z_1,Z_2,\ldots,Z_t$. For every $i\in [t]$, let $G_i$ be the reduced
instance $G/Z_i$.
\begin{enumerate}
\item If $I$ is a no-instance, then $G_i$ is also a no-instance for every $i\in [t]$.
\item If $I$ is a yes-instance, then there exists a solution $T^*$ of $I$ which is a shadowless solution of some $G_j$ for
    some $j\in [t]$.
\end{enumerate}
\end{lemma}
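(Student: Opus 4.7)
The plan is to obtain both parts by combining the derandomized shadow-covering result (Theorem~\ref{thm:main-covering-shadow-derandomized}) with the reduction-to-shadowless lemma (Lemma~\ref{reduced-instance-soln}); essentially no new argument is needed beyond these two ingredients and a careful choice of the family $\F$.

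Part (1) is the easier direction and I would simply apply Lemma~\ref{reduced-instance-soln}(1) once for each $i\in[t]$: since $I/Z_i = G_i$, each $G_i$ inherits the no-instance status of $I$, and nothing more has to be verified.

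For part (2), the first step is to specify the family on which to invoke Theorem~\ref{thm:main-covering-shadow-derandomized}. I would take $\F$ to be the collection of all $S$-closed walks in $G$ that pass through some vertex of $T$. Two observations make this collection suitable. First, $\F$ is $T$-connected: on any closed walk in $\F$, every vertex can both reach and be reached from any other vertex of the walk using only edges of that same walk, and each member of $\F$ meets $T$ by construction. Second, because the input $T$ is a solution of $I$, the graph $G\setminus T$ has no $S$-closed walks at all, so \emph{every} $S$-closed walk in $G$ is a member of $\F$. This second observation has two consequences I will use: any solution $T'$ of $I$ (which exists by the yes-instance assumption and is disjoint from $T$ by the definition of \textsc{Disjoint \sdfvs\ Compression}) is an $\F$-transversal of size at most $k$; conversely, any $\F$-transversal of size at most $k$ that is disjoint from $T$ is in turn a solution of $I$.

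With this setup, I would invoke Theorem~\ref{thm:main-covering-shadow-derandomized} to obtain an $\F$-transversal $X$ of size at most $k$ and an index $j\in[t]$ such that $X\cap Z_j=\emptyset$ and $Z_j$ covers $f_{G,T}(X)\cup r_{G,T}(X)$. Setting $T'=X$ and $Z=Z_j$ in Lemma~\ref{reduced-instance-soln}(2) then immediately gives that $X$ is a shadowless solution of $I/Z_j=G_j$, which is exactly the conclusion of part (2). The one subtlety I expect, and the main point that must be checked carefully, is that the transversal $X$ returned by Theorem~\ref{thm:main-covering-shadow-derandomized} is disjoint from $T$, as otherwise $X$ is neither a legitimate solution of the \emph{disjoint} variant nor eligible for the hypothesis of Lemma~\ref{reduced-instance-soln}(2). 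The natural way to enforce this is to invoke Theorem~\ref{thm:main-covering-shadow-derandomized} with $T$ playing the role of the undeletable set $V^\infty(G)$; then the separators, shadows, and the pushing step of Lemma~\ref{thm-pushing} used inside the proof of Theorem~\ref{thm:main-covering-shadow} are all forced to avoid $T$, so the shadow-maximal transversal underlying the construction, and hence the output $X$, is automatically disjoint from $T$.
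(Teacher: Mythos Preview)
Your proof is correct and follows exactly the paper's approach: both parts reduce immediately to Lemma~\ref{reduced-instance-soln}, with Theorem~\ref{thm:main-covering-shadow-derandomized} supplying the right $Z_j$ and transversal for Part~(2). You are in fact more explicit than the paper about why the returned transversal is disjoint from $T$; declaring $V^\infty(G)=T$ is the intended fix, and the pushing step of Lemma~\ref{thm-pushing} then keeps the shadow-maximal transversal inside the class of $T$-disjoint transversals, so your handling of this subtlety is sound.
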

\begin{proof}
The first claim is easy to see: any solution $T'$ of the reduced instance $(G_i,S,T,k)$ is also a solution of $(G,S,T,k)$ (by
Lemma~\ref{reduced-instance-soln}(1), the torso operation does not make the problem easier by creating new solutions).

By the derandomization of \textsc{Covering} algorithm, there is a $j\in [t]$ such that $Z$ has the Property $(*)$, i.e., there
is a solution $T^{*}$ of $I$ such that $Z\cap T^{*}=\emptyset$ and $Z$ covers shadow of $T^{*}$. Then
Lemma~\ref{reduced-instance-soln}(2) implies that $T^{*}$ is a shadowless solution for the instance $G_{j}=I/Z_{j}$.
\end{proof}

%%% Local Variables:
%%% mode: latex
%%% TeX-master: "dsfvs-arxiv"
%%% End:

\section{\textsc{Disjoint \sdfvs\ Compression}: Finding a Shadowless Solution}
\label{sec:branch-imp-sep}
%\daniel{``Finding a shadowless solution'' would be a better title}

Consider an instance $(G,S,T,k)$ of \textsc{Disjoint \sdfvs\
  Compression}.  First, let us assume that we can reach a start point
of some edge of $S$ from each vertex of $T$, since otherwise we can
clearly remove such a vertex from the graph (and from the set $T$)
without changing the problem.  Next, we branch on all
$2^{O(k^2)}\log^2 n$ choices for $Z$ taken from
$\{Z_1,Z_2,\ldots,Z_t\}$ (given by
Theorem~\ref{thm:main-covering-shadow-derandomized}) and build a
reduced instance $I/Z$ for each choice of $Z$. By
Lemma~\ref{lem:correctalg}, if $I$ is a no-instance, then
$I/Z_{j}$ is a no-instance for each $j\in [t]$. If $I$ is a
yes-instance, then by Lemma~\ref{lem:correctalg} there is at least one $i\in
[t]$ such that $I$ has a shadowless solution for the reduced instance
$I/Z_{i}$.

Let us consider the branch where $Z=Z_i$ and let $T' \subseteq V \setminus T$ be a hypothetical shadowless solution for $I/Z$.
We know that each vertex in $G\setminus T'$ can reach some vertex of $T$ and can be
reached from a vertex of $T$.
Since $T'$ is a solution for the instance $(G,S,T,k)$ of \textsc{Disjoint \sdfvs\ Compression}, we
know that $G\setminus T'$ does not have any $S$-closed-walks.
Consider a topological ordering $C_{1}$, $C_{2}$, $\ldots$, $C_{\ell}$ of the
strongly connected components of $G\setminus T'$, i.e., there can be an edge from $C_i$ to $C_j$ only if $i\le j$. We
illustrate this in Figure~\ref{fig:strong-components}.

\begin{figure}[t]
\centering
\includegraphics[width=5in]{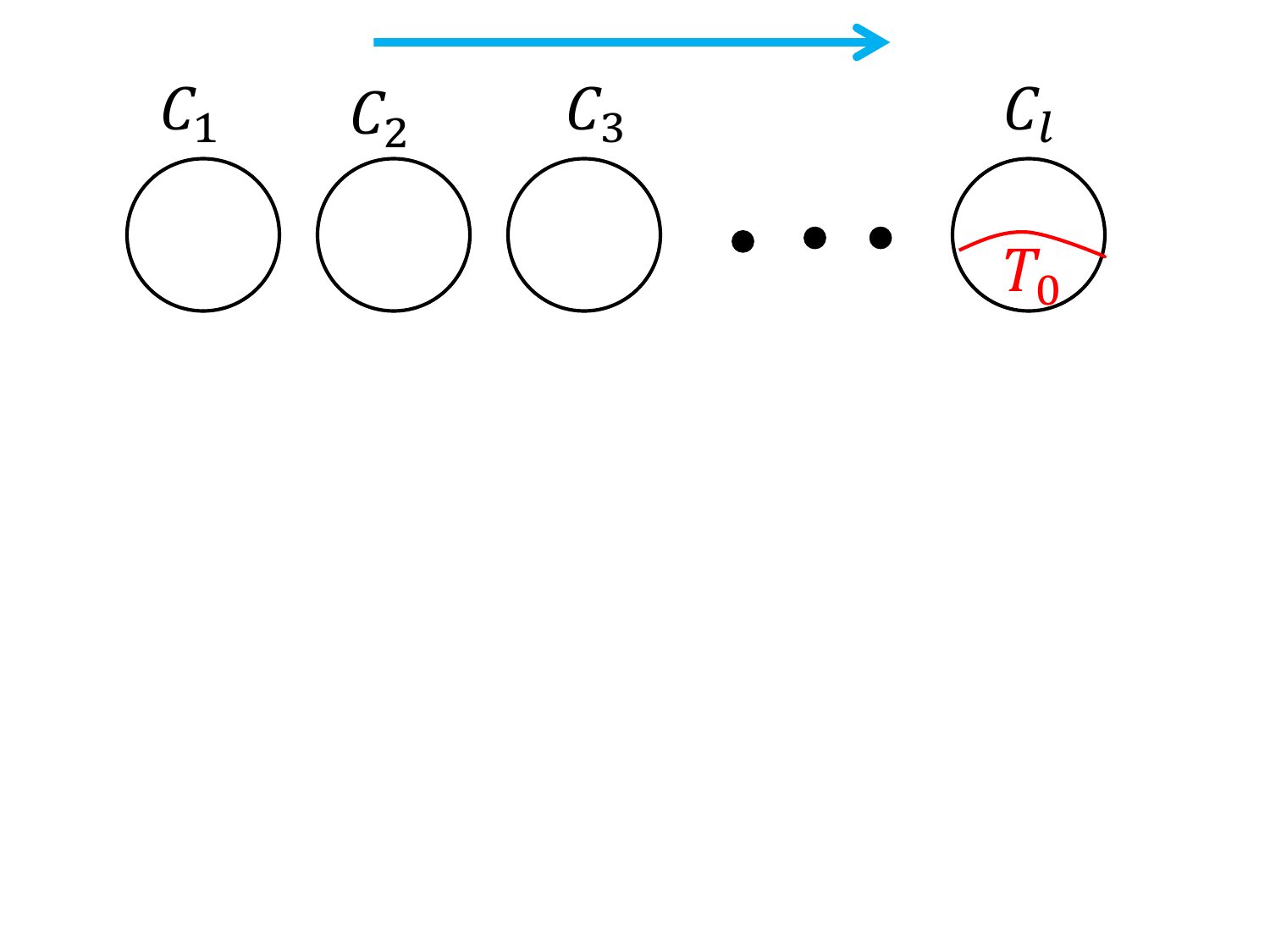}
\vspace{-65mm}
\caption{We arrange the strongly connected components of $G\setminus T'$ in a topological order so that the only possible direction of edges
between the strongly connected components is as shown by the blue arrow. We will show later that the last component $C_{\ell}$ must contain
a non-empty subset $T_0$ of $T$ and further that no edge of $S$ can be present within $C_{\ell}$. This allows us to make
some progress as we shall see in Theorem~\ref{thm:pushing-branch-new}.\label{fig:strong-components}}
\end{figure}

\begin{definition}
{\bf (starting/ending points of $S$)} Let $S^{-}$ and $S^{+}$ be the sets of starting and ending points of edges in $S$ respectively, i.e.,
  $S^{-} = \{u\ |\ (u,v)\in S\}$ and
  $S^{+} = \{v\ |\ (u,v)\in S\}$.
\end{definition}

\begin{lemma}\label{lem:c-ell-prop}
{\bf (properties of $C_{\ell}$)}
For a shadowless solution $T'$ for an instance of \textsc{Disjoint \sdfvs\ Compression},
let $C_{\ell}$ be the last strongly connected component in the topological ordering of $G\setminus T'$
(refer to Figure~\ref{fig:strong-components}). Then
\begin{enumerate}
\item \label{prop:1} $C_{\ell}$ contains a non-empty subset $T_0$ of $T$.
\item No edge of $S$ is present within $C_{\ell}$.
\item \label{prop:3} For each edge $(u,v) \in S$ with $u\in C_{\ell}$, we have $v \in T'$.
%\item $S^{-}$ is disjoint from $C_{\ell}$.
\item \label{prop:4} If $T' \cap S^+=\emptyset$, then $C_{\ell} \cap S^-=\emptyset$.
\end{enumerate}
\label{lem:preliminary-branch-new}
\end{lemma}

\begin{proof}
\begin{enumerate}
\item If $C_{\ell}$ does not contain any vertex from $T$, then the vertices of $C_{\ell}$ cannot reach any vertex of $T$ in $G\setminus T'$. This means that $C_{\ell}$ is in the (reverse) shadow of $T'$, which is a contradiction to the fact that $T'$ is shadowless.
\item If $C_{\ell}$ contains an edge of $S$, then we will have an $S$-closed-walk in the strongly connected component $C_{\ell}$, which
    is a contradiction, as $T'$ is a solution for the instance $(G,S,T,k)$ of \textsc{Disjoint \sdfvs\ Compression}.
\item Consider an edge $(u,v) \in S$ such that $u \in C_{\ell}$ and $v \not\in T'$.
All outgoing edges from $u$ must lie within $C_{\ell}$, since $C_{\ell}$ is the last strongly connected component.
In particular $v\in C_{\ell}$, which contradicts the second claim of the lemma.
\item Assume that $(u,v) \in S$ and $u \in C_{\ell}$ (which means $u \in C_{\ell} \cap S^-$).
      Since $T'$ contains no vertex of $S^+$ we have $v \not\in T'$ and by the third property we have $u \not\in C_{\ell}$, a contradiction.
\end{enumerate}
\end{proof}

Lemma~\ref{lem:c-ell-prop} suggests that we can start by guessing the (nonempty) subset $T_0\subseteq T$ of vertices appearing in the last component $C_\ell$.
Given a set $X$ of removed vertices, we say that edge $(u,v)\in S$ is
{\em traversable} from $T_0$ in $G\setminus X$ if $u,v\not\in X$ and
vertex $u$ (and hence $v$) is reachable from $T_0$ in $G\setminus
X$. If $T'$ is a shadowless solution, then
Lemma~\ref{lem:preliminary-branch-new}(2) implies that no edge of $S$
is traversable from $T_0$ in $G\setminus T'$. There are two ways of
making sure that an edge $(u,v)\in S$ is not traversable: (i) by
making $u$ unreachable from $T_0$, or (ii) by including $v$ in $T'$. The
situation is significantly simpler if every edge of $S$ is handled the
first way, that is, $S^-$ is unreachable from $T_0$ in $G\setminus
T'$. Then $T'$ contains a $T_0-S^-$ separator, and (as we shall see later) we may assume that $T'$ contains an important $T_0-S^-$ separator. Therefore, we can proceed by branching on choosing an important $T_0-S^-$ separator of size at most $k$ and including it into the solution.

The situation is much more complicated if some edges of $S$ are handled the
second way. Given a set $X$ of vertices, we say that an edge $(u,v)\in
S$ is {\em critical} (with respect to $X$) if $v\in X$ and $u$ is
reachable from $T_0$ in $G\setminus X$. Our main observation is that only a bounded number of vertices can be the head of a critical edge in a solution. Moreover, we can enumerate these vertices (more precisely, a bounded-size superset of these vertices) and therefore we can branch on including one of these vertices in the solution.  We describe next how to enumerate these vertices.

% The last property ensures that if no vertex of $S^{+}$ is removed, i.e. $S^{+} \cap T'=\emptyset$,
% then we have an additional property that $S^{-}$ is disjoint from $C_{\ell}$,
% which (as we will show) is enough to branch using important separators.
% However there might be vertices of $S^{+}$ which are part of $T'$
% and our strategy is to find a superset of those vertices
% and branch on which of those vertices to remove (if any).

Let us formalize the property of the vertices we are looking for:
\begin{definition}
\label{def:critical}
{\bf (critical vertex)}
For a fixed non-empty set $T_0 \subseteq V$,
a vertex $v \in (V \setminus T_0) \cap S^{+}$ is called an {\em $\ell$-critical vertex}, with respect
to $T_0$, if there exists an edge $(u,v) \in S$
and a set $W \subseteq V \setminus T_0$ such that:
\begin{itemize}
  \item $|W| \le \ell$,
%  \item $W$ is a $T_0-T \setminus T_0$ separator,
  \item edge $(u,v)$ is critical with respect to $W$ (that is, $u$ is reachable from $T_0$ in $G\setminus W$ and $v\in W$),
  \item no edge of $S$ is traversable from $T_0$ in $G\setminus W$.
\end{itemize}
We say that $v$ is witnessed by $u$, $T_0$ and $W$.
\end{definition}

We need an upper bound on the number of critical vertices,
furthermore our proof needs to be algorithmic, as we want
to find the set of critical vertices, or at least a bounded-size superset of this.
Roughly speaking, to test if $v$ is a critical vertex, we
need to check if there is a set $T'$ that ``cuts away'' every edge of $S$
from $T_0$ in a way that some vertex $u$ with
$(u,v)\in S$ is still reachable from $T_0$. One could argue that it is
sufficient to look at important separators: if there is such a
separator where $u$ is reachable from $T_0$, then certainly there is
an important separator where $u$ is reachable from $T_0$. However,
describing the requirement as ``cutting away every edge of $S$ from $T_0$'' is imprecise:
what we need is that no edge of $S$ is traversable from $T_0$, which
cannot be simply described by the separation of two sets of vertices. We fix this
problem by moving to an auxiliary graph $G'$ by duplicating vertices;
whether or not an edge of $S$ is traversable from $T_0$ translates to
a simple reachability question in $G'$. However, due to technical
issues that arise from this transformation, it is not obvious how to enumerate
precisely the $k$-critical vertices. Instead, we construct a set $F$ of bounded size that
contains each $k$-critical vertex, and potentially some additional vertices.
% This is sufficient for our purposes: if an edge $(u,v)\in S$ is critical in a solution $T'$, then
% $T'$ is a witness of the critical vertex $v$, hence $F$ contains at least
% one vertex of the solution $T'$.
Thus if the solution has a critical
edge, then we can branch on including a vertex of $F$ into the
solution.

% In the following theorem we do not show how to find all the critical vertices,
% however the set of vertices found intersects all the witnesses
% of all the other critical vertices.

\begin{theorem}
\label{thm:critical-enumerate}
{\bf (bounding critical vertices)}
Given a directed graph $G$, a subset $S$ of its edges,
and a fixed non-empty subset $T_0 \subseteq V(G)$,
we can find in time $O^{*}(2^{O(k)})$ a set $F_{T_0}$ of $2^{O(k)}$ vertices
that is a superset of all $k$-critical vertices with respect to $T_0$.
\end{theorem}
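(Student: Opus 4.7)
The plan is to reduce the enumeration of $k$-critical vertices to an enumeration of important separators in an auxiliary graph $G'$ obtained from $G$ by a vertex-duplication trick. Split each $b\in S^+$ into two copies $b^-$ and $b^+$ joined by a transit arc $b^-\to b^+$, reroute every $S$-arc $(a,b)$ to end at $b^-$, and leave every other arc attached to $b^+$; vertices outside $S^+$ are unchanged. Set $T_0':=(T_0\setminus S^+)\cup\{b^+:b\in T_0\cap S^+\}$ and $Y:=\{b^-:b\in S^+\}$; the sets $T_0'$ and $Y$ are clearly disjoint.

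For $W\subseteq V(G)\setminus T_0$, define $\phi(W):=(W\setminus S^+)\cup\{b^-,b^+:b\in W\cap S^+\}$, so that $|\phi(W)|\le 2|W|$ and $\phi(W)\cap T_0'=\emptyset$. A short case analysis shows that for every original vertex $v$, reaching $v$ (respectively $v^+$ if $v\in S^+$) from $T_0'$ in $G'\setminus\phi(W)$ is equivalent to reaching $v$ from $T_0$ in $G\setminus W$. Moreover, a target $b^-\in Y$ is reachable from $T_0'$ in $G'\setminus\phi(W)$ if and only if $b\notin W$ and some $S$-arc $(a,b)$ has $a\in R_W:=R^+_{G\setminus W}(T_0)$. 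Therefore the condition ``no $S$-arc is traversable from $T_0$ in $G\setminus W$'' becomes exactly the statement that $\phi(W)$ is a $T_0'$-$Y$ separator in $G'$.

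Now fix a $k$-critical vertex $v$ witnessed by an $S$-arc $(u,v)$ and a set $W$ with $|W|\le k$. Then $\phi(W)$ is a $T_0'$-$Y$ separator of size at most $2k$. By the standard pushing argument underlying important separators, there is an important $T_0'$-$Y$ separator $W^{**}$ with $|W^{**}|\le|\phi(W)|\le 2k$ and $R^+_{G'\setminus W^{**}}(T_0')\supseteq R^+_{G'\setminus\phi(W)}(T_0')$. Write $\hat u:=u^+$ if $u\in S^+$ and $\hat u:=u$ otherwise, so that $\hat u\to v^-$ is an arc of $G'$. Since $u\in R_W$ we have $\hat u\in R^+_{G'\setminus\phi(W)}(T_0')\subseteq R^+_{G'\setminus W^{**}}(T_0')$. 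If $v^-$ did not belong to $W^{**}$ then $v^-\in Y$ would be reachable from $T_0'$ via $\hat u\to v^-$, contradicting that $W^{**}$ is a separator; hence $v^-\in W^{**}$.

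The algorithm therefore enumerates, using Lemma~\ref{number-of-imp-sep} applied in $G'$ with parameter $2k$, all at most $4^{2k}=2^{O(k)}$ important $T_0'$-$Y$ separators of size at most $2k$ in time $O^*(2^{O(k)})$, and returns $F_{T_0}:=\{v\in S^+:v^-\in W^{**}\text{ for some enumerated }W^{**}\}$. By the previous paragraph every $k$-critical vertex lies in $F_{T_0}$, and $|F_{T_0}|\le 2k\cdot 4^{2k}=2^{O(k)}$. The conceptual heart of the argument -- and the only nontrivial obstacle -- is designing the vertex duplication so that the disjunctive non-traversability condition ``$a\in W$ or $b\in W$ or $a\notin R_W$'' collapses into a clean separation condition; the extra factor of two in the separator size, which causes the $4^{2k}$ blow-up, is the price we pay for this encoding and is the reason $F_{T_0}$ can only be shown to be a superset of, rather than equal to, the set of $k$-critical vertices.
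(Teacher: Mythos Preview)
Your construction is morally the same as the paper's---duplicate vertices so that ``no $S$-edge is traversable from $T_0$'' becomes a separation condition, then push to an important separator---but there is a genuine gap in the final step.

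By Definition~\ref{defn-sep}, an $X$--$Y$ separator is required to be disjoint from $X\cup Y$. Your target set is $Y=\{b^-:b\in S^+\}$, so any important $T_0'$--$Y$ separator $W^{**}$ satisfies $W^{**}\cap Y=\emptyset$; in particular $v^-\notin W^{**}$. Thus the sentence ``hence $v^-\in W^{**}$'' cannot hold, and the argument actually derives a contradiction with the existence of $W^{**}$ rather than locating $v^-$ inside it. This is not a mere formality: the set $\phi(W)$ you start from is not a $T_0'$--$Y$ separator in the sense of Definition~\ref{defn-sep} whenever $W\cap S^+\neq\emptyset$ (and it always is, since the critical vertex $v$ lies in $W\cap S^+$), so the ``standard pushing argument'' does not apply to it. Replacing $\phi(W)$ by $\phi(W)\setminus Y$ does not help either: precisely because $(u,v)$ is critical, $v^-$ becomes reachable from $T_0'$ once $v^-$ is un-deleted, so $\phi(W)\setminus Y$ fails to separate. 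Concretely, if some $(u,v)\in S$ has $u\in T_0$, then $\hat u\in T_0'$ has a direct arc to $v^-\in Y$ and no $T_0'$--$Y$ separator exists at all, yet $v$ can still be $k$-critical.

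The paper's proof avoids this by adding a fresh super-sink $t$ with an arc $v_{\textup{in}}\to t$ for every $v$, and enumerating important $s$--$t$ separators instead. Then the separator is only required to avoid $\{s,t\}$, so $v_{\textup{in}}$ (your $v^-$) is allowed to belong to it, and the final step goes through. Your construction becomes correct with exactly this one-line modification: add a new vertex $t$, arcs $b^-\to t$ for all $b\in S^+$, and enumerate important $T_0'$--$\{t\}$ separators of size at most $2k$.
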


\begin{proof}
We create an auxiliary graph $G'$, where the vertex set of $G'$
consists of two copies for each vertex of $V$ and two extra vertices $s$ and $t$, i.e., $V(G') = \{v_\textup{in}, v_\textup{out} : v \in V \}\cup \{s,t\}$. The edges of $G'$ are defined as follows  (see also Fig.~\ref{fig2}):
\begin{itemize}
\item For each edge $e=(u,v) \in E(G)$, we add the following edges to $E(G')$:
if $e \in S$, then add to $E(G')$ an edge $(u_\textup{out},v_\textup{in})$,
otherwise add to $E(G')$ an edge $(u_\textup{out},v_\textup{out})$.
\item For each vertex $v\in V$, we add to $E(G')$ an edge $(v_\textup{in},v_\textup{out})$.
\item For each vertex $v\in V$, we add an edge $(v_\textup{in},t)$ to $E(G')$.
%\item For each vertex $v\in T\setminus T_0$, we add an edge $(v_\textup{out},t)$ to $E(G')$.
\item For each vertex $v\in T_0$, we add an edge $(s,v_\textup{out})$ to $E(G')$.
\end{itemize}

\begin{figure}[t]
\centering
\includegraphics[width=5in]{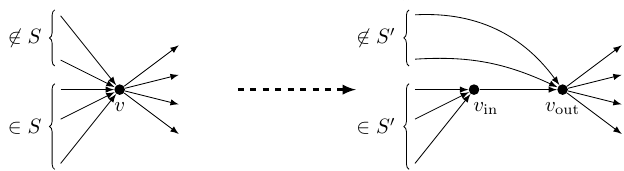}
\caption{On the left there is a vertex $v$ of $G$ and on the right the corresponding vertices $v_\textup{in}$ and $v_\textup{out}$ of $G'$.}
\label{fig2}
\end{figure}

% Define $V'_\textup{in} = \{v_\textup{in} : v \in V\}$,
%  $V'_{F_{T_0}} = \{v_\textup{in}, v_\textup{out} : v
% \in F_{T_0}\}$, $T' = \{v_\textup{in}, v_\textup{out} : v \in T\}$ and as $S'$ take all
% the arcs entering $V'_\textup{in}$ in $G'$.  Moreover, for $X \subseteq V$,
% we denote by $V'_\textup{out}(X)$ the set $\{v_\textup{out} : v \in X\}$.

Let $F_{T_0}'$ be the set of vertices of $G'$ which
belong to some important $s-t$ separator of size at most $2k$.
By Lemma~\ref{number-of-imp-sep} the cardinality of $F_{T_0}'$
is at most $2k\cdot 4^{2k}$.
We define $F_{T_0}$ as $\{v \in V : v_\textup{in} \in F'_{T_0}\}$.
Clearly, the claimed upper bound of $2^{O(k)}$ on $|F_{T_0}|$ follows,
hence it remains to prove that each $k$-critical vertex belongs to $F_{T_0}$.

Let $x$ be an arbitrary $k$-critical vertex
witnessed by $u$, $T_0$ and $W$.
Define $W' = \{v_\textup{in}, v_\textup{out} : v \in W\}$ and note that $|W'| \le 2k$.
The only out-neighbors of $s$ are $\{v_\textup{out}\ |\ v\in T_0\}$ while the only in-neighbors of $t$ are $\{v_\textup{in}\ |\ v\in V\}$. Hence the existence of an $s-t$ path in $G'$ implies that there is in fact
an edge $(a,b)\in S$ that is traversable from $T_0$ in $G\setminus W$ (at some point we have to go from an ``out" vertex to an ``in" vertex, and the only possible way to do this is via an edge from $S$). This is a contradiction to Definition~\ref{def:critical}. Therefore, no
in-neighbor of $t$ is reachable from $s$ in $G'\setminus
W'$, i.e., $W'$ is an $s-t$ separator.  Finally, a path
from $T_0$ to $u$ in $G \setminus W$ translates into a path from
$s$ to $u_\textup{out}$ in $G' \setminus W'$.
Consider an important $s-t$ separator $W''$, i.e., $|W''|\leq |W'|$ and $R^+_{G' \setminus W'}(s) \subset R^+_{G' \setminus W''}(s)$.
As $u_\textup{out}$ is reachable from $s$ in $G' \setminus W'$
we infer that $u_\textup{out}$ is also reachable from $s$ in $G' \setminus W''$.
Consequently $x_\textup{in} \in W''$, as otherwise there would be
an $s-t$ path in $G' \setminus W''$.
Hence $x_\textup{in}$ belongs to $F_{T_0}'$,
which implies that $x$ belongs to $F_{T_0}$ and the theorem follows.
\end{proof}

The following theorem characterizes a solution, so that we can find a vertex contained in it
by inspecting a number of vertices in $V$ bounded by a function of $k$. We apply Theorem~\ref{thm:critical-enumerate} for each subset $T_0\subseteq T$ and let $F = \bigcup_{T_0\subseteq T} F_{T_0}$.
Note that $|F|\leq 2^{|T|}\cdot 2^{O(k)} = 2^{O(|T|+k)}$, and we can generate $F$ in time $2^{|T|}\cdot O^{*}(2^{O(k)}) = O^{*}(2^{O(|T|+k)})$

\begin{theorem}
{\bf (pushing)}
\label{thm:pushing-branch-new}
Let $I=(G,S,T,k)$ be an instance of  \textsc{Disjoint \sdfvs\ Compression} having a shadowless solution
and let $F$ be a set generated by the algorithm of Theorem~\ref{thm:critical-enumerate}. Let $G^+$ be obtained from $G$ by introducing a new vertex $t$ and adding an edge $(u,t)$ for every $u\in S^{-}$.
Then there exists a solution $T' \subseteq V\setminus T$ for $I$ such that either
\begin{itemize}
\item $T'$ contains a vertex of $F \setminus T$, or
\item $T'$ contains an important $T_{0}-(\{t\}\cup (T\setminus T_{0}))$ separator of $G^+$
for some non-empty $T_0 \subseteq T$.
\end{itemize}
\end{theorem}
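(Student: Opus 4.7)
I will start from an arbitrary shadowless solution $T'\subseteq V\setminus T$ of $I$. Let $C_\ell$ be the last strongly connected component in a topological ordering of $G\setminus T'$, and set $T_0=T\cap C_\ell$; this is non-empty by Lemma~\ref{lem:preliminary-branch-new}(\ref{prop:1}). The argument will split on whether $C_\ell$ contains a tail of an edge of $S$.

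\textbf{Case A: $C_\ell\cap S^-\neq\emptyset$.} I pick any edge $(u,v)\in S$ with $u\in C_\ell$; by Lemma~\ref{lem:preliminary-branch-new}(\ref{prop:3}) we then have $v\in T'$. I will verify that $v$ is a $k$-critical vertex with respect to $T_0$, using $T'$ itself as the witness set in Definition~\ref{def:critical}: the inclusion $T'\subseteq V\setminus T_0$ and the bound $|T'|\le k$ are clear, $u$ is reachable from $T_0$ in $G\setminus T'$ because $u\in C_\ell$, and $v\in T'$. The remaining condition---no edge of $S$ is traversable from $T_0$ in $G\setminus T'$---follows because any traversable edge would have both endpoints reachable from $T_0$ in $G\setminus T'$, hence both inside $C_\ell$, which Lemma~\ref{lem:preliminary-branch-new} forbids. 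Consequently $v\in F_{T_0}\subseteq F$ by Theorem~\ref{thm:critical-enumerate}, and $v\in T'\setminus T$, so $T'$ itself satisfies the first bullet.

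\textbf{Case B: $C_\ell\cap S^-=\emptyset$.} I will first observe that $T'$ is already a $T_0-(\{t\}\cup(T\setminus T_0))$ separator in $G^+$: vertices of $T\setminus T_0$ lie in earlier SCCs of $G\setminus T'$ and are therefore unreachable from $T_0\subseteq C_\ell$, while the only incoming edges of $t$ in $G^+$ come out of $S^-$, none of whose vertices lies in $C_\ell$ by the case assumption. I take an inclusion-minimal $W\subseteq T'$ that remains such a separator in $G^+$. If $W$ is already important we are done. Otherwise I apply the standard pushing step to pick an important $T_0-(\{t\}\cup(T\setminus T_0))$ separator $W'$ of $G^+$ with $|W'|\le|W|$ and $R^+_{G^+\setminus W}(T_0)\subsetneq R^+_{G^+\setminus W'}(T_0)$, and set $T''=(T'\setminus W)\cup W'$.

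\textbf{The main obstacle} will be verifying that $T''$ is still a solution to $I$. Disjointness from $T$ is immediate since $W'$, being a separator, is disjoint from its endpoints $T_0\cup(T\setminus T_0)=T$. To rule out $S$-closed-walks in $G\setminus T''$, I will first prove the auxiliary claim that $W'$ itself separates $W\setminus W'$ from $\{t\}\cup(T\setminus T_0)$ in $G^+$: by minimality of $W$, every $w\in W\setminus W'$ has an in-neighbor $x$ reachable from $T_0$ in $G^+\setminus W$, hence by the $R^+$-containment also in $G^+\setminus W'$, so a hypothetical $w$-to-$(\{t\}\cup(T\setminus T_0))$ path in $G^+\setminus W'$ could be prepended with $T_0\to x\to w$ to contradict that $W'$ is a separator. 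Next, if an $S$-closed-walk $C$ survived in $G\setminus T''$, then $C$ would have to meet $T'$ only inside $W\setminus W'$, since it avoids both $T'\setminus W$ and $W'$; traversing $C$ from such a vertex $w$ around to the $S$-edge $(a,b)\in C$ it contains, and then using the new edge $(a,t)\in E(G^+)$, would produce a $w$-to-$t$ walk in $G^+\setminus T''\subseteq G^+\setminus W'$, contradicting the auxiliary claim. Hence $T''$ is a valid solution of $I$ containing the important separator $W'$, which gives the second bullet.
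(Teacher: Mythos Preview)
Your proof is correct and follows essentially the same route as the paper's: the identical case split on whether $C_\ell$ meets $S^-$, the same critical-vertex argument in Case~A, and the same pushing argument in Case~B. Your Case~B is in fact marginally cleaner than the paper's: you take a minimal separator $W\subseteq T'$ directly (the paper first passes through the ``first layer'' $T^*$ before minimizing), and you derive the contradiction by showing $W'$ is a $(W\setminus W')$--$\{t\}\cup(T\setminus T_0)$ separator and then exhibiting a $w$--$t$ walk in $G^+\setminus W'$, whereas the paper argues that $v\in T^{**}\setminus T^{***}$ is reachable from $T_0$ in $G^+\setminus T''$ and concatenates in the other direction. Both are the standard pushing mechanism; your phrasing avoids needing the intermediate set $T^*$.
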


\begin{proof}
Let $T'$ be any shadowless solution for $I$  and 
let $T_0$ be the subset of $T$ belonging to the last
strongly connected component of $G \setminus T'$; by
Property~\ref{prop:1} of Lemma~\ref{lem:preliminary-branch-new}, $T_0$ is nonempty.

We consider two cases: either there is a $T_0-S^{-}$ path in $G\setminus T'$ or not.
First assume that  there is a path from $T_0$ to a vertex $u \in S^{-}$ in $G \setminus T'$.
Clearly, $u \in C_{\ell}$, since all vertices of $T_0$ belong to $C_{\ell}$
and no edge from $C_{\ell}$ can go to previous strongly connected components.
Consider any edge from $S$ that has $u$ as its starting point, say $(u,v) \in S$.
By Property~\ref{prop:3} of Lemma~\ref{lem:preliminary-branch-new}, we know that $v \in T'$.
Observe that $v$ is a $k$-critical vertex witnessed by $u$, $T_0$, and $T'$, since $|T'| \le k$,
%by definition of $T_0$, there is no path from $T_0$ to $T\setminus T_0$ in $G \setminus T'$;
by definition of $u$, there is a path from $T_0$ to $u$ in $G \setminus T'$; and
by Property~\ref{prop:3} of Lemma~\ref{lem:preliminary-branch-new}, no edge of $S$ is traversable from $T_0$.
Consequently, by the property of the set $F$,
we know that $v \in T' \cap F \neq \emptyset$ and the theorem holds.

Now we assume that no vertex of $S^-$ is reachable from $T_0$ in
$G\setminus T'$.  By the definition of $T_0$, the set $T'$ is a
$T_0-(T\setminus T_0)$ separator in $G$, hence we infer that $T'$ is a
$T_0-(\{t\}\cup (T\setminus T_{0}))$ separator in $G^+$.  Let $T^*$ be the
subset of $T'$ reachable from $T_0$ without going through any other
vertices of $T'$. Then $T^*$ is clearly a $T_{0}-(\{t\}\cup
(T\setminus T_{0}))$ separator in $G^+$. Let $T^{**}$ be the minimal
$T_{0}-(\{t\}\cup (T\setminus T_{0}))$ separator contained in $T^*$.
If $T^{**}$ is an important $T_{0}-(\{t\}\cup (T\setminus T_{0}))$
separator, then we are done, as $T'$ itself contains $T^{**}$.

Otherwise, there exists an important $T_{0}-(\{t\}\cup (T\setminus T_{0}))$ separator $T^{***}$ that dominates $T^{**}$,
i.e., $|T^{***}|\leq |T^{**}|$ and $R^{+}_{G^+\setminus T^{**}}(T_0)\subset R^{+}_{G^+\setminus T^{***}}(T_0)$. Now we claim that
$T'' = (T'\setminus T^{**})\cup T^{***}$ is a solution for the instance $(G,S,T,k)$ of \textsc{Disjoint \sdfvs\ Compression}. If
we show this, then we are done, as $|T''|\leq |T'|$ and $T''$ contains the important $T_{0}-(\{t\}\cup
(T\setminus T_{0}))$ separator $T^{***}$.

Suppose $T''$ is a not a solution for the instance $(G,S,T,k)$ of
\textsc{Disjoint \sdfvs\ Compression}. We have $|T''|\leq |T'|\leq k$
(as , $|T^{***}|\leq |T^{**}|$) and $T''\cap T = \emptyset$ (as $T^{***}$ is an important
$T_{0}-(\{t\}\cup (T\setminus T_{0}))$ separator of $G^+$, hence disjoint from
$T$).  Therefore, the only possible problem is that there is an
$S$-closed-walk in $G\setminus T''$ passing through some vertex $v\in
T^{**}\setminus T^{***}$; in particular, this implies that there is a
$v-S^{-}$ walk in $G\setminus T''$. Since $T^{**}$ is a minimal
$T_{0}-(\{t\}\cup (T\setminus T_{0}))$ separator and $R^{+}_{G^+\setminus T^{**}}(T_0)\subset R^{+}_{G^+\setminus T^{***}}(T_0)$, we have
$(T^{**}\setminus T^{***})\subseteq R^{+}_{G^+\setminus T''}(T_0)$,
implying $v\in R^{+}_{G^+\setminus T''}(T_0)$. This gives a
$T_{0}-S^{-}$ walk via $v$ in $G\setminus T''$, a contradiction as
$T''$ contains an (important) $T_{0}-(\{t\}\cup (T\setminus T_{0}))$
separator by construction.
\end{proof}

Theorem~\ref{thm:pushing-branch-new} tells us that there is always a minimum solution
which either contains some critical vertex of $F$ or an important $T_{0}-(\{t\}\cup (T\setminus T_{0}))$ separator of $G^+$ where $T_0$ is a non-empty subset of $T$.
In the former case, we branch into $|F|$ instances, in each of which we put one vertex of $F$ to the solution,
generating $2^{O(|T|+k)}$ instances with reduced budget.
Next we can assume that the solution does not contain any vertex of $F$ and we try all $2^{|T|}-1$ choices for $T_0$.
For each guess of $T_0$ we enumerate at most $4^{k}$ important $T_{0}-(\{t\}\cup (T\setminus T_{0}))$
separators of size at most $k$ in time $O^{*}(4^k)$ as given by Lemma~\ref{number-of-imp-sep}.
This gives the branching algorithm described in Algorithm~\ref{alg:branch}.

\begin{algorithm}[t]
\caption{\branch \label{alg:branch}}
\textbf{Input:} An instance $I=(G,S,T,k)$ of \textsc{Disjoint \sdfvs\ Compression}. \\
\textbf{Output:} A new set of $2^{O(|T|+k)}$ instances of \textsc{Disjoint \sdfvs\ Compression} where the budget $k$ is reduced.

\begin{algorithmic}[1]

\FOR {every vertex $v \in F \setminus T$ found by Theorem~\ref{thm:critical-enumerate}} \label{line:for-critical}
  \STATE Create a new instance $I_v=(G\setminus v,S,T,k-1)$ of \textsc{Disjoint \sdfvs\ Compression}.
\ENDFOR

\FOR {every non-empty subset $T_0$ of $T$:} \label{line:for}

   \STATE Use Lemma~\ref{number-of-imp-sep} to enumerate all the at most $4^{k}$ important $T_{0}-(\{t\}^{-}\cup (T\setminus
T_{0}))$ separators of size at most $k$ in $G^+$.
   \STATE Let the important separators be $\mathcal{B}=\{B_1,B_2,\ldots,B_m\}$.
   \FOR {each $i\in [m]$}
        \STATE Create a new instance $I_{T_0,i}=(G\setminus B_i,S,T,k-|B_i|)$ of \textsc{Disjoint \sdfvs\ Compression}.
   \ENDFOR
\ENDFOR

\end{algorithmic}
\end{algorithm}

%%% Local Variables:
%%% mode: latex
%%% TeX-master: "dsfvs-arxiv"
%%% End:

\section{\textsc{Disjoint \sdfvs\ Compression}: Summary of Algorithm}
\label{sec:fpt-algorithm}

Lemma~\ref{lem:correctalg} and the \branch\ algorithm together combine to give a
{bounded search tree} FPT algorithm for \textsc{Disjoint \sdfvs\ Compression} described in Algorithm~\ref{alg:sdfvs}.

\begin{algorithm}[t]
\caption{\textbf{FPT Algorithm for \sdfvs} \label{alg:sdfvs}}
\begin{center}
\noindent{\begin{minipage}{6.00in}
\underline{Step 1}: For a given instance $I=(G,S,T,k)$, use
Theorem~\ref{thm:main-covering-shadow-derandomized} to obtain a set
    of instances $\{Z_1,Z_2,\ldots,Z_t\}$ where $t=2^{O(k^2)}\log^2 n$ and Lemma~\ref{lem:correctalg} implies
    \begin{itemize}
    \item If $I$ is a no-instance, then all the reduced instances $G_{j}=G/Z_j$ are no-instances for all $j\in [t]$
    \item If $I$ is a yes-instance, then there is at least one $i\in [t]$ such that there is a solution $T^{*}$ for $I$
        which is a shadowless solution for the reduced instance $G_{i}=G/Z_i$.
    \end{itemize}
At this step we branch into $2^{O(k^2)}\log^2 n$ directions.\\

\underline{Step 2} : For each of the instances obtained from the above step, we run the \branch\ algorithm to obtain a set of
$2^{O(k+|T|)}$ instances where in each case either the answer is NO, or the budget $k$ is reduced. We solve these instances recursively and return YES if at least one of them returns YES.
\end{minipage}}
\end{center}
\end{algorithm}

We then repeatedly perform Steps 1 and 2. Note that for every instance, one execution of steps 1 and 2 gives rise to
$2^{O(k^2)}\log^2 n$ instances such that for each instance, either we know that the answer is NO or the budget $k$ has
decreased, because we have assumed that from each vertex of $T$ one can reach the set $S^-$, and hence each important
separator is non-empty. Therefore, considering a level as an execution of Step 1 followed by Step 2, the height of the search
tree is at most $k$. Each time we branch into at most $2^{O(k^2)}\log^2 n$ directions (as $|T|$ is at most $k+1$). Hence the
total number of nodes in the search tree is $\Big(2^{O(k^2)}\log^2 n\Big)^{k}$.

\begin{lemma}%$[\star]$
\label{lem:bound-log} For every $n$ and $k\leq n$, we have $(\log n)^k \leq (2k\log k)^{k} + \frac{n}{2^k}$ (the logs are to base 2)
\end{lemma}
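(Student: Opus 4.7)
The plan is to split into two cases based on the size of $\log n$ relative to the natural threshold $2k\log k$ appearing in the first summand on the right-hand side. If $\log n \leq 2k\log k$, then raising to the $k$-th power immediately gives $(\log n)^k \leq (2k\log k)^k$, and the desired inequality follows because $n/2^k \geq 0$. All the substantive work lies in the complementary case.

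Assume now $\log n > 2k\log k$. The plan is to prove the stronger bound $(\log n)^k \leq n/2^k$, so that the second summand alone suffices. Writing $L = \log n$ and taking $\log_2$ of both sides, the goal becomes $k + k\log L \leq L$, under the hypothesis $L > 2k\log k$. To establish this, I would consider the auxiliary function $\phi(L) = L - k\log L - k$, whose derivative is $1 - k/(L \ln 2)$; this is positive throughout the range $L > 2k\log k$ (for $k \geq 2$, we have $L > 2k > k/\ln 2$), so $\phi$ is increasing there. It then suffices to check $\phi(2k\log k) \geq 0$, which after simplification reduces to the concrete inequality $k\log k \geq k\log\log k + 2k$, holding for all $k$ beyond a small absolute constant.

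The main obstacle is not the overall logic---the case split is forced by the structure of the two summands---but the quantitative bookkeeping for the few small values of $k$ where the threshold $2k\log k$ is tight. I would handle these finitely many cases either by direct verification using the hypothesis $k \leq n$ (the base case $k = 1$ is just $\log n \leq n/2$, which is standard), or by absorbing the slack into a small enlargement of the implicit constant. Since the lemma is used only to absorb a $(\log n)^k$ factor into a $2^{O(k^2)}$ running time bound, the precise constant in $2k\log k$ is irrelevant for the application, which makes this base-case bookkeeping routine.
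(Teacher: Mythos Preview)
Your approach is essentially the same as the paper's: a two-case split, with one case handled by the $(2k\log k)^k$ summand and the other by $n/2^k$. The only difference is the threshold. The paper splits on whether $\frac{\log n}{1+\log\log n}\ge k$: if so, then $\log n\ge k(1+\log\log n)=\log\bigl((2\log n)^k\bigr)$, giving $(\log n)^k\le n/2^k$ directly; otherwise they assert (without details) that $2\log n\le 4k\log k$, which yields $(\log n)^k\le(2k\log k)^k$. You instead split directly at $\log n=2k\log k$, which is arguably cleaner since one case becomes trivial and the analytic work is concentrated in the other. Both arguments are equally loose about the constants for small $k$ --- in fact the inequality as stated fails at $n=64$, $k=2$, where $(\log n)^k=36$ but the right-hand side is $16+16=32$ --- and you are right that this is immaterial for the application, where the lemma only needs to absorb a polylogarithmic factor into a $2^{O(k^3)}$ bound.
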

\begin{proof}
If $\frac{\log n}{1+\log \log n}\geq k$, then $n\geq (2\log n)^{k}$. Otherwise we have $\frac{\log n}{1+\log \log n}< k$ and
then $(4k\log k)\geq (2\log n)$ as follows: $2k\log k \geq \frac{2\log n\log k}{1+\log \log n}$. Now $ \frac{2\log n\log k}{1+\log \log n} \geq \log n \Leftrightarrow 2\log k \geq 1+\log \log n \Leftrightarrow k^2 \Leftrightarrow 2\log n$. But, $\frac{k^{2}}{2\log n} = \frac{\log n}{2(1+\log\log n)^{2}}$ which is greater than 1 for $n\geq 2^{2^7}$.
\end{proof}

The total number of nodes in the search tree is $\Big(2^{O(k^2)}\log^2 n\Big)^{k} = \Big(2^{O(k^2)}\Big)^{k}(\log^{2} n)^k
= (2^{O(k^3)})(\log^{2} n)^{k} \leq (2^{O(k^3)})\Big( (2k\log k)^{k} + \frac{n}{2^k} \Big)^{2} \leq 2^{O(k^3)}n^{2}$.

We then check the leaf nodes and see if there are any $S$-closed-walks left even after the budget $k$ has become zero. If the
graph in at least one of the leaf nodes is $S$-closed-walk free, then the given instance is a yes-instance. Otherwise it is a
no-instance. This gives an $O^{*}(2^{O(k^3)})$ algorithm for \textsc{Disjoint \sdfvs\ Compression}. By Lemma~\ref{lem:ic}, we
have an $O^{*}(2^{O(k^3)})$ algorithm for the \sdfvs\ problem.

%%% Local Variables:
%%% mode: latex
%%% TeX-master: "dsfvs-arxiv"
%%% End:

\section{Conclusion and Open Problems}
\label{sec:concl-open-probl}

In this paper we gave the first fixed-parameter algorithm for \textsc{Directed Subset Feedback Vertex Set} parameterized by
the size of the solution. Our algorithm used various tools from the FPT world such as iterative compression, bounded-depth
search trees, random sampling of important separators, etc. We also gave a general family of problems for which we can do
random sampling of important separators and obtain a set which is disjoint from a minimum solution and covers its shadow. We
believe this general approach will be useful for deciding the fixed-parameter tractability status of other problems in
directed graphs, where we do not know that many techniques unlike undirected graphs.

The next natural question is whether \sdfvs\ has a polynomial kernel or can we rule out such a possibility under some standard
assumptions? The recent developments~\cite{kernel-3,kernel-1,kernel-2} in the field of kernelization may be useful in
answering this question. In the
field of exact exponential algorithms, Razgon~\cite{razgon-dfvs-exact} gave an $O^{*}(1.9977^{n})$ algorithm for \dfvs\ which was used by Chitnis et al.~\cite{exact-ipec} to give an $O^{*}(1.9993^n)$ algorithm for the more general \sdfvs\ problem. It would be interesting to improve either of this algorithms.
%It would be interesting to break the
%trivial $2^{n}n^{O(1)}$ barrier for \sdfvs.

\section*{Acknowledgements}

We thank Marcin Pilipczuk for pointing out a missing case in an earlier version of the algorithm.

\bibliographystyle{splncs03}
\bibliography{docsdb}

\begin{thebibliography}{10}
\providecommand{\url}[1]{\texttt{#1}}
\providecommand{\urlprefix}{URL }

\bibitem{bafna-2-approx-ufvs}
Bafna, V., Berman, P., Fujito, T.: {A 2-Approximation Algorithm for the
  Undirected Feedback Vertex Set Problem}. SIAM J. Discrete Math.  12(3),
  289--297 (1999)

\bibitem{ufvs-2}
Becker, A., Bar-Yehuda, R., Geiger, D.: {Randomized Algorithms for the Loop
  Cutset Problem}. J. Artif. Intell. Res. (JAIR)  12,  219--234 (2000)

\bibitem{ufvs-1}
Bodlaender, H.L.: {On Disjoint Cycles}. In: WG. pp. 230--238 (1991)

\bibitem{fvs-mixed-graphs}
Bonsma, P., Lokshtanov, D.: {Feedback Vertex Set in Mixed Graphs}. In: WADS.
  pp. 122--133 (2011)

\bibitem{ufvs-3}
Cao, Y., Chen, J., Liu, Y.: {On Feedback Vertex Set: New Measure and New
  Structures}. In: SWAT. pp. 93--104 (2010)

\bibitem{ufvs-ic}
Chen, J., Fomin, F.V., Liu, Y., Lu, S., Villanger, Y.: {Improved algorithms for
  feedback vertex set problems}. J. Comput. Syst. Sci.  74(7),  1188--1198
  (2008)

\bibitem{chen-improved-multiway-cut}
Chen, J., Liu, Y., Lu, S.: {An Improved Parameterized Algorithm for the Minimum
  Node Multiway Cut Problem}. Algorithmica  55(1),  1--13 (2009)

\bibitem{chen-dfvs}
Chen, J., Liu, Y., Lu, S., O'Sullivan, B., Razgon, I.: {A fixed-parameter
  algorithm for the directed feedback vertex set problem}. J. ACM  55(5) (2008)

\bibitem{DBLP:conf/icalp/ChitnisCHM12}
Chitnis, R.H., Cygan, M., Hajiaghayi, M.T., Marx, D.: Directed subset feedback
  vertex set is fixed-parameter tractable. In: ICALP (1). pp. 230--241 (2012)

\bibitem{exact-ipec}
Chitnis, R.H., Fomin, F.V., Lokshtanov, D., Misra, P., Ramanujan, M.S.,
  Saurabh, S.: Faster exact algorithms for some terminal set problems. In:
  {IPEC}. pp. 150--162 (2013)

\bibitem{directed-multiway-cut}
Chitnis, R.H., Hajiaghayi, M., Marx, D.: {Fixed-Parameter Tractability of
  Directed Multiway Cut Parameterized by the Size of the Cutset}. {SIAM} J.
  Comput.  42(4),  1674--1696 (2013)

\bibitem{kernel-3}
Cygan, M., Kratsch, S., Pilipczuk, M., Pilipczuk, M., Wahlstr{\"{o}}m, M.:
  {Clique Cover and Graph Separation: New Incompressibility Results}. {TOCT}
  6(2), ~6 (2014)

\bibitem{cut-and-count}
Cygan, M., Nederlof, J., Pilipczuk, M., Pilipczuk, M., van Rooij, J.M.M.,
  Wojtaszczyk, J.O.: {Solving Connectivity Problems Parameterized by Treewidth
  in Single Exponential Time}. In: FOCS. pp. 150--159 (2011)

\bibitem{DBLP:journals/toct/CyganPPW13}
Cygan, M., Pilipczuk, M., Pilipczuk, M., Wojtaszczyk, J.O.: On multiway cut
  parameterized above lower bounds. TOCT  5(1), ~3 (2013)

\bibitem{usfvs-icalp}
Cygan, M., Pilipczuk, M., Pilipczuk, M., Wojtaszczyk, J.O.: Subset feedback
  vertex set is fixed-parameter tractable. SIAM J. Discrete Math.  27(1),
  290--309 (2013)

\bibitem{mikefellows-fvs}
Dehne, F., Fellows, M.R., Langston, M.A., Rosamond, F.A., Stevens, K.: {An
  O(2$^{\mbox{O(k)}}$n$^{\mbox{3}}$) FPT Algorithm for the Undirected Feedback
  Vertex Set Problem}. Theory Comput. Syst.  41(3),  479--492 (2007)

\bibitem{df-fvs}
Downey, R.G., Fellows, M.R.: {Fixed-Parameter Tractability and Completeness I:
  Basic Results}. SIAM J. Comput.  24(4),  873--921 (1995)

\bibitem{downey-fellows}
Downey, R.G., Fellows, M.R.: {Parameterized Complexity}. Springer-Verlag
  (1999), 530 pp.

\bibitem{even-dfvs-approx}
Even, G., Naor, J., Schieber, B., Sudan, M.: {Approximating Minimum Feedback
  Sets and Multi-Cuts in Directed Graphs}. In: IPCO. pp. 14--28 (1995)

\bibitem{even-8-approx-sufvs}
Even, G., Naor, J., Zosin, L.: {An 8-Approximation Algorithm for the Subset
  Feedback Vertex Set Problem}. SIAM J. Comput.  30(4),  1231--1252 (2000)

\bibitem{flum-grohe}
Flum, J., Grohe, M.: {Parameterized Complexity Theory}. Springer-Verlag (2006),
  493 pp.

\bibitem{saket-ic}
Fomin, F.V., Gaspers, S., Kratsch, D., Liedloff, M., Saurabh, S.: {Iterative
  compression and exact algorithms}. Theor. Comput. Sci.  411(7-9),  1045--1053
  (2010)

\bibitem{wernicke-fvs}
Guo, J., Gramm, J., H{\"u}ffner, F., Niedermeier, R., Wernicke, S.:
  {Compression-based fixed-parameter algorithms for feedback vertex set and
  edge bipartization}. J. Comput. Syst. Sci.  72(8),  1386--1396 (2006)

\bibitem{huffner-ic}
H{\"u}ffner, F., Komusiewicz, C., Moser, H., Niedermeier, R.: {Fixed-Parameter
  Algorithms for Cluster Vertex Deletion}. In: LATIN. pp. 711--722 (2008)

\bibitem{usfvs-ken}
Kakimura, N., Kawarabayashi, K., Kobayashi, Y.: Erd{\"o}s-p{\'o}sa property and
  its algorithmic applications: parity constraints, subset feedback set, and
  subset packing. In: SODA. pp. 1726--1736 (2012)

\bibitem{ufvs-7}
Kanj, I.A., Pelsmajer, M.J., Schaefer, M.: {Parameterized Algorithms for
  Feedback Vertex Set}. In: IWPEC. pp. 235--247 (2004)

\bibitem{karp-np-hardness}
Karp, R.M.: {Reducibility Among Combinatorial Problems}. In: Complexity of
  Computer Computations. pp. 85--103 (1972)

\bibitem{DBLP:conf/icalp/KratschPPW12}
Kratsch, S., Pilipczuk, M., Pilipczuk, M., Wahlstr{\"o}m, M.: Fixed-parameter
  tractability of multicut in directed acyclic graphs. In: ICALP (1). pp.
  581--593 (2012)

\bibitem{kernel-2}
Kratsch, S., Wahlstr{\"o}m, M.: {Compression via matroids: a randomized
  polynomial kernel for odd cycle transversal}. In: SODA. pp. 94--103 (2012)

\bibitem{kernel-1}
Kratsch, S., Wahlstr{\"{o}}m, M.: {Representative Sets and Irrelevant Vertices:
  New Tools for Kernelization}. In: {FOCS}. pp. 450--459 (2012)

\bibitem{lokshtanov-marx-clustering}
Lokshtanov, D., Marx, D.: Clustering with local restrictions. Inf. Comput.
  222,  278--292 (2013)

\bibitem{lp-daniel}
Lokshtanov, D., Narayanaswamy, N.S., Raman, V., Ramanujan, M.S., Saurabh, S.:
  Faster parameterized algorithms using linear programming. {ACM} Transactions
  on Algorithms  11(2), ~15 (2014)

\bibitem{DBLP:conf/icalp/LokshtanovR12}
Lokshtanov, D., Ramanujan, M.S.: {Parameterized Tractability of Multiway Cut
  with Parity Constraints}. In: ICALP (1). pp. 750--761 (2012)

\bibitem{marx-2006}
Marx, D.: {Parameterized graph separation problems}. Theor. Comput. Sci.
  351(3),  394--406 (2006)

\bibitem{marx-razgon-stoc-11}
Marx, D., Razgon, I.: Fixed-parameter tractability of multicut parameterized by
  the size of the cutset. {SIAM} J. Comput.  43(2),  355--388 (2014)

\bibitem{mehlhorn}
Mehlhorn, K.: {Data Structures and Algorithms 2: Graph Algorithms and
  NP-completeness}. Springer (1984)

\bibitem{aravind-focs-1995}
Naor, M., Schulman, L.J., Srinivasan, A.: {Splitters and Near-Optimal
  Derandomization}. In: FOCS. pp. 182--191 (1995)

\bibitem{niedermeier}
Niedermeier, R.: {Invitation to Fixed-Parameter Algorithms}. Oxford University
  Press (2006)

\bibitem{saket-fvs-1}
Raman, V., Saurabh, S., Subramanian, C.R.: {Faster Fixed Parameter Tractable
  Algorithms for Undirected Feedback Vertex Set}. In: ISAAC. pp. 241--248
  (2002)

\bibitem{saket-fvs-2}
Raman, V., Saurabh, S., Subramanian, C.R.: {Faster fixed parameter tractable
  algorithms for finding feedback vertex sets}. ACM Transactions on Algorithms
  2(3),  403--415 (2006)

\bibitem{razgon-dfvs-exact}
Razgon, I.: {Computing Minimum Directed Feedback Vertex Set in
  O(1.9977$^{\mbox{n}}$)}. In: ICTCS. pp. 70--81 (2007)

\bibitem{almost-2-sat}
Razgon, I., O'Sullivan, B.: {Almost 2-SAT is fixed-parameter tractable}. J.
  Comput. Syst. Sci.  75(8),  435--450 (2009)

\bibitem{reed-smith-vetta-ic}
Reed, B.A., Smith, K., Vetta, A.: {Finding odd cycle transversals}. Oper. Res.
  Lett.  32(4),  299--301 (2004)

\bibitem{seymour-dfvs-approx}
Seymour, P.D.: {Packing Directed Circuits Fractionally}. Combinatorica  15(2),
  281--288 (1995)

\end{thebibliography}

\end{document}